\documentclass[10pt]{article}
\usepackage[centertags]{amsmath}
\usepackage{amsfonts}
\usepackage{amssymb}
\usepackage{amsthm}
\input{amssym.def}
\usepackage{tikz}
\usepackage{graphicx}
\usepackage{faktor}

\newtheorem{lem}{Lemma}
\newtheorem{prop}[lem]{Proposition}
\newtheorem{theorem}[lem]{Theorem}
\newtheorem{coro}[lem]{Corollary}

\theoremstyle{definition}

\newtheorem{definition}[lem]{Definition}

\newtheorem{lemma}[lem]{Lemma}
\newtheorem{remark}[lem]{Remark}

\begin{document}
\date{}

\title{Function-Based Minimal Linear Codes over Galois Rings $\mathrm{GR}(p^{n}, \ell)$: Minimality Criteria and Infinite Constructions}
\author{Biplab Chatterjee$^1$, Sihem Mesnager$^2$, Ratnesh Kumar Mishra$^3$, \\
Makhan Maji$^4$, and Kalyan Hansda$^5$  \\
\footnotesize{Department of Mathematics, NIT Jamshedpur, Jamshedpur, 831014, Jharkhand, India$^{1,3}$}\\
\footnotesize{Department of Mathematics, University of Paris VIII, 93526 Saint-Denis, LAGA, UMR 7539, CNRS,}\\
\footnotesize{93430 Villetaneuse, and Telecom Paris, 91120 Palaiseau, Paris, France$^2$}\\
\footnotesize{Indian Institute of Technology, Madras, Chennai, Tamil Nadu 600036, India$^4$} \\
\footnotesize{Department of Mathematics, Visva-Bharati, Santiniketan, Bolpur-731235, West Bengal, India$^{5}$}\\
\footnotesize{2022rsma001@nitjsr.ac.in$^{1}$} , \
\footnotesize{smesnager@univ-paris8.fr$^{2}$}, \
\footnotesize{ratnesh.math@nitjsr.ac.in$^{3}$}, \
\footnotesize{makhan2maths@gmail.com$^{4}$} and \
\footnotesize{kalyanh4@gmail.com$^{5}$}}

\maketitle

\begin{abstract}
In this paper, we extend a necessary and sufficient condition for a linear code over a Galois ring to be minimal and establish new bounds on the length of an $m$-dimensional minimal linear code.

Building upon this structural characterization, we further generalize the function-based minimality criteria introduced by Wu \emph{et al.} (Cryptogr. Commun. 14, 875-895, 2022) from the finite field setting to the framework of Galois rings. The transition from fields to rings introduces substantial algebraic challenges due to the presence of zero divisors and the richer module structure of $\mathrm{GR}(p^{n},\ell)$.

By exploiting Frobenius duality and the chain structure of Galois rings, we derive refined necessary and sufficient conditions ensuring that linear codes arising from functions over $\mathrm{GR}(p^{n},\ell)$ are minimal. As an application of these criteria, we construct several infinite families of minimal linear codes over Galois rings, thereby significantly generalizing the constructions of Wu \emph{et al.} to the ring setting.
Our results provide a unified framework that connects minimality theory, module duality over Frobenius rings, and function-based code constructions.
\end{abstract}

\section{Introduction}
Minimal linear codes constitute a fundamental and extensively studied class of linear codes, owing to their intrinsic connection to secret sharing schemes and their broad range of cryptographic applications. In particular, the minimality property of codewords is intimately related to the access structure of a secret sharing scheme: minimal codewords correspond precisely to minimal qualified subsets. Consequently, minimal codes play a central role in secure information distribution, multiparty computation (MPC), and verifiable secret sharing (VSS), where fine structural control over access structures is essential.

In recent years, increasing attention has been devoted to minimal codes defined over finite rings. In contrast to the field setting, finite rings, especially finite chain rings and Frobenius rings, possess a richer algebraic structure arising from the presence of nontrivial ideals and zero divisors. This additional structural complexity provides greater flexibility in code design and may lead to enhanced performance in cryptographic protocols and distributed storage systems~\cite{Cao2019,Meletiou2025}. Furthermore, ring-based constructions arise naturally in post-quantum cryptography, particularly in lattice-based schemes, and they exhibit promising potential in decentralized environments such as blockchain infrastructures and secure electronic voting systems~\cite{Munera-Merayo}.

Over finite fields, several infinite families of minimal linear codes have been constructed using diverse algebraic, combinatorial, and geometric techniques. Significant contributions in this direction were obtained by Ding \emph{et al.}~\cite{DingHengZhou}, Heng \emph{et al.}~\cite{HengDingZhou}, Bartoli and Bonini~\cite{BartoliBonini}, Mesnager \emph{et al.}~\cite{MesnagerQiRuTang}, and Bonini and Borello~\cite{BoniniBorello}. A particularly powerful and flexible construction paradigm was later introduced by Wu \emph{et al.}~\cite{WuLuCao}, in which minimal linear codes are derived from generator matrices defined by suitable functions. This function-based framework unifies numerous earlier constructions and considerably enlarges the spectrum of admissible defining functions, thereby offering a systematic approach to the production of infinite families of minimal codes.

Minimal linear codes over finite commutative rings have also been
investigated in recent years. In particular, one-dimensional
minimal linear codes over finite rings were characterized
in~\cite{MajiMesnager}. Furthermore, in~\cite{ChatterjeeMesnager}, necessary
and sufficient conditions were established for a linear code over
$\mathbb{Z}_n$ to be minimal, providing a complete minimality
criterion in the ring setting.

Motivated by the aforementioned developments, the present work is devoted to the systematic study of minimal linear codes over the Galois ring $\mathrm{GR}(p^{n},\ell)$. Our first objective is to establish a complete characterization of minimality for linear codes defined over $\mathrm{GR}(p^{n},\ell)$, extending to this setting the approach developed in~\cite{ChatterjeeMesnager}.
A crucial ingredient in our analysis is the Frobenius structure of $\mathrm{GR}(p^{n},\ell)$. Since Galois rings are finite Frobenius rings, powerful duality properties become available. In particular, for any linear code $C$ over a Frobenius ring, one has the double-orthogonality property $(C^{\perp})^{\perp} = C,$ which follows from the fundamental identity $|C|\,|C^{\perp}| = |R^{m}|$ established in~\cite{Wood}. This duality principle plays a central structural role in deriving necessary and sufficient conditions for minimality.  As a consequence of this characterization, we obtain explicit lower bounds on the length of $k$-dimensional minimal linear codes over $\mathrm{GR}(p^{n},\ell)$. We further analyze the structure of one-dimensional minimal codes in this framework, thereby extending and refining the results of~\cite{ChatterjeeMesnager} to the Galois ring setting.

The second main objective of this paper is constructive in nature. We develop explicit constructions of minimal linear codes over $\mathrm{GR}(p^{n},\ell)$ within the function-based generator matrix framework introduced by Wu \emph{et al.}~\cite{WuLuCao}. More precisely, we establish necessary and sufficient minimality conditions for linear codes arising from $p^{n\ell}$-ary functions defined on $\mathrm{GR}(p^{n},\ell)^{m}$. A major technical difficulty in the ring setting stems from the presence of zero divisors, which gives rise to phenomena absent in the finite field case. In particular, the classification of root words becomes substantially richer and requires a refined module-theoretic analysis. We show that the minimality of an $(m+1)$-dimensional linear code can be effectively characterized through the structural behavior of root words and their associated orthogonal modules. This perspective not only clarifies the role of zero divisors but also significantly streamlines the analysis of minimality.

Throughout this paper, we denote $q = p^{\ell}$.

In order to construct an $(m+1)$-dimensional minimal linear code over the Galois ring $\mathrm{GR}(p^{n},\ell)$ via the function-based framework, a fundamental structural problem arises: namely, how should the defining $q^{n}$-ary function be chosen to guarantee minimality?

By construction (see Section~\ref{Section 4}), the first row of the generator matrix $G_{\Lambda_f}$ of the code $C_f$ consists precisely of the values of the $q^{n}$-ary function $f$ over $\mathrm{GR}(p^{n},\ell)^{m}$. If $C_f$ is an $(m+1)$-dimensional minimal linear code, then the one-dimensional code generated by this first row must itself be minimal. Consequently, Proposition~\ref{Proposition 27} implies that this row must contain at least one generator of each non-zero proper ideal of the Galois ring $\mathrm{GR}(p^{n},\ell)$.

Recall that the non-zero proper ideals of $\mathrm{GR}(p^{n},\ell)$ are of the form $\langle p^{r} \rangle$, where $1 \le r \le n-1$ (see Section~\ref{Section 3}), and that each such ideal is generated by elements of the form $p^{r}u_j$, with $u_j$ a unit in $\mathrm{GR}(p^{n},\ell)$. Therefore, the set of function values appearing in the first row of $G_{\Lambda_f}$ must include representatives of the form $p^{r}u_j$ for every admissible exponent $r$. This observation leads to the central constructive question: for which vectors
$v \in \mathrm{GR}(p^{n},\ell)^{m}$ does the function satisfy $f(v) = p^{r}u_j \, ?$
In other words, one must determine and classify all vectors $v$ whose functional image lies in each prescribed non-zero proper ideal of the Galois ring.
By resolving this structural problem, we are able to construct infinite families of minimal linear codes over $\mathrm{GR}(p^{n},\ell)$. Moreover, this analysis enables us to identify suitable and sufficiently large classes of defining functions on $\mathrm{GR}(p^{n},\ell)^{m}$, thereby yielding several infinite families of minimal linear codes over Galois rings.

The remainder of this paper is organized as follows. In Section~\ref{preliminaries}, we recall the fundamental definitions and preliminary results that will be used throughout the paper.
Section~\ref{Section 3} is devoted to the algebraic structure of the Galois ring $\mathrm{GR}(p^{n},\ell)$, where we develop the module-theoretic framework and structural properties that form the foundation of our analysis.
In Section~\ref{Section 4}, we establish necessary and sufficient conditions for the minimality of linear codes over $\mathrm{GR}(p^{n},\ell)$ and derive explicit lower bounds on the length of $k$-dimensional minimal linear codes.
Section~\ref{section 5} presents the function-based construction method and introduces several infinite families of minimal linear codes over $\mathrm{GR}(p^{n},\ell)$, thereby illustrating the effectiveness and applicability of the theoretical results obtained in the preceding sections.
Finally, Section~\ref{sec:conclusion} concludes the paper with a summary of our main contributions and a discussion of possible directions for future research.

\section{Preliminaries and Background on Linear Codes over Rings} \label{preliminaries}

\subsection{Linear Codes over Finite Fields and Finite Rings}

In this section, we introduce the fundamental notions and terminology related to linear codes that will be used throughout the paper. In particular, we recall the concepts of support, covering, Hamming distance, weight, and minimal codewords. These notions form the structural foundation for the minimality criteria developed in the subsequent sections. We begin by recalling the classical definition of a linear code over a finite field. As a preliminary remark, we note that a module over a ring naturally generalizes the concept of a vector space over a field; this distinction becomes essential when extending coding-theoretic notions from fields to finite rings.

\begin{definition}\cite{RaymondHill}
A code $C$ is called a $[k,m,d]$ linear code over the finite field $\mathbb{F}_q$ if $C$ is an $m$-dimensional linear subspace of $\mathbb{F}_q^{\,k}$ with minimum (Hamming) distance $d$. The elements of $\mathbb{F}_q$ are referred to as the alphabet of the code.
\end{definition}

\begin{definition}\cite{RaymondHill}
Equivalently, a $[k,m,d]$ linear code over $\mathbb{F}_q$ is a subset $C \subseteq \mathbb{F}_q^{\,k}$ satisfying the following properties:
\begin{enumerate}
    \item $C$ is an $m$-dimensional vector subspace of $\mathbb{F}_q^{\,k}$;
    \item the minimum Hamming distance between distinct codewords of $C$ is equal to $d$.
\end{enumerate}
In this setting, the elements of $\mathbb{F}_q$ constitute the alphabet of the code.
\end{definition}

In~\cite{MajiMesnager}, the authors investigate linear codes in a more general algebraic framework where the underlying alphabet is no longer a vector space over a finite field, but an $R$-module over a finite ring $R$. In this setting, one considers
\[
R^k = \{ (x_1, x_2, \dots, x_k) \mid x_i \in R,\; 1 \le i \le k \},
\]
which naturally carries the structure of a left $R$-module. A code $C \subseteq R^k$ is said to be \emph{linear} if it forms an $R$-submodule of $R^k$.

\subsubsection{Support and Generator Matrices}

For a vector $v = (v_1, v_2, \dots, v_k) \in R^k$, the \emph{support} of $v$ is defined by
\[
\operatorname{Supp}(v) = \{\, i \in \{1, \dots, k\} \mid v_i \neq 0 \,\}.
\]

\begin{definition}\cite{Norton}
Let $C$ be a linear code over $R$. A matrix $G$ is called a \emph{generator matrix} of $C$ if the rows of $G$ generate $C$ as an $R$-module and are $R$-linearly independent; equivalently, every codeword of $C$ can be expressed as an $R$-linear combination of the rows of $G$, and no row of $G$ is an $R$-linear combination of the remaining rows.
\end{definition}

\subsubsection{Covering and Minimal Codewords}

\begin{definition}\cite{AshikhminBarg}
Let $v, v' \in R^k$. We say that $v$ \emph{covers} $v'$ if
\[
\operatorname{Supp}(v') \subseteq \operatorname{Supp}(v).
\]
In this case, we write $v' \preceq v$. If the inclusion is strict, that is,
\[
\operatorname{Supp}(v') \subsetneq \operatorname{Supp}(v),
\]
we write $v' \prec v$.
\end{definition}

\begin{definition}\cite{AshikhminBarg}
A nonzero vector $v \in C$ is called \emph{minimal} if for every nonzero vector $v' \in C$ satisfying $v' \preceq v$, there exists a nonzero scalar $a \in R$ such that
\[
v' = a v.
\]
\end{definition}

In~\cite{MajiMesnager}, minimal codewords over a finite ring are defined as follows: a codeword $u$ in a linear code $C$ is said to be \emph{minimal} if the only codewords of $C$ whose support is contained in $\operatorname{Supp}(u)$ are precisely the scalar multiples $a u$ with $a \in R$. Equivalently, $u$ covers all of its scalar multiples $a u$, and no other codeword of $C$.

A linear code $C$ over a ring $R$ is called \emph{minimal} if every nonzero codeword of $C$ is minimal.

\subsubsection{Root Words and Orthogonality}

\begin{definition}\label{Definition 5}
A vector $v \in R^k$ is called a \emph{root word} if
\[
r v \neq 0 \quad \text{for all nonzero } r \in R.
\]
Equivalently, $v$ is a root word if it is not annihilated by any nonzero scalar of $R$. In contrast to the field case, where nonzero vectors are automatically torsion-free, modules over rings may contain nonzero vectors $v$ for which $r v = 0$ for some nonzero $r \in R$. The notion of a root word isolates those vectors that behave analogously to linearly independent vectors in the classical vector space setting.
\end{definition}

Let $k>0$ be an integer. For two vectors
\[
v=(v_1,v_2,\dots,v_k),
\qquad
v'=(v'_1,v'_2,\dots,v'_k) \in R^k,
\]
we define their standard inner product by
\[
\langle v, v' \rangle
= v {v'}^{T}
= \sum_{i=1}^{k} v_i v'_i.
\]

For any subset $S \subseteq R^k$, the orthogonal complement of $S$ is defined by
\[
S^{\perp}
= \{\, v' \in R^k \mid \langle v', v \rangle = 0 \text{ for all } v \in S \,\}.
\]
By definition, one immediately has
\[
S \subseteq (S^{\perp})^{\perp}.
\]

\section{Algebraic and Module-Theoretic Structure of the Galois Ring $\mathrm{GR}(p^{n},\ell)$}\label{Section 3}

In this section, we recall the structural properties of the Galois ring
$\mathrm{GR}(p^{n},\ell)$ that will be fundamental for our study of minimal linear codes.
We emphasize its chain-ring structure, $p$-adic decomposition,
classification of units and zero divisors, module-theoretic behavior,
and Frobenius duality.
These algebraic features are the backbone of the minimality criteria
developed in Sections~4 and~5.

\subsection{Definition and Basic Structure}

\begin{definition}\cite{WZXian}
A \emph{Galois ring} is a finite commutative ring with identity such that
the set of zero divisors together with $0$ forms a principal ideal $(p\cdot 1)$
for some prime number $p$, where
\[
p\cdot 1=\underbrace{1+1+\cdots+1}_{p\text{ times}}.
\]
\end{definition}

For positive integers $n$ and $\ell$,
the Galois ring of characteristic $p^n$
and cardinality $p^{n\ell}$ is denoted by $\mathrm{GR}(p^{n},\ell)$.
It admits the canonical representation
\[
\mathrm{GR}(p^{n},\ell)
\cong
\mathbb{Z}_{p^{n}}[x]/\langle h(x)\rangle,
\]
where $h(x)$ is a monic basic irreducible polynomial
of degree $\ell$ in $\mathbb{Z}_{p^{n}}[x]$.

The maximal ideal of $\mathrm{GR}(p^{n},\ell)$ is $\langle p\rangle$,
and the corresponding residue field is
\[
\mathrm{GR}(p^{n},\ell)/\langle p\rangle
\cong \mathbb{F}_{q},
\qquad q=p^{\ell}.
\]

Thus $\mathrm{GR}(p^{n},\ell)$ is a finite local ring
whose residue field is $\mathbb{F}_q$.

\subsection{$p$-Adic Representation and Teichm\"uller System}

A fundamental structural property of $\mathrm{GR}(p^{n},\ell)$
is its $p$-adic decomposition.

Every element $c \in \mathrm{GR}(p^{n},\ell)$
admits a unique expansion
\[
c = c_0 + c_1 p + \cdots + c_{n-1} p^{n-1},
\]
where each coefficient $c_i$ belongs to the Teichm\"uller system
\[
\mathcal{T}
=
\{0,1,\xi,\xi^2,\dots,\xi^{q-2}\},
\]
and $\xi$ is a primitive $(q-1)$-th root of unity.

The Teichm\"uller set $\mathcal{T}$
is a complete system of representatives
for the residue field $\mathbb{F}_q$,
and $\mathcal{T}-\{0\}$ forms a multiplicative cyclic group of order $q-1$.

An element
\[
c=c_0 + p c_1 + \cdots + p^{n-1} c_{n-1}
\]
is a unit if and only if its leading coefficient $c_0$ is nonzero.
Equivalently,
\[
c \in U(\mathrm{GR}(p^{n},\ell))
\quad \Longleftrightarrow \quad
c \notin \langle p\rangle.
\]

Hence,
\[
|U(\mathrm{GR}(p^{n},\ell))|
=
(q-1)q^{n-1}.
\]

\subsection{Zero Divisors and Ideal Structure}

Every nonzero zero divisor of $\mathrm{GR}(p^{n},\ell)$
is uniquely expressible in the form
\[
p^{r}u,
\qquad 1 \le r \le n-1,
\]
where $u$ is a unit.

The valuation
\[
\nu_p(p^{r}u)=r
\]
provides a discrete measure of divisibility by $p$,
which will be repeatedly used in the analysis of root words.

The number of elements of valuation exactly $r$ equals
\[
(q-1)q^{\,n-r-1}.
\]

The ideals of $\mathrm{GR}(p^{n},\ell)$ form a chain:
\[
\{0\}
\subsetneq
\langle p^{n-1}\rangle
\subsetneq
\cdots
\subsetneq
\langle p\rangle
\subsetneq
\mathrm{GR}(p^{n},\ell).
\]

Therefore, $\mathrm{GR}(p^{n},\ell)$
is a finite local chain ring with maximal ideal $\langle p\rangle$
and nilpotency index $n$.

This chain structure is responsible for the torsion phenomena
that distinguish module theory over $\mathrm{GR}(p^{n},\ell)$
from vector space theory over finite fields.

\subsection{Module Structure and Root Words}

Let $m\ge 1$.
The module $\mathrm{GR}(p^{n},\ell)^{m}$
is free of rank $m$,
but unlike vector spaces over fields,
it contains nontrivial torsion elements.

In particular, for $v \in \mathrm{GR}(p^{n},\ell)^m$,
it may happen that
\[
r v = 0
\quad \text{for some nonzero } r \in \mathrm{GR}(p^{n},\ell).
\]

A vector
$v=(v_1,\dots,v_m)\in \mathrm{GR}(p^{n},\ell)^m$
is called a \emph{root word}
if at least one of its components is a unit.
Equivalently,
\[
v \notin p\,\mathrm{GR}(p^{n},\ell)^m.
\]

Thus root words are precisely those vectors
whose reduction modulo $p$ is nonzero in $\mathbb{F}_q^m$.

This characterization is fundamental:
minimality of the codes constructed later
reduces entirely to the analysis of such vectors.

\subsection{Frobenius Property}

The ring $\mathrm{GR}(p^{n},\ell)$
is a Frobenius ring.
Consequently, for every linear code
$C\subseteq \mathrm{GR}(p^{n},\ell)^{m}$,
\[
|C|\cdot |C^{\perp}|
=
|\mathrm{GR}(p^{n},\ell)^{m}|.
\]

In particular,
\[
(C^{\perp})^{\perp}=C.
\]

This perfect duality property allows one to transfer
support-containment questions into orthogonality statements,
a principle that underlies the minimality criteria
developed in Section~4.

\subsection{Minimality Criteria and Structural Parameters over $\mathrm{GR}(p^{n},\ell)$}

In this subsection, we extend the necessary and sufficient minimality conditions established in~\cite{ChatterjeeMesnager} to the setting of Galois rings.
By adapting Theorem~6 and Theorem~8 of~\cite{ChatterjeeMesnager} to the local chain ring $\mathrm{GR}(p^{n},\ell)$, we obtain structural characterizations of orthogonal modules that are fundamental to the minimality analysis developed in Sections~4 and~5.

Since the proofs follow the arguments in~\cite{ChatterjeeMesnager}, together with the Frobenius and chain-ring properties recalled above, we omit the details.

Throughout this paper, we denote by
\[
U(\mathrm{GR}(p^{n},\ell))
\quad \text{and} \quad
D(\mathrm{GR}(p^{n},\ell))
\]
the sets of units and zero divisors of the Galois ring $\mathrm{GR}(p^{n},\ell)$, respectively.

\subsection{Necessary and Sufficient Conditions and Parameters of a Minimal Linear Code over a Galois Ring}

We now state two structural propositions that generalize the field case to the module setting over $\mathrm{GR}(p^{n},\ell)$.
These results describe precisely when the orthogonal module of a vector behaves like a free module of codimension one, a property that will later characterise root words and govern minimality.

\begin{prop}\label{Proposition 9}
Let $v \in \mathrm{GR}(p^{n},\ell)^{m}$ be a root word.
Then the orthogonal module
\[
v^{\perp}
=
\{x \in \mathrm{GR}(p^{n},\ell)^{m} \mid v\cdot x = 0\}
\]
is a free $\mathrm{GR}(p^{n},\ell)$-module of rank $m-1$.
\end{prop}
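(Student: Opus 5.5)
Let $R=\mathrm{GR}(p^{n},\ell)$. The plan is to reduce to the case $v=e_1$ by a change of basis, using that a root word has a unit coordinate. Since $v$ is a root word, it has some unit coordinate; after permuting coordinates we may take $v_1=u\in U(R)$. I will build an invertible matrix $A\in\mathrm{GL}_m(R)$ whose first row is $v$: take $A$ to be the identity matrix with its first row replaced by $v$. This matrix is upper triangular except for the first row, and $\det A=u$ is a unit, so $A$ is invertible over $R$. More simply, $\{v,e_2,\dots,e_m\}$ is a basis of $R^m$, because $e_1=u^{-1}(v-\sum_{i\ge 2}v_ie_i)$.

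Next I will describe $v^{\perp}$ directly. The condition $v\cdot x=0$ reads $u x_1+\sum_{i\ge2}v_ix_i=0$, which is equivalent to $x_1=-u^{-1}\sum_{i\ge2}v_ix_i$. Define
\[
w_i=e_i-u^{-1}v_ie_1,\qquad 2\le i\le m .
\]
Each $w_i$ lies in $v^{\perp}$, since $v\cdot w_i=v_i-u^{-1}v_iu=0$. Conversely, every $x\in v^{\perp}$ satisfies $x=\sum_{i\ge2}x_iw_i$, because the coordinates $2,\dots,m$ match and the first coordinate is forced by the equation. So the $w_i$ generate $v^{\perp}$.

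For freeness, suppose $\sum_{i\ge2}a_iw_i=0$. Reading off coordinate $i\ge2$ gives $a_i=0$, so the $w_i$ are $R$-linearly independent. Hence $v^{\perp}$ is free of rank $m-1$ with basis $\{w_2,\dots,w_m\}$.

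As a consistency check, this gives $|v^{\perp}|=|R|^{m-1}$, matching the Frobenius identity $|\langle v\rangle|\,|\langle v\rangle^{\perp}|=|R^m|$ with $|\langle v\rangle|=|R|$ (the latter since $rv=0$ forces $ru=0$, hence $r=0$).

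The only genuine point is that the root-word hypothesis supplies a \emph{unit} coordinate, not merely a nonzero one. The paper's definition of root word in the Galois ring setting provides exactly this, via the equivalence with $v\notin pR^m$. If $v$ only had zero-divisor entries, then $v^{\perp}$ would contain torsion-type pieces such as $p^{n-r}R$ factors, and it would not be free. I expect no real obstacle beyond permuting coordinates and checking that this permutation preserves orthogonality, which it does when applied to both vectors.
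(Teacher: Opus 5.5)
Your proof is correct and is essentially the paper's own argument: the paper omits the proof of Proposition~\ref{Proposition 9}, but Remark~\ref{Remark 10} gives exactly your basis $w_i=e_i-u^{-1}v_ie_1=(-v_iu^{-1},0,\dots,1,\dots,0)$, obtained by solving $ux_1+v_2x_2+\cdots+v_mx_m=0$ for $x_1$. Your spanning and independence checks supply the details the paper skips, and the opening paragraph constructing the invertible matrix $A$ is never used and can be dropped.
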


\medskip

Proposition~\ref{Proposition 9} shows that root words behave analogously to nonzero vectors in a vector space:
their orthogonal module has codimension one and remains free.
This property will be central in constructing bases adapted to root words.

\begin{prop}\label{Proposition 9bis}
Let $v \in \mathrm{GR}(p^{n},\ell)^{m}$ be a nonzero vector.
Then the following statements are equivalent:
\begin{enumerate}
    \item $v$ is a root word;
    \item $v \notin p\,\mathrm{GR}(p^{n},\ell)^{m}$;
    \item $v^{\perp}$ is a free module of rank $m-1$.
\end{enumerate}
\end{prop}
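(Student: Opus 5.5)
Write $R=\mathrm{GR}(p^{n},\ell)$. I will prove the cycle (1)$\Rightarrow$(2)$\Rightarrow$(3)$\Rightarrow$(1), using Definition~\ref{Definition 5} for (1): $rv\neq 0$ for every nonzero $r\in R$.

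\emph{(1)$\Rightarrow$(2).} Suppose $v=pw$ with $w\in R^{m}$. Then $p^{n-1}v=p^{n}w=0$, and $p^{n-1}\neq 0$. This contradicts (1). So $v\notin pR^{m}$.

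\emph{(2)$\Rightarrow$(3).} If $v\notin pR^{m}$, some coordinate $v_i$ is not in $\langle p\rangle$, hence is a unit. This is exactly the paper's working notion of root word, so Proposition~\ref{Proposition 9} gives that $v^{\perp}$ is free of rank $m-1$. If I wanted to avoid citing it, I would argue directly. Assume $v_1$ is a unit. The map $x\mapsto v\cdot x$ is surjective onto $R$, since $v\cdot(v_1^{-1}e_1)=1$. So it splits: $R^{m}=v^{\perp}\oplus R\,v_1^{-1}e_1$. Concretely, $v^{\perp}$ has basis $e_j-v_1^{-1}v_je_1$ for $2\le j\le m$.

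\emph{(3)$\Rightarrow$(1).} This is the step needing care. Let $v\neq 0$ with $v\in pR^{m}$. Write $v=p^{s}w$ with $1\le s\le n-1$ and $w\notin pR^{m}$; here $s$ is the minimal valuation of the coordinates of $v$, and $s<n$ because $v\neq 0$. I will show $v^{\perp}$ is not free of rank $m-1$ by counting. Note $v\cdot x=0$ iff $p^{s}(w\cdot x)=0$ iff $w\cdot x\in\langle p^{n-s}\rangle$. The map $\varphi:x\mapsto w\cdot x$ is onto $R$ by the previous paragraph, so
\[
|v^{\perp}|=|\varphi^{-1}(\langle p^{n-s}\rangle)|=|\ker\varphi|\cdot|\langle p^{n-s}\rangle|=q^{n(m-1)}\,q^{s}.
\]
On the other hand, a free module of rank $m-1$ has exactly $q^{n(m-1)}$ elements. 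Since $s\ge 1$, $|v^{\perp}|>q^{n(m-1)}$, so $v^{\perp}$ cannot be free of rank $m-1$. By contraposition, (3) forces $v\notin pR^{m}$, i.e.\ (2).

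It remains to get (1) from (2). If $v_i$ is a unit and $r\neq 0$, then $rv_i\neq 0$, so $rv\neq 0$. This closes the cycle. The only real obstacle is the cardinality computation in (3)$\Rightarrow$(1). It relies on $|\langle p^{n-s}\rangle|=q^{s}$ from the chain structure of Section~\ref{Section 3}, and on the fibres of the surjective map $\varphi$ all having size $|\ker\varphi|$.
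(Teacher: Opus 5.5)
Your proof is correct. The paper itself does not prove Proposition~\ref{Proposition 9bis}. It states it next to Propositions~\ref{Proposition 9} and~\ref{Proposition 11} and omits the details, deferring to \cite{ChatterjeeMesnager}, so the comparison is with the ingredients it points to.

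For the root case, the relevant ingredient is Remark~\ref{Remark 10}, whose basis is exactly your $e_j-v_1^{-1}v_je_1$. For the non-root case, it is Proposition~\ref{Proposition 11} together with Remark~\ref{Remark 12}. There non-freeness is argued through minimal generating sets: $v^{\perp}$ needs the extra generator $(p^{n-r}u^{-1},0,\dots,0)$ (in the notation of Remark~\ref{Remark 12}), hence $m$ generators, so it cannot be free of rank $m-1$.

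You instead settle (3)$\Rightarrow$(2) by a cardinality count, $|v^{\perp}|=q^{n(m-1)+s}$, which is the same count the paper records in Remark~\ref{Remark 18}. Your version is more elementary. It uses only the surjectivity of $x\mapsto w\cdot x$ and $|\langle p^{n-s}\rangle|=q^{s}$, with no appeal to Nakayama's lemma or to the invariance of minimal generating sets. Since $n(m-1)+s$ is not a multiple of $n$, it even shows that $v^{\perp}$ is not free of any rank.

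The generator-count route gives more structure: $v^{\perp}\cong\langle p^{n-s}\rangle\oplus R^{m-1}$, minimally generated by exactly $m$ elements. That is the form the paper actually uses later in its double-counting bound, where non-root $v$ force $|O(v,\Lambda)|\ge m$.

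Finally, by reading (1) through Definition~\ref{Definition 5}, you make (1)$\Leftrightarrow$(2) a genuine statement and prove it. Section~\ref{Section 3}, by contrast, simply redefines root words as vectors with a unit coordinate, leaving the equivalence with Definition~\ref{Definition 5} implicit.
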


\medskip

Proposition~\ref{Proposition 9bis} highlights the decisive role of the maximal ideal $\langle p\rangle$.
A vector is a root word precisely when its reduction modulo $p$ is nonzero in $\mathbb{F}_q^m$,
and this is exactly the condition ensuring that its orthogonal module has the expected free structure.

\begin{remark}\label{Remark 10}
Let $v=(u,v_2,\dots,v_m)$, where $u$ is a unit.
Then one explicit basis of $v^{\perp}$ is
\[
\left\{
(-v_2u^{-1},1,0,\dots,0),\,
\dots,\,
(-v_mu^{-1},0,\dots,0,1)
\right\}.
\]

This basis is obtained by solving the linear equation
\[
u x_1 + v_2 x_2 + \cdots + v_m x_m = 0,
\]
and it will repeatedly serve as a canonical generating set in later constructions.
\end{remark}

\medskip

We now contrast this situation with the non-root case.

\begin{prop}\label{Proposition 11}
Let $v \in \mathrm{GR}(p^{n},\ell)^{m}$ be a nonzero vector that is not a root word.
Then $v^{\perp}$ is not free, and any minimal generating set of $v^{\perp}$ contains at least $m$ elements.
\end{prop}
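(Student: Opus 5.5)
Write $R=\mathrm{GR}(p^{n},\ell)$ and $v=p^{s}w$, where $s=\min_i \nu_p(v_i)$. Since $v\neq 0$ is not a root word, we have $1\le s\le n-1$, and $w$ is a root word. The plan is to pass to a normal form, compute $v^{\perp}$ explicitly, and read off both claims from that description.

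\textbf{Normal form.} Some coordinate of $w$ is a unit, say (after permuting coordinates) $w_1=u$. The matrix sending $x\mapsto(x_1+u^{-1}w_2x_2+\cdots+u^{-1}w_mx_m,\,x_2,\dots,x_m)$ is invertible over $R$ and has the transpose-compatible form needed. In other words, there is an automorphism $A$ of $R^{m}$ with $v\cdot x = p^{s}u\,(A x)_1$. Hence
\[
v^{\perp}\cong \{y\in R^{m} : p^{s}y_1=0\}=\langle p^{\,n-s}\rangle\oplus R^{m-1},
\]
and freeness and minimal numbers of generators are invariant under $A$. It is simplest to verify the equality of annihilators directly: $p^{s}u y_1=0$ if and only if $y_1\in\operatorname{Ann}(p^{s})=\langle p^{n-s}\rangle$, using the valuation description from the zero divisor subsection.

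\textbf{Not free.} Since $|\langle p^{n-s}\rangle|=q^{s}$, we get $|v^{\perp}|=q^{s}\cdot q^{n(m-1)}$. A free module of rank $k$ has cardinality $q^{nk}$, so freeness would force $n\mid s$, which is impossible because $1\le s\le n-1$.

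\textbf{At least $m$ generators.} Tensor with the residue field. The number of elements in a minimal generating set of a finitely generated module $M$ over the local ring $R$ equals $\dim_{\mathbb{F}_q} M/pM$ (Nakayama). For $M=\langle p^{n-s}\rangle\oplus R^{m-1}$ we have $M/pM\cong \langle p^{n-s}\rangle/\langle p^{n-s+1}\rangle\oplus\mathbb{F}_q^{m-1}$, and the first summand is nonzero because $n-s\le n-1$. So the dimension is exactly $m$, which gives the claim, in fact with equality.

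\textbf{Main obstacle.} The main care point is the phrase ``any minimal generating set.'' Over a local ring, every minimal (irredundant) generating set has the same size $\dim M/pM$. I would justify this with Nakayama: an irredundant generating set reduces to a spanning set of $M/pM$. If that spanning set were dependent, then some generator $g$ could be written as a combination of the others modulo $pM$. By Nakayama the others would then already generate $M$, contradicting irredundancy. Hence the reduction is a basis, and every minimal generating set has size $m$.
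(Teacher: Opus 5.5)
Your proof is correct, and it takes a different route from the paper. The paper gives no argument for this proposition: it defers to earlier work and omits the details. The closest content is Remark~\ref{Remark 12}, which exhibits an explicit generating set of $m$ vectors for $(p^{r}w)^{\perp}$. These are the $m-1$ vectors $(-v_ju^{-1},0,\dots,1,\dots,0)$ spanning $w^{\perp}$, together with $(p^{n-r}u^{-1},0,\dots,0)$.

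Your argument works in three steps:
\begin{itemize}
\item You straighten $v$ with the unipotent automorphism $A$, which reduces everything to the model module $\langle p^{n-s}\rangle\oplus R^{m-1}$.
\item You get non-freeness from the count $|v^{\perp}|=q^{s+n(m-1)}$ together with $n\nmid s$.
\item You get the generator count from $\dim_{\mathbb{F}_q}M/pM=m$ and Nakayama.
\end{itemize}

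The two descriptions are consistent. Pulling the standard generators of your model back under $A^{-1}$ gives exactly $p^{n-s}e_1$ and $e_j-u^{-1}w_je_1$. Up to a unit in the first vector, this is the second generating set of Remark~\ref{Remark 12}.

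Your approach supplies what an explicit generating set cannot. Exhibiting $m$ generators only bounds the size of a minimal generating set from above. Your Nakayama step proves that every irredundant generating set has exactly $m$ elements, which sharpens the stated ``at least $m$''. Your cardinality argument also makes the non-freeness claim rigorous rather than asserted.

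What the paper's explicit form provides is concrete vectors that are reused later, in Lemma~\ref{Lemma 26}, Proposition~\ref{Proposition 2}, and the proof of Theorem~\ref{Theorem 8}.
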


\medskip

Proposition~\ref{Proposition 11} reflects the torsion phenomena inherent in chain rings.
If $v \in p\,\mathrm{GR}(p^{n},\ell)^{m}$,
then $v^{\perp}$ necessarily contains torsion submodules,
and its structure deviates from the codimension-one behavior observed for root words.
This structural distinction between root and non-root vectors
is the algebraic origin of the five-type classification
used in Section~4.

\begin{remark}\label{Remark 12}
Let
\[
v' = p^{r}(u,v_2,\dots,v_m),
\qquad 1 \le r \le n-1,
\]
where $u$ is a unit in $\mathrm{GR}(p^{n},\ell)$.

Since $v'$ belongs to $p\,\mathrm{GR}(p^{n},\ell)^m$,
it is not a root word.
Consequently, by Proposition~\ref{Proposition 11},
its orthogonal module $v'^{\perp}$ is not free.

A minimal generating set of $v'^{\perp}$ is given by
\[
\left\{
(-v_2u^{-1},1,0,\dots,0),\dots,
(-v_mu^{-1},0,\dots,0,1),
(p^{\,n-r}u^{-1}-v_2u^{-1},1,0,\dots,0)
\right\}.
\]

Alternatively, one may replace the last generator and obtain
\[
\left\{
(-v_2u^{-1},1,0,\dots,0),\dots,
(-v_mu^{-1},0,\dots,0,1),
(p^{\,n-r}u^{-1},0,\dots,0)
\right\},
\]
which also forms a minimal generating set of $v'^{\perp}$.

More generally, for any unit $u'' \not\equiv 1 \pmod p$,
one may construct the set
\[
A=
\left\{
(p^{\,n-r}u^{-1}-v_2u^{-1},1,\dots,0),
\dots,
(p^{\,n-r}u^{-1}-v_mu^{-1},0,\dots,1),
(p^{\,n-r}u^{-1}-v_2u^{-1}u'',u'',0,\dots,0)
\right\},
\]
which again constitutes a minimal generating set of $v'^{\perp}$.

If we denote by
\[
A'=
\left\{
(p^{\,n-r}u^{-1}-v_2u^{-1},1,\dots,0),
\dots,
(p^{\,n-r}u^{-1}-v_mu^{-1},0,\dots,1)
\right\},
\]
then $A'$ is linearly independent, and its orthogonal complement
is the cyclic module generated by
\[
(u,-p^{\,n-r}+v_2,\dots,-p^{\,n-r}+v_m).
\]

Hence, $A$ provides a minimal generating set of
\[
\{p^{r}(u,v_2,\dots,v_m)\}^{\perp}
=
v'^{\perp}.
\]

This explicit description of orthogonal modules
for non-root vectors will be instrumental in the construction
of minimal codes in later sections.
\end{remark}

\medskip

In this paper, we systematically exploit bases of the type described above
to construct minimal linear codes from functions.
We now formulate general minimality criteria
in an abstract module-theoretic framework.

\medskip

Let $k$ and $m$ be positive integers with $k \ge m$.
Let $\Lambda = \{\alpha_1, \alpha_2, \ldots, \alpha_k\}$
be a multiset of elements of $\mathrm{GR}(p^n,l)^m$.
Define

\[
C = C(\Lambda)
= \left\{
\mathbf{c}(v)
=
(v\alpha_1^T, v\alpha_2^T, \ldots, v\alpha_k^T)
\; : \;
v \in \mathrm{GR}(p^n,l)^m
\right\}.
\]

Equivalently,
\[
C = \{v G_\Lambda : v \in \mathrm{GR}(p^n,l)^m\},
\]
where
\[
G_\Lambda
=
\begin{pmatrix}
\alpha_1^T & \alpha_2^T & \cdots & \alpha_k^T
\end{pmatrix}.
\]

We assume that $G_\Lambda$ generates an $m$-dimensional linear code,
that is, the rows of $G_\Lambda$ are linearly independent
(in the sense of McCoy rank).

Throughout this section, we assume that $\Lambda$
is a multiset of $\mathrm{GR}(p^n,l)^m$
such that $G_\Lambda$ generates an $m$-dimensional linear code.

For any vector $v \in \mathrm{GR}(p^n,l)^m$,
we define:

\[
O(v) = v^\perp
=
\{ v' \in \mathrm{GR}(p^n,l)^m : v v'^T = 0 \},
\]

\[
O(v,\Lambda)
=
\Lambda \cap O(v)
=
\{ \alpha \in \Lambda : v\alpha^T = 0 \},
\]

\[
M(v,\Lambda)
=
\langle O(v,\Lambda) \rangle.
\]

Clearly,
\[
O(v,\Lambda)
\subseteq
M(v,\Lambda)
\subseteq
O(v).
\]

\medskip

The following proposition gives a fundamental characterization
of minimal codewords in terms of orthogonal modules.

\begin{prop}\label{Proposition 13}
Let $v$ be a nonzero element of $\mathrm{GR}(p^n,l)^m$.
Then $c(v)$ is minimal in $C(\Lambda)$
if and only if
\[
M(v,\Lambda)=O(v).
\]
\end{prop}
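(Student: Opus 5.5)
We will translate support containment into an orthogonality condition on the vectors in $\mathrm{GR}(p^n,l)^m$ and then use the Frobenius double-orthogonality recalled in Section~\ref{Section 3}. Write $R=\mathrm{GR}(p^n,l)$.

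\emph{Step 1: support containment as orthogonality.} For $v'\in R^m$, the $i$-th coordinate of $\mathbf{c}(v')$ vanishes exactly when $\alpha_i\in O(v')$. Hence $\operatorname{Supp}(\mathbf{c}(v'))\subseteq\operatorname{Supp}(\mathbf{c}(v))$ holds if and only if every $\alpha_i$ with $v\alpha_i^T=0$ also satisfies $v'\alpha_i^T=0$, that is, $O(v,\Lambda)\subseteq O(v')$. Since $O(v')$ is a submodule, this is equivalent to $M(v,\Lambda)\subseteq O(v')$, i.e. $v'\in M(v,\Lambda)^{\perp}$. So the set of codewords covered by $\mathbf{c}(v)$ is exactly
\[
\{\mathbf{c}(v') : v'\in M(v,\Lambda)^{\perp}\}.
\]

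\emph{Step 2: scalar multiples.} The standing assumption that $G_\Lambda$ generates an $m$-dimensional code means the rows are independent. Hence the map $v'\mapsto v'G_\Lambda$ is an injective $R$-module homomorphism $R^m\to C$; this is the point I would check most carefully, reading the McCoy-rank hypothesis as independence of the rows. Consequently $\mathbf{c}(v')=a\,\mathbf{c}(v)=\mathbf{c}(av)$ holds if and only if $v'=av$. Therefore $\mathbf{c}(v)$ is minimal if and only if $M(v,\Lambda)^{\perp}=Rv$. The inclusion $Rv\subseteq M(v,\Lambda)^{\perp}$ always holds, because $M(v,\Lambda)\subseteq O(v)$.

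\emph{Step 3: dualize.} Note that $O(v)=v^{\perp}=(Rv)^{\perp}$. If $M(v,\Lambda)^{\perp}=Rv$, then taking orthogonals and using $(N^{\perp})^{\perp}=N$ for submodules $N$ of $R^m$ gives $M(v,\Lambda)=(Rv)^{\perp}=O(v)$. Conversely, if $M(v,\Lambda)=O(v)=(Rv)^{\perp}$, then $M(v,\Lambda)^{\perp}=((Rv)^{\perp})^{\perp}=Rv$. Both directions rest on the double-orthogonality property of the Frobenius ring $R$, recalled in Section~\ref{Section 3} from $|C|\,|C^{\perp}|=|R^m|$. This is the only place where the ring structure matters: over a non-Frobenius ring the equivalence could fail. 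Combining Steps 1–3 proves the proposition.
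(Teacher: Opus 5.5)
Your proof is correct and follows essentially the route the paper intends. The paper omits the argument, deferring to Chatterjee--Mesnager and to the Frobenius double-orthogonality $(N^{\perp})^{\perp}=N$ it recalls right after the statement; that is exactly what your Step~3 uses, for $N=M(v,\Lambda)$ and $N=Rv$, once Step~1 has turned support containment into $v'\in M(v,\Lambda)^{\perp}$.

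The injectivity you flagged in Step~2 is indeed the one hypothesis-sensitive point, and the standing McCoy-rank-$m$ assumption supplies it: some $m\times m$ minor of $G_\Lambda$ is a unit, so $v'G_\Lambda=0$ forces $v'=0$. Without it the equivalence fails, e.g.\ over $\mathrm{GR}(p^{2},\ell)$ with $m=1$, $\Lambda=\{p\}$ and $v=p$ one has $M(v,\Lambda)=O(v)=\langle p\rangle$ but $\mathbf{c}(v)=0$.
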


Since $\mathrm{GR}(p^{n},\ell)$ is a finite local ring with maximal ideal $\langle p\rangle$, Nakayama's Lemma applies.
In particular, for every finitely generated $\mathrm{GR}(p^{n},\ell)$-module,
any two minimal generating sets have the same cardinality.
Consequently, the rank of a free submodule is well-defined,
and dimension arguments over $\mathrm{GR}(p^{n},\ell)$ may be reduced to
dimension arguments over the residue field $\mathbb{F}_q$.

Furthermore, by Corollary~2.7 of~\cite{NortonSalagean}, a matrix
\[
G \in \mathrm{GR}(p^{n},\ell)^{m \times k}
\]
has $m$ $R$-linearly independent rows if and only if it has
$m$ $R$-linearly independent columns.
Equivalently, its McCoy rank equals $m$ if and only if the rank
of its reduction modulo $p$ equals $m$ over $\mathbb{F}_q$.
This equivalence will be repeatedly used when analyzing generator matrices of linear codes.

\medskip

Since $\mathrm{GR}(p^{n},\ell)$ is a Frobenius ring,
Wood's duality theorem~\cite{Wood} ensures that for every linear code
\[
C \subseteq \mathrm{GR}(p^{n},\ell)^{m},
\]
one has
\[
|C|\,|C^{\perp}| = |\mathrm{GR}(p^{n},\ell)^{m}|.
\]
In particular,
\[
(C^{\perp})^{\perp} = C.
\]

Similarly, for every vector $v \in \mathrm{GR}(p^{n},\ell)^{m}$,
\[
(v^{\perp})^{\perp} = \langle v \rangle.
\]

These duality properties allow us to translate support-inclusion
questions into orthogonality statements,
which constitutes the algebraic backbone of the minimality criterion developed below.

\medskip

\begin{prop}
Let $\Lambda \subseteq \mathrm{GR}(p^{n},\ell)^{m}$ be a multiset and
let $C(\Lambda)$ denote the linear code generated by the rows of $G_\Lambda$.
Then $C(\Lambda)$ is minimal if and only if for every nonzero vector
$v \in \mathrm{GR}(p^{n},\ell)^{m}$,
\[
M(v,\Lambda)=O(v).
\]
\end{prop}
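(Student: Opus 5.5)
This statement follows from Proposition~\ref{Proposition 13} together with the definition of a minimal code. The plan is to show that every nonzero codeword of $C(\Lambda)$ has the form $\mathbf{c}(v)$ for some nonzero $v$, and conversely. Then ``every nonzero codeword is minimal'' becomes ``$\mathbf{c}(v)$ is minimal for every $v$ with $\mathbf{c}(v)\neq 0$''. Finally, Proposition~\ref{Proposition 13} turns this into the orthogonality condition $M(v,\Lambda)=O(v)$.

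First I check that the map $v\mapsto \mathbf{c}(v)=vG_\Lambda$ is injective. By standing assumption, the rows of $G_\Lambda$ are $R$-linearly independent. If $vG_\Lambda=0$ with $v\neq 0$, this is a nontrivial $R$-linear relation among the rows, which is a contradiction. So $\mathbf{c}(v)\neq 0$ exactly when $v\neq 0$. Hence the nonzero codewords of $C(\Lambda)$ are exactly the $\mathbf{c}(v)$ with $v$ nonzero.

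For the direction ($\Rightarrow$), assume $C(\Lambda)$ is minimal and take any nonzero $v$. Then $\mathbf{c}(v)$ is a nonzero codeword, so it is minimal, and Proposition~\ref{Proposition 13} gives $M(v,\Lambda)=O(v)$. For the direction ($\Leftarrow$), take a nonzero codeword. By the previous paragraph it equals $\mathbf{c}(v)$ for some nonzero $v$. The hypothesis gives $M(v,\Lambda)=O(v)$, so Proposition~\ref{Proposition 13} shows $\mathbf{c}(v)$ is minimal. Since this holds for every nonzero codeword, $C(\Lambda)$ is minimal.

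The only delicate step is the injectivity argument, which needs the notion of linear independence used for $G_\Lambda$ (McCoy rank $m$). I would spell it out through the cited Norton–Salagean equivalence: McCoy rank $m$ is equivalent to the rows being $R$-linearly independent, so no nonzero $v$ annihilates them. Because $v\neq 0$ and $\mathbf{c}(v)\neq 0$ are equivalent, the quantifier ``for every nonzero $v$'' in the statement is the correct one. If one did not want to invoke injectivity, note that any $v$ with $\mathbf{c}(v)=0$ would force $\Lambda\subseteq O(v)$, which can only be compatible with $M(v,\Lambda)=O(v)$ in degenerate situations. The rank hypothesis rules these situations out, so it is cleaner to rely on it directly.
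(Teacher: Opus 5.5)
Your proposal is correct and follows the paper's route: the paper states this proposition without a separate proof, as the immediate global version of Proposition~\ref{Proposition 13}. Like you, it identifies nonzero codewords with nonzero $v$ via injectivity of $v\mapsto vG_\Lambda$ and then applies Proposition~\ref{Proposition 13} codeword by codeword. For the injectivity step, the cleanest justification under the paper's definition of McCoy rank is direct: some $m\times m$ submatrix $S$ of $G_\Lambda$ has unit determinant and is therefore invertible, so $vG_\Lambda=0$ forces $vS=0$ and hence $v=0$. This avoids any ambiguity between the weak (Norton-style, ``no row is a combination of the others'') and strong notions of row independence over a ring.
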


\medskip

This characterization reduces the minimality problem to a precise comparison
between the full orthogonal module $O(v)$ and the submodule generated
by those elements of $\Lambda$ annihilated by $v$.
In other words, minimality is equivalent to the condition that
the orthogonality constraints imposed by $\Lambda$
completely recover the intrinsic orthogonal structure of $v$.

\begin{coro}\label{Corollary 8}
If $v$ is a root word, then $c(v)$ is minimal in $C(\Lambda)$
if and only if $M(v,\Lambda)$ is a free module of rank $m-1$
over $\mathrm{GR}(p^n,l)$.
\end{coro}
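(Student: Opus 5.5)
By Proposition~\ref{Proposition 13}, $c(v)$ is minimal in $C(\Lambda)$ if and only if $M(v,\Lambda)=O(v)$. So we only need to show that, for a root word $v$, the condition $M(v,\Lambda)=O(v)$ is equivalent to $M(v,\Lambda)$ being free of rank $m-1$. Write $R=\mathrm{GR}(p^n,\ell)$.

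The forward direction is immediate. If $M(v,\Lambda)=O(v)=v^{\perp}$, then Proposition~\ref{Proposition 9} says $v^{\perp}$ is free of rank $m-1$, hence so is $M(v,\Lambda)$.

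For the converse, assume $M:=M(v,\Lambda)$ is free of rank $m-1$. We have the inclusion $M\subseteq O(v)$, and by Proposition~\ref{Proposition 9} the module $O(v)$ is also free of rank $m-1$. Counting elements gives
\[
|M|=|R|^{m-1}=|O(v)|,
\]
so the inclusion of finite sets is an equality. Thus $M(v,\Lambda)=O(v)$, and Proposition~\ref{Proposition 13} shows that $c(v)$ is minimal.

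The one point that needs care is what ``free of rank $m-1$'' means for a submodule. It should mean $M\cong R^{m-1}$ as abstract modules, which forces the cardinality $|R|^{m-1}$. The rank is well defined because $R$ is local: by Nakayama's Lemma, any basis of a free module reduces modulo $p$ to a basis of $M/pM$ over $\mathbb{F}_q$, as noted before the statement. With that reading, the cardinality argument above suffices and no further step is expected to be an obstacle.
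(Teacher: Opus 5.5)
Your argument is correct and follows the paper's own route: you reduce via Proposition~\ref{Proposition 13} to the equality $M(v,\Lambda)=O(v)$, and you use Proposition~\ref{Proposition 9}, which says that $O(v)$ is free of rank $m-1$. Your counting step $|M(v,\Lambda)|=|\mathrm{GR}(p^n,\ell)|^{m-1}=|O(v)|$ makes explicit the converse direction (a free rank-$(m-1)$ submodule of $O(v)$ must equal $O(v)$), which the paper's one-line justification leaves implicit.
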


\medskip

Indeed, by Proposition~\ref{Proposition 9},
$O(v)$ is free of rank $m-1$ whenever $v$ is a root word.
Thus, minimality is equivalent to the condition that
$M(v,\Lambda)$ coincides with this free orthogonal module.

\begin{remark}\label{Remark 18}
Let $v$ be a root word and write
\[
v=(u,v_2,\dots,v_m),
\qquad
u \in U(\mathrm{GR}(p^{n},\ell)).
\]

For $1 \le r \le n-1$, one has
\[
v^{\perp} \subseteq (p^{r}v)^{\perp}.
\]

The quotient module
\[
\frac{(p^{r}v)^{\perp}}{v^{\perp}}
\]
admits the description
\[
\{a \zeta + v^{\perp} : a \in A\},
\]
where
\[
A=\{c_0 + c_1p + \cdots + c_{r-1}p^{r-1} : c_i \in \mathcal{T}\}
\]
and $\mathcal{T}$ denotes the Teichm\"uller system.

Each element of $A$ yields a distinct coset; therefore,
\[
|(p^{r}v)^{\perp}|
=
|A|\cdot |v^{\perp}|
=
q^{r}\, q^{n(m-1)}.
\]

We partition $A$ as
\[
A=A_0 \cup A_1,
\]
where
\[
A_0=\{a\in A : c_0=0\},
\qquad
A_1=\{a\in A : c_0\neq 0\}.
\]

Elements of $A_0$ are zero or zero divisors,
while elements of $A_1$ are units.

This stratification reflects the valuation structure of
$\mathrm{GR}(p^{n},\ell)$ and will play a crucial role
when analyzing the zero-divisor case in Section~4.
\end{remark}

\medskip

The inclusion
\[
v^{\perp} \subseteq (p^{r}v)^{\perp}
\]
illustrates a fundamental phenomenon specific to chain rings:
multiplication by a power of $p$ enlarges the orthogonal module
by introducing additional torsion layers.

In particular, while $v^{\perp}$ is free of rank $m-1$
when $v$ is a root word,
the module $(p^{r}v)^{\perp}$ contains additional cosets
parameterized by the truncated Teichm\"uller expansions of length $r$.
These additional layers correspond precisely to the valuation filtration
\[
\mathrm{GR}(p^{n},\ell)
\supset p\,\mathrm{GR}(p^{n},\ell)
\supset \cdots
\supset p^{n-1}\mathrm{GR}(p^{n},\ell)
\supset \{0\}.
\]

This valuation-sensitive enlargement of orthogonal modules
explains why the zero-divisor case in the minimality analysis
requires a modified synchronization condition
of the form
\[
p^{r}f(\beta)=w\cdot\beta,
\]
instead of the simpler relation $f(\beta)=w\cdot\beta$
valid in the unit case.

\medskip

We emphasize that the dichotomy between root and non-root vectors
is entirely governed by reduction modulo $p$.
Indeed, for $v \in \mathrm{GR}(p^{n},\ell)^m$,
\[
\overline{v} \neq 0 \text{ in } \mathbb{F}_q^m
\quad \Longleftrightarrow \quad
v \text{ is a root word}.
\]

Thus, many structural arguments over $\mathrm{GR}(p^{n},\ell)$
may be viewed as refined lifts of arguments over the finite field $\mathbb{F}_q$,
with additional valuation constraints imposed by the $p$-adic structure.

\medskip

This interplay between:

\begin{itemize}
\item the chain-ring structure of $\mathrm{GR}(p^{n},\ell)$,
\item the freeness or non-freeness of orthogonal modules,
\item valuation filtration,
\item and Frobenius duality,
\end{itemize}

constitutes the algebraic foundation for the minimality criteria
developed in Section~4 and the explicit constructions
given in Section~5.

\begin{theorem}\label{Theorem 8}
If $C\subseteq \mathrm{GR}(p^n,l)^k$ be a $m \; (\geq 2)$-dimensional linear code over the ring $\mathrm{GR}(p^n,l)$, then $C$ is minimal if and only if all the root words are minimal.
\end{theorem}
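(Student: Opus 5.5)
The forward direction is immediate: if $C$ is minimal, every nonzero codeword is minimal, and in particular every codeword $c(v)=vG$ with $v$ a root word is. For the converse, I assume every root codeword is minimal and take an arbitrary nonzero codeword $c(v)$ with $v$ not a root word. By the valuation structure of Section~\ref{Section 3}, I write $v=p^{r}w$ with $1\le r\le n-1$ and $w$ a root word, so that $c(v)=p^{r}c(w)$. Because $G$ has McCoy rank $m$, its reduction $\overline{G}$ has rank $m$ over $\mathbb{F}_q$ (Corollary~2.7 of~\cite{NortonSalagean}). Since $\overline{w}\ne 0$, the product $\overline{w}\,\overline{G}$ is nonzero, so $c(w)$ has at least one unit coordinate. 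This fact is what I would use to control valuations.

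The main reduction is as follows. Let $c'\in C$ be nonzero with $\operatorname{Supp}(c')\subseteq \operatorname{Supp}(p^{r}c(w))$. Since $\operatorname{Supp}(p^{r}c(w))\subseteq\operatorname{Supp}(c(w))$, minimality of the root codeword $c(w)$ gives $c'=a\,c(w)$ for some $a\in \mathrm{GR}(p^{n},\ell)$. Write $a=p^{s}u$ with $u$ a unit. If $s\ge r$, then $c'=(p^{s-r}u)\,p^{r}c(w)$ and we are done. It therefore remains to exclude $s<r$.

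To exclude $s<r$, I would exploit the freedom in choosing $w$. For every $x\in\mathrm{GR}(p^{n},\ell)^{m}$ we have $p^{r}(w+p^{n-r}x)=p^{r}w$, and $y=w+p^{n-r}x$ is again a root word, so $c(y)$ is minimal. Running the argument above with each such $y$ shows that $c'=a_y\,c(y)$ for every $y$ in the coset $w+p^{n-r}\mathrm{GR}(p^{n},\ell)^{m}$. A case $s<r$ would be ruled out if some coordinate $i$ satisfied $\nu_p(c(y)_i)=t$ with $n-r\le t<n-s$: at such a coordinate $p^{s}u\,c(y)_i\ne 0$ but $p^{r}c(y)_i=0$, contradicting the support inclusion. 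Here the hypothesis $m\ge 2$ enters. Because $\overline{G}$ has rank $m\ge 2$, I would choose $x$ so that $c(x)$ is a unit at a coordinate where $c(w)\equiv 0 \pmod{p^{n-r}}$. Such a coordinate is obtained from a vector in $\overline{w}^{\perp}$ whose image has a nonzero entry off the support of $\overline{w}\,\overline{G}$. At that coordinate $c(y)_i$ then has valuation exactly $n-r$, which lies in the forbidden range because $s<r$.

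I expect this last step to be the main obstacle. One must guarantee that $c(w)$ actually has a coordinate that vanishes modulo $p^{n-r}$. If every coordinate of $c(w)$ (and of every $c(y)$) has valuation below $n-r$, the argument breaks down. Indeed, if $c(w)$ had full unit support, then $p^{s}c(w)$ with $s<r$ would cover $p^{r}c(w)$ without being a multiple of it. So I would first try to show, using the minimality of all root codewords and $m\ge 2$, that no root codeword has full support. For this I would take a second root word $w_2$ independent of $w$ modulo $p$ and apply the criterion $M(w,\Lambda)=O(w)$ (Proposition~\ref{Proposition 13} and Corollary~\ref{Corollary 8}). The idea is that $O(w)$, being free of rank $m-1\ge 1$, must be generated by columns $\alpha_i$ with $w\alpha_i^{T}=0$, so such zero coordinates exist. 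I would then lift them to coordinates of the required valuation via the coset trick above. If this fails, the statement must be read with the weaker notion of minimality for torsion codewords, and the proof should be adjusted accordingly.
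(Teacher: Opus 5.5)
Your reduction is right, and the statement does hold with the strict notion of minimality, so your closing hedge is unnecessary. However, the decisive step is left open, and the intermediate goal you set for it is too weak. Knowing that $c(w)$ has a zero coordinate $i$ (``no full support'') does not let you run the coset trick. The trick needs $c(x)_i=x\cdot\alpha_i$ to be a unit for some $x$, so the column $\alpha_i$ must itself lie outside $p\,\mathrm{GR}(p^n,\ell)^m$. If every column orthogonal to $w$ were in $p\,\mathrm{GR}(p^n,\ell)^m$, then at those coordinates $c(y)_i\in p^{\,n-r+1}\mathrm{GR}(p^n,\ell)$ for every $y$ in the coset, and the case $s=r-1$ would not be excluded. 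The rank of $\overline{G}$ says nothing about a particular column. Your recipe via $\overline{w}^{\perp}$ only controls vanishing modulo $p$, not modulo $p^{n-r}$.

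The missing observation follows from the premise you already wrote down.
\begin{itemize}
\item By Proposition~\ref{Proposition 13}, minimality of $c(w)$ gives $M(w,\Lambda)=O(w)$.
\item Write $w=(u,w_2,\dots,w_m)$ with $u$ a unit, after reordering coordinates. Then $O(w)$ contains $(-w_2u^{-1},1,0,\dots,0)$, which is not in $p\,\mathrm{GR}(p^n,\ell)^m$; this vector exists because $m\ge 2$.
\item A module generated by vectors of $p\,\mathrm{GR}(p^n,\ell)^m$ stays inside it. Hence some column $\alpha_i$ with $w\cdot\alpha_i=0$ has a unit entry.
\item Choose $x$ with $x\cdot\alpha_i=1$ and set $y=w+p^{n-r}x$. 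Then $y$ is a root word, $p^ry=p^rw$, and $c(y)_i=p^{n-r}$.
\item Minimality of $c(y)$ gives $c'=a_y\,c(y)$. Since $(p^rc(y))_i=0$, we have $c'_i=0$, which forces $a_yp^{n-r}=0$, i.e.\ $\nu_p(a_y)\ge r$. So $c'\in\mathrm{GR}(p^n,\ell)\,p^rc(y)=\mathrm{GR}(p^n,\ell)\,p^rc(w)$.
\end{itemize}
Work with $a_y$ directly rather than transferring information back to the exponent $s$ of the scalar $a$ attached to $c(w)$. The auxiliary root word $w_2$ and the full-support lemma can be dropped.

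Once patched, your argument is a complete direct proof at the level of supports and valuations. The paper instead argues by contradiction in module language. It assumes $M(p^rv,\Lambda)\ne O(p^rv)$ and observes that $\Lambda$ then meets none of the cosets $u'\zeta_1+v^{\perp}$ with $u'$ a unit. It then takes the perturbed root word $v'=v+p^{n-r}(0,-1,1,\dots,1)$, for which $p^rv'=p^rv$ and ${v'}^{\perp}\subseteq O(p^rv)$. It argues that the columns orthogonal to $v'$, all lying in cosets $d\zeta_1+v^{\perp}$ with $d\in p\,\mathrm{GR}(p^n,\ell)$, cannot generate the free module ${v'}^{\perp}$, contradicting minimality of $c(v')$. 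Both proofs run on the same engine: perturb a root word inside $w+p^{n-r}\mathrm{GR}(p^n,\ell)^m$, with $m\ge 2$ supplying a root vector in $w^{\perp}$. Your version makes the valuation bookkeeping explicit, whereas the paper's write-up leaves its final step implicit.
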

\begin{proof}
If $C$ is a minimal linear code then nothing to prove.

Conversely let $C$ is $m \; (\geq 2)$-dimensional minimal linear code and let $G_\Lambda$ be its generator matrix. Let we assuming that all the root words of $C$ are minimal.

We proceed by contradiction, assuming that $C(\Lambda)$ is not a minimal linear code. Then there exists a vector of the form $p^r v$, where $v$ is a root word in $\mathrm{GR}(p^n,l)^m$, such that $M(p^r v, \Lambda) \neq O(p^r v)$.

To better understand the proof, let us take $v = (u, v_2, v_3, \ldots, v_m)$. The vector $v^\perp$ is generated by the set $\{ \zeta_2= (-v_2u^{-1},1,\cdots,0),\zeta_3=(-v_3u^{-1},0,1,\cdots,0)\},\ldots,\zeta_m=(-v_mu^{-1},0,\cdots,1)\}$. The vector $p^r v^\perp$ is generated by \\ $\{\zeta_2,\zeta_3,\ldots,\zeta_m, \zeta_1=(p^{n-r}u^{-1}-v_2u^{-1},1,0,\cdots,0)\}$. In this case, $p^r v^\perp = \langle v^\perp, (p^{n-r}u^{-1}-v_2u^{-1},1,0,\cdots,0) \rangle$.

From the remark \ref{Remark 18} we can write $p^r v^\perp=\{a(p^{n-r}u^{-1}-v_2u^{-1},1,0,\cdots,0)+v^\perp\}$ where $a\in A$. \\Specifically, we can express $p^r v^\perp$ as:
$p^r v^\perp = \left(\bigcup_{\substack{u' \in A_1}} \{u'\zeta_1 + v^\perp\}\right) \bigcup \left(\bigcup_{\substack{d \in A_0}} \{d\zeta_1 + v^\perp\}\right)$.

Additionally, we note that any element from the coset union $\bigcup_{\substack{u' \in A_1}} \{u'\zeta_1 + v^\perp\}$ combined with $v^\perp$, can generate $p^r v^\perp$. Hence, $\Lambda$ does not contain any elements from the union of co-sets $\bigcup_{\substack{u' \in A_1}} \{u'\zeta_1 + v^\perp\}$.

Next, we will demonstrate that there exists a root word $v'$ such that $M(v', \Lambda) \neq O(v')$. We consider the set of $m-1$ vectors
$\{ (p^{n-r}u^{-1}-v_2u^{-1},1,\cdots,0),(-p^{n-r}u^{-1}-v_3u^{-1},0,1,\cdots,0)\},\ldots,(-p^{n-r}u^{-1}-v_mu^{-1},0,\cdots,1)\}$. These vectors are linearly independent and belong to the module $p^r v^\perp$. This set of vectors forms a basis for the free module

$\{ (u, -p^{n-r}+v_2, p^{n-r}+v_3, \ldots, p^{n-r}+v_m) \}^\perp \subseteq p^r v^\perp$.

Now, let us choose $v' = (u, -p^{n-r}+v_2, p^{n-r}+v_3, \ldots, p^{n-r}+v_m)$. Therefore, we have ${v'}^\perp \subseteq p^r v^\perp$. Any vector in $p^r v^\perp$ can be expressed as $c_2 \zeta_2 + c_3 \zeta_3 + \ldots + c_m \zeta_m + a \zeta_1$, where $c_i \in \mathrm{GR}(p^n,l)$ and $ a \in A$. The vectors belonging to $v'^\perp$ must satisfy the condition:
\[\langle v', c_2 \zeta_2 + c_3 \zeta_3 + \ldots +c_m \zeta_m  + a \zeta_1 \rangle = 0.\]
This condition simplifies to $p^{n-r} (c_3 + c_4 + \ldots + c_m - c_2) = 0$. Therefore, any element of $v'^\perp$ can be written in the form $c_2 \zeta_2 + c_3 \zeta_3 + \ldots + c_m \zeta_m + a \zeta_1$, where $c_2, c_3, \ldots, c_m$ satisfy the equation $p^{n-r} (c_3 + c_4 + \ldots + c_m - c_2) = 0$ and $ a \in \mathrm{GR}(p^n,l)$.

Now, we need to show that any set of $m-1$ vectors from $v'^\perp$ that belong to the union of co-sets $\bigcup_{\substack{d \in A_0}} \{d\zeta_1 + v^\perp\}$, are not linearly independent. Let:
$$ x_1 = c_2^1 \zeta_2 + c_3^1 \zeta_3 + \ldots + c_m^1 \zeta_m + d_1 \zeta_1, $$
$$ x_2 = c_2^2 \zeta_2 + c_3^2 \zeta_3 + \ldots + c_m^2 \zeta_m + d_2 \zeta_1, $$
$$ \vdots $$
$$ x_{m-1} = c_2^{m-1} \zeta_2 + c_3^{m-1} \zeta_3 + \ldots + c_m^{m-1} \zeta_m + d_{m-1} \zeta_1 $$
be $(m-1)$ arbitrary vectors from $v'^\perp$ belonging to the union of co-sets $\bigcup_{\substack{d \in A_0}} \{d\zeta_1 + v^\perp\}$. Consequently, each $(c_2^i, c_3^i, \ldots, c_m^i)$ satisfies:
\[ p^{n-r} (c_3^i + c_4^i + \ldots + c_m^i - c_2^i) = 0. \]

The arguments made throughout this proof lead us to an inconsistency, verifying that $C(\Lambda)$ must indeed be a minimal linear code. This completes the proof.
\end{proof}
\begin{remark}
The condition $m \ge 2$ in Theorem~\ref{Theorem 8} is essential.
Indeed, the statement fails in the one-dimensional case.

Consider the linear code
\[
C=\langle (1,0,0,0) \rangle
\]
over the ring $\mathbb{Z}_{9}$.
Since $(1,0,0,0)$ is a root word, and every nonzero scalar multiple of it has the same support, all root words of $C$ are minimal.

However, $C$ itself is not minimal.
Indeed,
\[
(1,0,0,0) \preceq (3,0,0,0),
\]
but there exists no unit $u \in \mathbb{Z}_9$ such that
\[
(1,0,0,0)=u(3,0,0,0).
\]
The scalar $3$ is a zero divisor in $\mathbb{Z}_9$, and multiplication by $3$ collapses units to zero divisors.
Thus, support containment does not imply scalar proportionality by a unit.

This example shows that the equivalence in Theorem~\ref{Theorem 8} does not extend to the case $m=1$.
The higher-dimensional structure ($m \ge 2$) is therefore indispensable.
\end{remark}

\begin{remark}
Theorem~\ref{Theorem 8} shows that, for $m \ge 2$, the minimality of a linear code over the Galois ring $\mathrm{GR}(p^{n},\ell)$ is completely determined by the behavior of its root words.

In particular, although $\mathrm{GR}(p^{n},\ell)$ possesses a nontrivial torsion structure due to its chain-ring nature, this torsion does not create additional obstructions to minimality beyond those already encoded in root words.

Hence, the study of minimal linear codes over $\mathrm{GR}(p^{n},\ell)$ reduces to the structural analysis of root words and their orthogonal modules.
This reduction principle is the conceptual foundation for the constructions developed in the next section.
\end{remark}

\medskip

The structural characterization of minimality obtained above allows us to refine the generating multiset without altering the minimality property.
In particular, vectors arising as scalar multiples of zero divisors introduce only torsion layers inside orthogonal modules and do not modify the intrinsic root-word geometry of the code.
This observation yields the following stability result.
\begin{prop}\label{Proposition 23}
Let $\Lambda$ be a multiset of vectors in $\mathrm{GR}(p^{n},\ell)^{m}$ with $m \ge 2$.
Suppose that $C(\Lambda)$ is a $[k,m]$ minimal linear code.
Let $\Lambda^{*}$ denote the multiset obtained from $\Lambda$ by removing all vectors that are scalar multiples of zero divisors in $\mathrm{GR}(p^{n},\ell)$.

Then $C(\Lambda^{*})$ is also a minimal linear code.
\end{prop}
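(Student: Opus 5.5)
The plan is to combine Theorem~\ref{Theorem 8} with Corollary~\ref{Corollary 8}. These reduce minimality of $C(\Lambda^{*})$ to the equality $M(v,\Lambda^{*})=O(v)$ for root words $v$ only. That equality will then be obtained from $M(v,\Lambda)=O(v)$ by reducing modulo $p$ and applying Nakayama's Lemma. Throughout, I read ``scalar multiples of zero divisors'' as the vectors of $\Lambda$ lying in $p\,\mathrm{GR}(p^{n},\ell)^{m}$, i.e.\ the non-root columns, including any zero columns. So $\Lambda^{*}=\Lambda\setminus p\,\mathrm{GR}(p^{n},\ell)^{m}$ as multisets.

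\emph{Step 1: $C(\Lambda^{*})$ is still $m$-dimensional.} Every removed column reduces to $0$ modulo $p$, so $\overline{G_{\Lambda^{*}}}$ and $\overline{G_{\Lambda}}$ have the same column space over $\mathbb{F}_q$. By the McCoy rank criterion (Corollary~2.7 of~\cite{NortonSalagean}) recalled above, $G_{\Lambda^{*}}$ therefore still has McCoy rank $m$. Hence $C(\Lambda^{*})$ is an $m$-dimensional code with $m\ge 2$, and Theorem~\ref{Theorem 8} applies to it. It remains to show that $c(v)$ is minimal in $C(\Lambda^{*})$ for every root word $v$, i.e.\ that $M(v,\Lambda^{*})=O(v)$.

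\emph{Step 2: the Nakayama argument.} Fix a root word $v$. By minimality of $C(\Lambda)$ and Proposition~\ref{Proposition 13}, $M(v,\Lambda)=O(v)$, and by Proposition~\ref{Proposition 9} this module is free of rank $m-1$. Write
\[
O(v,\Lambda)=O(v,\Lambda^{*})\cup\bigl(O(v,\Lambda)\cap p\,\mathrm{GR}(p^{n},\ell)^{m}\bigr).
\]
Then
\[
O(v)=M(v,\Lambda^{*})+\bigl(O(v)\cap p\,\mathrm{GR}(p^{n},\ell)^{m}\bigr).
\]
The key claim is that $O(v)\cap p\,\mathrm{GR}(p^{n},\ell)^{m}=p\,O(v)$. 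Granting this, we get $O(v)=M(v,\Lambda^{*})+p\,O(v)$. Since $p$ generates the maximal ideal, Nakayama's Lemma gives $M(v,\Lambda^{*})=O(v)$. Corollary~\ref{Corollary 8} then makes $c(v)$ minimal, and Theorem~\ref{Theorem 8} concludes.

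\emph{Step 3: the main obstacle, $O(v)\cap p\,\mathrm{GR}(p^{n},\ell)^{m}=p\,O(v)$.} I would show that $O(v)$ is a direct summand of $\mathrm{GR}(p^{n},\ell)^{m}$, since a direct summand $N\oplus N'$ satisfies $N\cap pR^{m}=pN$. Without loss of generality take $v=(u,v_2,\dots,v_m)$ with $u$ a unit. The basis $\zeta_2,\dots,\zeta_m$ of Remark~\ref{Remark 10}, together with $e_1=(1,0,\dots,0)$, forms a basis of $\mathrm{GR}(p^{n},\ell)^{m}$, because the matrix they form is unitriangular up to the first column. Hence
\[
\mathrm{GR}(p^{n},\ell)^{m}=O(v)\oplus\langle e_1\rangle .
\]
Alternatively, one can check directly: if $x=\sum_i c_i\zeta_i\in p\,\mathrm{GR}(p^{n},\ell)^{m}$, then reading coordinates $2,\dots,m$ forces each $c_i\in p\,\mathrm{GR}(p^{n},\ell)$. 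I expect this step, and making precise the reading of ``scalar multiples of zero divisors'' so that the removed vectors are exactly those in $p\,\mathrm{GR}(p^{n},\ell)^{m}$, to be the only delicate points; everything else is a formal application of the cited results.
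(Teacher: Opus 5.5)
Your proof is correct and follows the paper's route: delete the columns lying in $p\,\mathrm{GR}(p^{n},\ell)^{m}$, check that every root word $v$ still satisfies $M(v,\Lambda^{*})=O(v)$, and conclude with Theorem~\ref{Theorem 8}. Where the paper only asserts that these torsion columns do not change ``the free part of $O(v)$'', you prove it via the identity $O(v)\cap p\,\mathrm{GR}(p^{n},\ell)^{m}=p\,O(v)$ for root words (from the direct-sum decomposition) plus Nakayama's Lemma, and you also check something the paper omits: that the McCoy rank, and hence the dimension $m\ge 2$ needed to invoke Theorem~\ref{Theorem 8}, is preserved.
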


\begin{proof}
The argument follows the same structural reasoning as in Proposition~27 of~\cite{ChatterjeeMesnager}, adapted to the setting of Galois rings.

More precisely, let $\alpha \in \Lambda$ be a scalar multiple of a zero divisor.
Then $\alpha$ lies in $p^{r}\mathrm{GR}(p^{n},\ell)^{m}$ for some $r \ge 1$.
Such vectors do not generate new root words, since their reductions modulo $p$ vanish.
Consequently, they do not affect the set of root words of $C(\Lambda)$.

Their presence only enlarges orthogonal modules by adding torsion components,
as described in Remark~\ref{Remark 18},
without changing the free part of $O(v)$ that determines minimality via Theorem~\ref{Theorem 8}.

Since minimality for $m \ge 2$ is completely governed by root words,
removing such vectors preserves all minimal root words of $C(\Lambda)$.
Therefore, $C(\Lambda^{*})$ remains minimal.
\end{proof}

\subsection{Parameters of Minimal Linear Codes over $\mathrm{GR}(p^{n},\ell)$}

Having established the structural characterization of minimality, we now turn to quantitative aspects.
In particular, we extend the parameter bounds for minimal linear codes in the Galois ring setting obtained in~\cite{ChatterjeeMesnager}.
The arguments follow the same combinatorial and module-theoretic principles and are adapted here to $\mathrm{GR}(p^{n},\ell)$.

We define
\[
N(m;\mathrm{GR}(p^{n},\ell))
=
\{\, k \in \mathbb{N} \mid
\text{there exists a } [k,m]_{\mathrm{GR}(p^{n},\ell)}
\text{ minimal linear code} \,\},
\]
and
\[
k(m;\mathrm{GR}(p^{n},\ell))
=
\min N(m;\mathrm{GR}(p^{n},\ell)).
\]

These parameters describe, respectively, the set of admissible lengths and the minimal possible length of an $m$-dimensional minimal linear code over $\mathrm{GR}(p^{n},\ell)$.

\begin{prop}
Let \( \Lambda \subseteq \mathrm{GR}(p^{n},\ell)^{k} \setminus \{0\} \) be a multiset.
If \( C(\Lambda) \) is an \( [k,m]_{\mathrm{GR}(p^{n},\ell)} \) minimal linear code, then the following lower bounds hold:

\[
k > (m-1)q^{n} + q^{\,n-m},
\qquad \text{for } m \ge 3,
\]

\[
k > q^{n} + q^{\,n-2} + 1,
\qquad \text{for } m = 2 \text{ and } n \ge 3,
\]

\[
k \ge q^{2} + 2,
\qquad \text{for } m = 2 \text{ and } n = 2,
\]

\[
k \ge q + 1,
\qquad \text{for } m = 2 \text{ and } n = 1.
\]
\end{prop}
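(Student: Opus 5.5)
I will derive the bounds from the minimality criterion in Proposition~\ref{Proposition 13} and Corollary~\ref{Corollary 8}, in the style of the counting arguments of~\cite{ChatterjeeMesnager} and of the classical field bound $k\ge (m-1)(q-1)+1$. Throughout, $C=C(\Lambda)$ is an $[k,m]$ minimal code with $\Lambda=\{\alpha_1,\dots,\alpha_k\}\subseteq \mathrm{GR}(p^n,\ell)^m$.

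\textbf{Step 1: a covering condition.} Fix a root word $v$. Minimality of $c(v)$ means $O(v,\Lambda)$ generates the free module $O(v)$ of rank $m-1$. By Nakayama's Lemma, this holds if and only if the reductions modulo $p$ of the elements of $O(v,\Lambda)$ span the hyperplane $\overline{v}^{\perp}\subseteq\mathbb{F}_q^m$. Hence $\Lambda$ must contain at least $m-1$ root words orthogonal to $v$.

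Minimality of $p^{r}v$ for $1\le r\le n-1$ imposes more. By Remark~\ref{Remark 18}, $O(p^r v)$ is strictly larger than $O(v)$: the quotient $O(p^rv)/O(v)$ is parametrised by $A$, with $|A|=q^r$. So $O(p^rv,\Lambda)$ must contain an element of a coset $u'\zeta_1+v^{\perp}$ with $u'\in A_1$. Such an element satisfies $v\cdot\alpha\in p^{n-r}U$. Taking $r=n-1$, $\Lambda$ must contain, for every root word $v$, an element $\alpha$ with $\nu_p(v\cdot\alpha)=1$, as well as elements of every intermediate valuation.

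\textbf{Step 2: double counting.} I count incidences $(v,\alpha)$ with $v$ running over root words modulo units, i.e.\ over the $(q^{nm}-q^{(n-1)m})/((q-1)q^{n-1})$ points of the projective space over the ring, and with $\alpha\in\Lambda$ satisfying $v\cdot\alpha\in\langle p^{s}\rangle$ for a prescribed $s$. For a fixed root word $\alpha$, the number of such $v$ is the size of a ``thickened hyperplane'', roughly a $q^{-s}$ fraction of all points. Summing the requirements from Step 1 over all $v$ and all layers $s=0,\dots,n-1$, then dividing by the per-$\alpha$ incidence count, gives a lower bound on $k$.

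The leading term $(m-1)q^n$ should come from the $m-1$ independent elements needed in $O(v)$, each lifted through all $q^n$ residues. The correction term $q^{n-m}$ should come from the extra unit-coset element forced in Step 1 for the deepest torsion layer. For $m=2$ the projective line has a more rigid structure: every point of $\mathbb{P}^1$ over the ring must be hit orthogonally. This gives $q^n+q^{n-1}$ points, and I would redo the count directly there, obtaining $q^n+q^{n-2}+1$ for $n\ge3$, $q^2+2$ for $n=2$, and the classical $q+1$ for $n=1$. The strictness of the inequality comes from showing that equality would force every $\alpha$ to be used exactly once in every layer. I would then exhibit a root word $p^rv$ whose required unit-coset element is missing, using the construction of $v'$ in the proof of Theorem~\ref{Theorem 8}.

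\textbf{Main obstacle.} The hardest step is the exact incidence count in Step 2, i.e.\ computing, for a root word $\alpha$, how many projective points $v$ satisfy $\nu_p(v\cdot\alpha)=s$ exactly. The counting must also be arranged so that the torsion requirements of Step 1 add genuinely new incidences rather than overlapping with those already used for $O(v)$. I would first handle $m=2$ by explicit enumeration, then induct on $m$ by restricting to a coordinate hyperplane, in the spirit of the classical proof over fields.
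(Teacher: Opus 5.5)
Your Step~2 carries the entire content of the statement, and it is not carried out. The origins of $(m-1)q^n$ and $q^{n-m}$ are only described with ``should come from''. The $m=2$ values are asserted, not derived, and the incidence count is explicitly deferred. Moreover, the plan as stated would not reach the target. The valuation-layer requirements do not add, because one $\alpha\in\Lambda$ serves as the valuation-$s$ witness for many different $v$ at once. Suppose you pool the $(m-1)+(n-1)$ requirements per root word $v$. For a root word $\alpha$, divide by the number $q^{(n-1)m}(q^{m-1}-1)$ of root words $v$ with $v\cdot\alpha\in p\,\mathrm{GR}(p^n,\ell)$. You then get only $k\ge (m+n-2)\frac{q^m-1}{q^{m-1}-1}$, roughly $(m+n-2)q$, far below $(m-1)q^n$. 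Strictness also needs neither the ``equality forces every $\alpha$ to be used exactly once'' analysis nor the vector $v'$ from the proof of Theorem~\ref{Theorem 8}; it falls out of the arithmetic.

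The missing idea is that the exact zero layer suffices. The paper double counts $\mathcal{X}=\{(v,\alpha): v\neq 0,\ \alpha\in\Lambda,\ v\cdot\alpha=0\}$. After discarding the non-root elements of $\Lambda$ (Proposition~\ref{Proposition 23}), each $\alpha$ has exactly $q^{n(m-1)}-1$ partners. Each root $v$ needs at least $m-1$ partners, and each of the $q^{m(n-1)}-1$ non-root $v$ needs at least $m$ (Proposition~\ref{Proposition 11}). This gives $k\ge (m-1)q^n+q^{n-m}+\frac{(m-1)q^n+q^{n-m}-m}{q^{n(m-1)}-1}$. All four cases, including the strict ones, are read off from the size of the last fraction. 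The term $q^{n-m}$ is just the one extra element required by each non-root $v$, diluted by $q^{n(m-1)}$.

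Your own Step~1 gives a cleaner route. Only root elements of $\Lambda$ have nonzero reduction, so every root $v$ needs $m-1$ root $\alpha\in\Lambda$ with $v\cdot\alpha=0$. A root $\alpha$ is orthogonal to exactly $q^{n(m-1)}-q^{(n-1)(m-1)}$ root words. Hence $k\ge (m-1)q^{n-1}\frac{q^m-1}{q^{m-1}-1}$. For $m\ge 3$ this strictly exceeds $(m-1)q^n+q^{n-m}$. For $m=2$ it equals $q^n+q^{n-1}$, since each root $\alpha$ is orthogonal to exactly one of the $q^n+q^{n-1}$ projective points. So all four bounds follow without invoking Proposition~\ref{Proposition 23}.
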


\begin{proof}
Since \( C(\Lambda) \) is an \( m \)-dimensional minimal linear code, its generator matrix \( G_\Lambda \) has \( m \) $R$-linearly independent rows.
By Theorem~2.6 in~\cite{NortonSalagean}, this is equivalent to \( G_\Lambda \) having \( m \) linearly independent columns.

Moreover, by Proposition~\ref{Proposition 23}, removing vectors in \( \Lambda \) that are scalar multiples of zero divisors does not affect minimality. Hence, without loss of generality, we may assume that all elements of \( \Lambda \) are root words.

\medskip
\noindent
\textbf{Step 1: Double counting argument.}

Let
\[
\mathcal{X} = \mathcal{X}(\Lambda)
=
\{(v,\alpha) \mid
v \in \mathrm{GR}(p^{n},\ell)^{m} \setminus \{0\},
\ \alpha \in \Lambda,
\ \langle v,\alpha \rangle = 0
\}.
\]

We compute \( |\mathcal{X}| \) in two different ways.

\medskip
\noindent
\emph{First count (fixing $\alpha$).}

Since each $\alpha \in \Lambda$ is a root word, Proposition~\ref{Proposition 9} implies that
\[
|\alpha^{\perp}| = q^{n(m-1)}.
\]
Excluding the zero vector, we obtain
\[
|\{v \neq 0 : \langle v,\alpha\rangle = 0\}|
=
q^{n(m-1)} - 1.
\]
Summing over all $\alpha \in \Lambda$ gives
\[
|\mathcal{X}|
=
k\big(q^{n(m-1)} - 1\big).
\]

\medskip
\noindent
\emph{Second count (fixing $v$).}

We partition the nonzero vectors of $\mathrm{GR}(p^{n},\ell)^{m}$ into root words and non-root words.

The number of root words equals
\[
q^{nm} - q^{m(n-1)},
\]
since a vector is a root word if and only if it is not contained in
\( p\,\mathrm{GR}(p^{n},\ell)^{m} \).

The number of non-root words equals
\[
q^{m(n-1)} - 1.
\]

By Propositions~\ref{Proposition 9} and~\ref{Proposition 11}, we have

- if $v$ is a root word, then \( |O(v,\Lambda)| \ge m-1 \);
- if $v$ is not a root word, then \( |O(v,\Lambda)| \ge m \).

Hence,
\[
|\mathcal{X}|
\ge
(q^{nm} - q^{m(n-1)})(m-1)
+
(q^{m(n-1)} - 1)m.
\]

\medskip
\noindent
\textbf{Step 2: Comparison of the two counts.}

Equating the two expressions yields
\[
k\big(q^{n(m-1)} - 1\big)
\ge
(q^{nm} - q^{m(n-1)})(m-1)
+
(q^{m(n-1)} - 1)m.
\]

Rearranging gives
\[
k
\ge
\frac{(q^{nm} - q^{m(n-1)})(m-1)
+
(q^{m(n-1)} - 1)m}
{q^{n(m-1)} - 1}.
\]

A straightforward algebraic simplification leads to
\[
k
\ge
(m-1)q^{n}
+
q^{n-m}
+
\frac{(m-1)q^{n} + q^{n-m} - m}
{q^{n(m-1)} - 1}.
\]

\medskip
\noindent
\textbf{Step 3: Analysis of cases.}

\emph{Case \( m \ge 3 \).}

In this case,
\[
0
<
\frac{(m-1)q^{n} + q^{n-m} - m}
{q^{n(m-1)} - 1}
< 1,
\]
and therefore
\[
k > (m-1)q^{n} + q^{n-m}.
\]

\medskip
\noindent
\emph{Case \( m = 2 \).}

We obtain
\[
k
\ge
q^{n} + q^{n-2}
+
\frac{q^{n} + q^{n-2} - 2}{q^{n} - 1}.
\]

If \( n \ge 3 \), then
\[
0 <
\frac{q^{n-2} - 1}{q^{n} - 1}
< 1,
\]
hence
\[
k > q^{n} + q^{n-2} + 1.
\]

If \( n = 2 \), the fractional term vanishes and
\[
k \ge q^{2} + 2.
\]

If \( n = 1 \), then $\mathrm{GR}(p^{1},\ell)=\mathbb{F}_q$, and the inequality reduces to
\[
k \ge q + 1,
\]
which coincides with the classical bound established in~\cite{WeiXiaXiwang}.
\end{proof}

We now construct an explicit family of minimal linear codes attaining the structural bounds derived above.

Recall that in the Galois ring $\mathrm{GR}(p^{n},\ell)$ the number of units equals
\[
|U(\mathrm{GR}(p^{n},\ell))|=(q-1)q^{n-1},
\]
whereas the number of nonzero zero divisors equals
\[
|D(\mathrm{GR}(p^{n},\ell))|=q^{n-1}-1.
\]

Let
\[
\{e_1,e_2,\dots,e_m\}
\]
be the standard basis of the free module
\(
\mathrm{GR}(p^{n},\ell)^{m}.
\)

We introduce the following subsets of $\mathrm{GR}(p^{n},\ell)^{m}$:

\[
\Lambda_1=\{e_i : 1\le i\le m\},
\]

\[
\Lambda_2=
\{\, e_i + u e_j
: 1\le i<j\le m,\ u\in U(\mathrm{GR}(p^{n},\ell)) \,\},
\]

\[
\Lambda_3=
\{\, e_i + d e_j
: 1\le i<j\le m,\ d\in D(\mathrm{GR}(p^{n},\ell)) \,\},
\]

\[
\Lambda_4=
\{\, d e_i + e_j
: 1\le i<j\le m,\ d\in D(\mathrm{GR}(p^{n},\ell)) \,\}.
\]

We then define
\[
\Lambda_0
=
\Lambda_1
\cup
\Lambda_2
\cup
\Lambda_3
\cup
\Lambda_4.
\]

\medskip

\noindent
\textbf{Cardinality of $\Lambda_0$.}

For each pair $1\le i<j\le m$, the number of elements in
\(
\Lambda_2
\)
is $(q-1)q^{n-1}$,
the number in
\(
\Lambda_3
\)
is $q^{n-1}-1$,
and similarly for
\(
\Lambda_4
\).
Thus, for each unordered pair $(i,j)$ we obtain
\[
(q-1)q^{n-1} + 2(q^{n-1}-1)
=
q^{n}+q^{n-1}-2.
\]

Since there are $\frac{m(m-1)}{2}$ such pairs and $|\Lambda_1|=m$, we obtain
\[
|\Lambda_0|
=
\frac{m(m-1)}{2}\,(q^{n}+q^{n-1}-2)
+
m.
\]

\begin{prop}\label{Proposition 2}
The code \( C(\Lambda_0) \) is a minimal linear code of dimension \( m \) and length
\[
\frac{m(m-1)}{2}\,(q^{n}+q^{n-1}-2)+m
\]
over the ring \( \mathrm{GR}(p^{n},\ell) \).
\end{prop}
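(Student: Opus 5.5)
The plan is to check the dimension and length directly, and then reduce minimality to root words via Theorem~\ref{Theorem 8}. For root words, Corollary~\ref{Corollary 8} together with the explicit basis of Remark~\ref{Remark 10} finishes the argument. Throughout I assume $m\ge 2$, since Theorem~\ref{Theorem 8} requires it. For $m=1$ one has $\Lambda_0=\{e_1\}$, and the remark following Theorem~\ref{Theorem 8} shows that this degenerate case behaves differently.

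\emph{Dimension and length.} Since $\Lambda_1\subseteq\Lambda_0$, the matrix $G_{\Lambda_0}$ contains the $m\times m$ identity as a submatrix. Its reduction modulo $p$ therefore has rank $m$ over $\mathbb{F}_q$. By the Norton--S\u{a}l\u{a}gean equivalence recalled before Proposition~\ref{Proposition 13}, the rows of $G_{\Lambda_0}$ are then $R$-linearly independent, so $C(\Lambda_0)$ is $m$-dimensional. The length is $|\Lambda_0|$, which was computed just before the statement.

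\emph{Minimality of root words.} Let $c(v)$ be a root-word codeword, and fix an index $i$ with $v_i\in U(\mathrm{GR}(p^n,\ell))$. By Corollary~\ref{Corollary 8} and Proposition~\ref{Proposition 13}, it suffices to show that $M(v,\Lambda_0)=O(v)$. For each $j\ne i$, Remark~\ref{Remark 10} (with $i$ in place of the first coordinate) gives $O(v)$ the basis
\[
w_j=e_j-v_jv_i^{-1}e_i,\qquad j\ne i.
\]
I will show that each $w_j$ is a unit multiple of an element of $O(v,\Lambda_0)$. This is a case analysis on $v_j$ and on the relative order of $i$ and $j$.
\begin{enumerate}
\item If $v_j=0$, then $w_j=e_j\in\Lambda_1$.
\item If $v_j$ is a unit and $i<j$, take $e_i+ue_j$ with $u=-v_iv_j^{-1}\in U$. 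This is $-v_iv_j^{-1}$ times $w_j$, and it lies in $\Lambda_2$.
\item If $v_j$ is a unit and $j<i$, take $e_j+ue_i$ with $u=-v_jv_i^{-1}$, which lies in $\Lambda_2$.
\item If $v_j$ is a nonzero zero divisor and $j<i$, then $e_j+de_i$ with $d=-v_jv_i^{-1}\in D$ lies in $\Lambda_3$ and equals $w_j$.
\item If $v_j$ is a nonzero zero divisor and $i<j$, then $de_i+e_j$ with $d=-v_jv_i^{-1}\in D$ lies in $\Lambda_4$ and again equals $w_j$.
\end{enumerate}
Each such vector $\alpha$ satisfies $v\alpha^T=0$, so it lies in $O(v,\Lambda_0)$. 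Hence $O(v)=\langle w_j\rangle\subseteq M(v,\Lambda_0)\subseteq O(v)$, and $c(v)$ is minimal.

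\emph{Conclusion and expected obstacle.} Every root word of $C(\Lambda_0)$ is minimal, so Theorem~\ref{Theorem 8} gives that $C(\Lambda_0)$ is minimal. The main obstacle is the reliance on Theorem~\ref{Theorem 8} to handle the non-root vectors $p^rv$.

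If a direct argument is preferred, I would treat these vectors as follows. The same vectors $w_j$ lie in $O(p^rv,\Lambda_0)$. By Remark~\ref{Remark 12}, $O(p^rv)$ needs one further generator, namely $p^{n-r}e_i$ up to a unit. I would find an element of $\Lambda_0$ orthogonal to $p^rv$ and lying in a coset $u'\zeta+v^{\perp}$ with $u'$ a unit. Concretely, I would pick $j\ne i$ and try elements $e_i+ue_j$ or $e_j+ue_i$ whose coefficient is perturbed from the one used above by a multiple of $p^{n-r}$. This perturbation keeps the element inside $\Lambda_2$, $\Lambda_3$ or $\Lambda_4$ while making it orthogonal to $p^rv$ but not to $v$. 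The case where $v_j$ is a zero divisor of valuation at least $n-r$ needs the most care, since there the perturbed coefficient may fall into a different one of the sets $\Lambda_2,\Lambda_3,\Lambda_4$.
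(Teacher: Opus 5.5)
Your argument is correct for $m\ge 2$, and that restriction is genuinely needed. For $m=1$ and $n\ge 2$ one gets $C(\Lambda_0)=\mathrm{GR}(p^{n},\ell)$, where $(1)\preceq(p)$ shows that $(p)$ is not minimal, so the statement as printed fails there.

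Your route differs from the paper's in how non-root vectors are handled. The paper does not use Theorem~\ref{Theorem 8}; it asserts $M(v,\Lambda_0)=O(v)$ directly for every nonzero $v$. For root words it takes generators from Remark~\ref{Remark 10} and claims they lie in $\Lambda_2$. For $v=p^{r}w$ it takes generators from Remark~\ref{Remark 12} and claims they lie in $\Lambda_3\cup\Lambda_4$. Your five-case placement into $\Lambda_1,\dots,\Lambda_4$ is the accurate form of the first claim. The second claim fails as written: the generator $(p^{n-r}u^{-1},0,\dots,0)$ is not in $\Lambda_0$, which contains no weight-one vector with a zero-divisor entry. Passing to Theorem~\ref{Theorem 8} sidesteps this, but it imports a theorem whose printed proof never actually completes its contradiction.

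Your fallback direct argument closes the paper's route and needs none of the extra care you anticipate. Take $v_i$ a unit and $j\ne i$, and set $\alpha=e_j+v_i^{-1}(p^{n-r}-v_j)e_i$. Then $v\cdot\alpha=p^{n-r}$. Up to a unit factor, $\alpha$ lies in $\Lambda_1$, $\Lambda_2$, $\Lambda_3$ or $\Lambda_4$ according as its $e_i$-coefficient is $0$, a unit, or a nonzero zero divisor, with the order of $i,j$ deciding between $\Lambda_3$ and $\Lambda_4$. The map $x\mapsto v\cdot x$ sends $O(p^{r}v)$ onto $p^{n-r}\mathrm{GR}(p^{n},\ell)$ with kernel $O(v)$, so your $w_j$ together with $\alpha$ generate $O(p^{r}v)$.

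The same perturbation also proves Theorem~\ref{Theorem 8} in general, so your citation of it is sound. The vector $v'=v+p^{n-r}e_j$ is a root word with $O(v')\subseteq O(p^{r}v)$ and $O(v)+O(v')=O(p^{r}v)$. Hence $M(p^{r}v,\Lambda)\supseteq M(v,\Lambda)+M(v',\Lambda)=O(p^{r}v)$ once all root words are minimal.
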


\begin{proof}
Let $v \in \mathrm{GR}(p^{n},\ell)^{m}$ be a nonzero vector.

If $v$ is a root word, then by Remark~\ref{Remark 10}, a generating set of $v^{\perp}$ consists of vectors of the form
\[
(-v_ju^{-1},1,0,\dots,0),
\]
which belong to $\Lambda_2$.

If $v$ is not a root word, then $v=p^{r}w$ for some root word $w$, and by Remark~\ref{Remark 12}, a generating set of $v^{\perp}$ consists of vectors of the form
\[
(-v_ju^{-1},1,0,\dots,0)
\quad \text{and} \quad
(p^{n-r}u^{-1},0,\dots,0),
\]
which belong to $\Lambda_3$ or $\Lambda_4$.

Therefore, for every nonzero $v$,
\[
M(v,\Lambda_0)=v^{\perp}.
\]

By Proposition~\ref{Proposition 13}, the equality
\(
M(v,\Lambda_0)=O(v)
\)
holds for all nonzero $v$.
Hence, by the minimality criterion established earlier,
\( C(\Lambda_0) \) is a minimal linear code.

Finally, since $\Lambda_1$ provides $m$ linearly independent vectors,
the dimension of $C(\Lambda_0)$ is $m$, and its length equals $|\Lambda_0|$.
\end{proof}

We now compare the lower bounds obtained in the previous subsection with the explicit construction provided by Proposition~\ref{Proposition 2}.
This yields two-sided estimates for the minimal length parameter $k(m;\mathrm{GR}(p^{n},\ell)),$
thereby situating our construction within the theoretical range of admissible lengths.

\begin{prop}
The following bounds hold for the minimal length parameter
\( k(m;\mathrm{GR}(p^{n},\ell)) \).

\medskip
\noindent
\textbf{Case \( m \ge 3 \):}
\[
(m-1)q^{n} + q^{\,n-m}
<
k(m;\mathrm{GR}(p^{n},\ell))
\le
\frac{m(m-1)}{2} (q^n + q^{n-1} - 2) + m.
\]

\medskip
\noindent
\textbf{Case \( m = 2 \):}

\[
q^{n} + q^{n-2} + 1
<
k(2;\mathrm{GR}(p^{n},\ell))
\le
q^{n}+q^{n-1},
\qquad n \ge 3,
\]

\[
q^{2} + 2
\le
k(2;\mathrm{GR}(p^{2},\ell))
\le
q^{2}+q,
\qquad n=2,
\]

\[
k(2;\mathrm{GR}(p,\ell)) = q+1,
\qquad n=1.
\]
\end{prop}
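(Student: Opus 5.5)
The plan is to combine the two results already established. The lower bounds come straight from the preceding proposition on lengths of minimal codes. The upper bounds come from exhibiting explicit minimal codes, namely the family of Proposition~\ref{Proposition 2}, or a slightly trimmed version of it when $m=2$.

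\textbf{Lower bounds.} Since $k(m;\mathrm{GR}(p^{n},\ell))$ is by definition the least element of $N(m;\mathrm{GR}(p^{n},\ell))$, it is the length of some $[k,m]$ minimal code. The earlier proposition applies to that code, and its inequalities transfer verbatim:
\begin{itemize}
\item the strict bound $k>(m-1)q^n+q^{n-m}$ for $m\ge 3$;
\item $k>q^n+q^{n-2}+1$ for $m=2$, $n\ge 3$;
\item $k\ge q^2+2$ for $m=2$, $n=2$;
\item $k\ge q+1$ for $n=1$.
\end{itemize}

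\textbf{Upper bound for $m\ge 3$.} Proposition~\ref{Proposition 2} shows that $C(\Lambda_0)$ is an $m$-dimensional minimal code of length $\frac{m(m-1)}{2}(q^n+q^{n-1}-2)+m$. This length therefore lies in $N(m;\mathrm{GR}(p^{n},\ell))$, which gives the upper bound.

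\textbf{Upper bound for $m=2$.} Here $|\Lambda_0|=q^n+q^{n-1}$, so the stated bound $q^n+q^{n-1}$ is exactly the length of $C(\Lambda_0)$. For $n=2$ this equals $q^2+q$, matching the claimed bound.

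\textbf{The case $n=1$.} Here $\mathrm{GR}(p,\ell)=\mathbb{F}_q$ and there are no nonzero zero divisors, so $\Lambda_3=\Lambda_4=\emptyset$. Then $\Lambda_0=\{e_1,e_2\}\cup\{e_1+ue_2: u\in\mathbb{F}_q^*\}$, which has $q+1$ elements. These are representatives of all points of the projective line, so the resulting code is minimal; this agrees with the formula $q^n+q^{n-1}=q+1$. Combined with the lower bound $q+1$, we get equality, which is also the classical value from~\cite{WeiXiaXiwang}.

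\textbf{Main obstacle.} The argument is essentially a bookkeeping combination of the two previous propositions, so the main risk is consistency rather than any new idea. The point I would check most carefully is that the $m=2$ upper bound really is attained by $C(\Lambda_0)$. Concretely, for $v=p^r w$ the generator $(p^{n-r}u^{-1},0)$ required by Remark~\ref{Remark 12} must lie in $\Lambda_3\cup\Lambda_4$ up to the $M(v,\Lambda_0)$-span. For $m=2$, $\Lambda_4$ contains $de_1+e_2$, and combining it with the root-type vector $(-v_2u^{-1},1)\in\Lambda_2$ yields the needed torsion generator.

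A secondary check is that each strict lower bound stays strictly below the corresponding construction length, so that the displayed intervals are nonempty. For $m\ge 3$ this holds because $\frac{m(m-1)}{2}(q^n+q^{n-1}-2)+m$ exceeds $(m-1)q^n+q^{n-m}+1$. For $m=2$, $n\ge3$ it holds because $q^{n-1}>q^{n-2}+1$. For $n=2$ it holds because $q\ge2$.
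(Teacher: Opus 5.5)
Your proposal is correct and is essentially the paper's own proof. The lower bounds are read off from the preceding double-counting proposition, the upper bounds come from the length of the minimal code $C(\Lambda_0)$ of Proposition~\ref{Proposition 2} (equal to $q^n+q^{n-1}$ when $m=2$), and the case $n=1$ is the classical value $q+1$, which you additionally justify directly via the projective line.

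Your optional side check of the torsion generator is slightly off, though your argument does not depend on it since you cite Proposition~\ref{Proposition 2}. For $v=p^r(u,v_2)$, the generator $(p^{n-r}u^{-1},0)$ is the difference of $(-v_2u^{-1},1)$ and $(p^{n-r}u^{-1}-v_2u^{-1},1)$. Both are unit multiples of elements of $\Lambda_2$ when $v_2$ is a unit, and both lie in $\Lambda_1\cup\Lambda_4$ otherwise, so you never pair a $\Lambda_2$ vector with a $\Lambda_4$ vector.
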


\begin{proof}
The lower bounds follow directly from the double-counting argument established in the previous subsection.
More precisely, for any $m$-dimensional minimal linear code over $\mathrm{GR}(p^{n},\ell)$, the combinatorial estimate on the incidence set
\[
\mathcal{X}=\{(v,\alpha)\mid \langle v,\alpha\rangle=0\}
\]
yields

\[
k >
(m-1)q^{n} + q^{n-m},
\qquad \text{for } m \ge 3,
\]
together with the corresponding bounds for $m=2$ derived earlier.

\medskip
To obtain the upper bounds, we use the explicit construction given in Proposition~\ref{Proposition 2}.
The code \( C(\Lambda_0) \) constructed there is minimal and has length

\[
|\Lambda_0|
=
\frac{m(m-1)}{2}\,(q^{n}+q^{n-1}-2)+m.
\]

Hence,

\[
k(m;\mathrm{GR}(p^{n},\ell))
\le
\frac{m(m-1)}{2}\,(q^{n}+q^{n-1}-2)+m.
\]

\medskip
When \( m=2 \), substituting \( m=2 \) into the construction gives

\[
|\Lambda_0|
=
q^{n}+q^{n-1},
\]
which yields the stated upper bounds in this case.

\medskip
Finally, when \( n=1 \), the Galois ring $\mathrm{GR}(p,\ell)$ reduces to the finite field $\mathbb{F}_q$, and the equality
\[
k(2;\mathbb{F}_q)=q+1
\]
coincides with the classical result.
\end{proof}

In \cite{ChatterjeeMesnagerMishra}, it was proved that for a $2$-dimensional minimal linear code over $\mathbb{Z}_{p^n}$, one has

 $k(2; \mathbb{Z}_{p^n})=p^n+p^{n-1}$.

We now show that this phenomenon is not specific to $\mathbb{Z}_{p^n}$; in fact, the same equality holds over the more general Galois ring $\mathrm{GR}(p^{n},\ell)$.

\begin{prop}
Let $\mathrm{GR}(p^{n},\ell)$ be a Galois ring and let $m=2$. Then
\[
k(2;\mathrm{GR}(p^{n},\ell)) = q^{n} + q^{n-1}.
\]

\end{prop}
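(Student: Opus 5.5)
The upper bound $k(2;\mathrm{GR}(p^{n},\ell))\le q^{n}+q^{n-1}$ is already given by Proposition~\ref{Proposition 2} with $m=2$, since $|\Lambda_0|=q^n+q^{n-1}$. So the real task is the matching lower bound: every $2$-dimensional minimal code $C(\Lambda)$ over $\mathrm{GR}(p^{n},\ell)$ has length at least $q^n+q^{n-1}$. The double-counting bound of the previous proposition falls short, so I would replace it by a direct geometric count on the "projective line" over the Galois ring.

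First, reduce $\Lambda$. By Proposition~\ref{Proposition 23}, we may assume every $\alpha\in\Lambda$ is a root word. Root words in $\mathrm{GR}(p^{n},\ell)^2$ split into classes under multiplication by units. Each class contains exactly one representative of the form $(1,a)$ with $a\in\mathrm{GR}(p^{n},\ell)$, or of the form $(d,1)$ with $d\in\langle p\rangle$. This gives
\[
q^n+q^{n-1}
\]
classes, which is exactly $|\mathbb{P}^1(\mathrm{GR}(p^{n},\ell))|$. For a root word $\alpha=(\alpha_1,\alpha_2)$, the condition $\langle v,\alpha\rangle=0$ depends only on the class of $\alpha$.

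Next, the key step: show that $\Lambda$ must meet every class. Fix a class $[\beta]$ with $\beta=(\beta_1,\beta_2)$, and put $v=(\beta_2,-\beta_1)$, which is a root word. By Proposition~\ref{Proposition 9} and Remark~\ref{Remark 10}, $O(v)=v^{\perp}$ is free of rank $1$, generated by $\beta$. By Corollary~\ref{Corollary 8}, minimality of $c(v)$ forces $M(v,\Lambda)=\langle\beta\rangle$. In particular, $O(v,\Lambda)$ must contain an element that generates $\langle\beta\rangle$, that is, an element $u\beta$ with $u$ a unit. Elements of the form $p^{s}u\beta$ with $s\ge1$ only generate $p^{s}\langle\beta\rangle$, and these have already been discarded anyway. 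Hence $\Lambda$ contains a representative of $[\beta]$.

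The root words of $\Lambda$ therefore hit all $q^n+q^{n-1}$ classes, and $k\ge q^n+q^{n-1}$. Combining this with the construction gives equality.

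The main obstacle is legitimizing the reduction to root words. Proposition~\ref{Proposition 23} removes vectors, so it only shows that the smaller code is still minimal; it does not by itself bound the original $k$. Fortunately the direction is favorable: I would argue directly that the original $\Lambda$ already contains a unit multiple of each $\beta$, so its length is at least the number of classes. This only needs $M(v,\Lambda)=\langle\beta\rangle$ together with Nakayama's lemma: a cyclic free module $\langle\beta\rangle$ generated by a subset $S\subseteq\langle\beta\rangle$ requires some element of $S$ outside $p\langle\beta\rangle$, hence a unit multiple of $\beta$. I would write this Nakayama step out carefully, and then check the class count $q^n+q^{n-1}=q^n+|\langle p\rangle|$.
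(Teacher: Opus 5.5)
Your proof is correct and follows essentially the same route as the paper. Both arguments observe that for each root word $v\in\mathrm{GR}(p^{n},\ell)^2$ the module $v^{\perp}$ is cyclic and generated by a root word, deduce that $\Lambda$ must contain a unit multiple of each such generator, and count the $q^{n}+q^{n-1}$ unit-classes of root words. Your version is tidier: you take the upper bound explicitly from Proposition~\ref{Proposition 2}, you state the necessary condition as ``at least one representative per class'' (the paper's ``exactly one'' overstates it), and you justify it by the elementary Nakayama step on the original $\Lambda$ instead of leaning on Proposition~\ref{Proposition 23}.
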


\begin{proof}
We first characterize all root words in $\mathrm{GR}(p^{n},\ell)^2$.

A vector in $\mathrm{GR}(p^{n},\ell)^2$ is a root word if and only if it contains at least one unit component. Hence, every root word is of one of the following types:
\[
u\begin{pmatrix}1\\0\end{pmatrix}, \quad
u\begin{pmatrix}0\\1\end{pmatrix}, \quad
u\begin{pmatrix}1\\u'\end{pmatrix}, \quad
u\begin{pmatrix}1\\d\end{pmatrix}, \quad
u\begin{pmatrix}d\\1\end{pmatrix},
\]
where $u,u' \in U(\mathrm{GR}(p^{n},\ell))$ and $d \in D(\mathrm{GR}(p^{n},\ell))$.

For each such root word $v$, its orthogonal module $v^{\perp}$ is generated respectively by
\[
\begin{pmatrix}0\\1\end{pmatrix}, \quad
\begin{pmatrix}1\\0\end{pmatrix}, \quad
\begin{pmatrix}1\\ -{u'}^{-1}\end{pmatrix}, \quad
\begin{pmatrix}-d\\1\end{pmatrix}, \quad
\begin{pmatrix}1\\-d\end{pmatrix}.
\]

Therefore, in order that every root word be minimal (by Theorem~\ref{Theorem 8}), the multiset $\Lambda$ must contain, for each possible root word $v$, a generator of $v^{\perp}$ which is itself a root word.

Consequently, a two-dimensional linear code over $\mathrm{GR}(p^{n},\ell)$ is minimal if and only if $\Lambda$ contains exactly one representative of each of the following types:
\[
u\begin{pmatrix}1\\0\end{pmatrix}, \quad
u\begin{pmatrix}0\\1\end{pmatrix}, \quad
u\begin{pmatrix}1\\u'\end{pmatrix}, \quad
u\begin{pmatrix}1\\d\end{pmatrix}, \quad
u\begin{pmatrix}d\\1\end{pmatrix},
\]
for all $u' \in U(\mathrm{GR}(p^{n},\ell))$ and all $d \in D(\mathrm{GR}(p^{n},\ell))$.

Counting these vectors, we obtain
\[
2
+
|U(\mathrm{GR}(p^{n},\ell))|
+
2|D(\mathrm{GR}(p^{n},\ell))|
=
2
+
(q-1)q^{n-1}
+
2(q^{n-1}-1).
\]

A direct simplification gives
\[
q^{n} + q^{n-1}.
\]

Hence,
\[
k(2;\mathrm{GR}(p^{n},\ell)) = q^{n} + q^{n-1}.
\]
\end{proof}

\begin{prop}\label{Proposition 27}
Let $v \in \mathrm{GR}(p^{n},\ell)^{m}$.
The one-dimensional code $\langle v \rangle$ is minimal if and only if
$v$ contains at least one generator of each nonzero proper ideal of
$\mathrm{GR}(p^{n},\ell)$.
\end{prop}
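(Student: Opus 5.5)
The plan is to reduce minimality of $\langle v\rangle$ to a combinatorial condition on the $p$-adic valuations of the coordinates of $v$. Throughout we read ``one-dimensional'' in the sense fixed earlier, namely that the generator has McCoy rank one. By the Norton--S\u{a}l\u{a}gean criterion recalled above, this means $\overline{v}\neq 0$, i.e.\ $v$ is a root word. (This assumption is genuinely needed: for $v=(p^{n-1})$ of length one, $\langle v\rangle$ is trivially minimal, yet $v$ contains no generator of $\langle p\rangle$ when $n\ge 3$.) Write $\nu_i=\nu_p(v_i)\in\{0,1,\dots,n-1,\infty\}$ for the coordinates, with the convention $\nu_p(0)=\infty$. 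Since $v$ is a root word, $\min_i \nu_i=0$. Using the unique form $p^{r}u$ of nonzero elements, a generator of $\langle p^r\rangle$ is exactly an element of valuation $r$. So the statement to prove becomes:
\[
\langle v\rangle \text{ is minimal} \iff \{1,\dots,n-1\}\subseteq\{\nu_1,\dots,\nu_k\},
\]
equivalently $\{0,\dots,n-1\}\subseteq\{\nu_i\}$.

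\textbf{Step 1 (supports).} Every scalar is $a=p^{s}u$ with $u$ a unit and $0\le s\le n$. Then $av_i\neq 0$ if and only if $s+\nu_i<n$, so
\[
\operatorname{Supp}(av)=S_s:=\{i:\nu_i<n-s\}.
\]
The support depends only on $s$, and the sets decrease weakly: $S_0\supseteq S_1\supseteq\cdots\supseteq S_{n-1}$. Moreover $av\neq 0$ exactly when $s\le n-1$, because $\min\nu_i=0$.

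\textbf{Step 2 (proportionality).} Let $a=p^su$ and $b=p^tu'$ be scalars with $av,bv\neq 0$.
\begin{itemize}
\item If $t\ge s$, then $b=ca$ for some $c$, so $bv$ is a multiple of $av$.
\item If $t<s$, any $c$ gives $b-ca$ of valuation exactly $t$. Hence $bv=c\,av$ would force $p^{t}v=0$, which contradicts $bv\neq 0$.
\end{itemize}
Consequently $\langle v\rangle$ is minimal if and only if, for all $0\le t<s\le n-1$, we have $S_t\not\subseteq S_s$. Given the nesting from Step 1, this is equivalent to $S_{s-1}\supsetneq S_s$ for each $s=1,\dots,n-1$. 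Since $S_{s-1}\setminus S_s=\{i:\nu_i=n-s\}$, this says precisely that every valuation $1,\dots,n-1$ occurs among the $\nu_i$.

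\textbf{Step 3 (conclusion).} Combining Step 2 with the translation ``valuation $r$ $\Leftrightarrow$ generator of $\langle p^r\rangle$'' proves both directions at once. The step needing the most care is the second case of Step 2: one must rule out that $bv$ could be a multiple of $av$ through a scalar $c$ that is not a ring multiple of the ratio $b/a$, and this is where the root-word hypothesis is used. I would also add a short remark explaining why this hypothesis is implicit in the phrase ``one-dimensional''.
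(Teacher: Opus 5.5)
Your proof is correct. The paper gives no proof of Proposition~\ref{Proposition 27}: it is only stated, presumably carried over from the earlier work on one-dimensional minimal codes over rings. So there is no argument in the paper to compare with, and yours is a self-contained verification.

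Your argument rests on three facts that the chain-ring structure supplies:
\begin{itemize}
\item the support of $av$ depends only on $\nu_p(a)$;
\item these supports form a descending chain $S_0\supseteq\cdots\supseteq S_{n-1}$;
\item $bv\in R\,av$ holds exactly when $\nu_p(b)\ge\nu_p(a)$.
\end{itemize}
Together these reduce minimality to strictness of each consecutive inclusion, i.e.\ to every valuation $1,\dots,n-1$ occurring among the coordinates.

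Compared with the paper's bare statement, your route exposes a hidden hypothesis. As printed, the proposition is false for $v\in p\,\mathrm{GR}(p^n,\ell)^m$; your example $v=(p^{n-1})$ with $n\ge 3$ is a valid counterexample. It holds exactly under the McCoy-rank-one (root word) reading that the paper uses elsewhere. That reading is consistent with the paper's only application, to the first row of $G_{\Lambda_f}$, which must be a root word since $G_{\Lambda_f}$ has full McCoy rank. Your argument also gives the general statement at no extra cost: if $\mu=\min_i\nu_i$, then $\langle v\rangle$ is minimal if and only if every value $\mu,\mu+1,\dots,n-1$ occurs among the $\nu_i$.

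One small correction to your closing commentary: the root-word hypothesis is not used in the second case of Step 2. There the contradiction is between $p^t v=0$ and $bv=p^t u' v\neq 0$, and this holds for any $v$. The hypothesis enters only in Step 1. There $\min_i\nu_i=0$ guarantees $av\neq 0$ for every $s\le n-1$, which fixes the range of $s$ over which strict nesting is required.
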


\section{Necessary and sufficient conditions for codewords to be minimal constructed from a function}\label{Section 4}

In this section, we establish a structural framework that characterizes
when the codewords of $C_f$ are minimal.
The analysis developed here serves as the theoretical foundation
for the explicit constructions presented in Section~5.

\medskip

The construction above may be interpreted as follows.
Each pair $(a,v)\in \mathrm{GR}(p^n,l)\times \mathrm{GR}(p^n,l)^m$
defines a linear functional
\[
x \longmapsto a f(x)+v\cdot x
\]
evaluated over all nonzero points of $\mathrm{GR}(p^n,l)^m$.
Thus $C_f$ is the evaluation code associated with the augmented map
\[
x \longmapsto (f(x),x).
\]

In particular, $C_f$ is generated by the $(m+1)$ rows corresponding to the coordinates
of $(a,v)$, and its dimension is governed by the McCoy rank of $G_{\Lambda_f}$.

\medskip
\noindent
\textbf{Reduction to root words.}

By Theorem~\ref{Theorem 8},
the code $C_f$ is minimal if and only if all of its root words are minimal.
Hence, the global minimality problem reduces entirely to the classification
and analysis of root words.

A direct structural examination shows that there are precisely five types
of root words in $C_f$:

\begin{enumerate}
\item $c(u,0)$, where $u$ is a unit in $\mathrm{GR}(p^n,l)$;

\item $c(u,v)$, where $u$ is a unit and $v$ is a root word
      in $\mathrm{GR}(p^n,l)^m$;

\item $c(u,v)$, where $u$ is a unit and
      $v\neq 0$ is not a root word in $\mathrm{GR}(p^n,l)^m$;

\item $c(0,v)$, where $v$ is a root word in $\mathrm{GR}(p^n,l)^m$;

\item $c(d,v)$, where $d$ is a zero divisor in $\mathrm{GR}(p^n,l)$
      and $v$ is a root word.
\end{enumerate}

These five cases reflect the valuation structure of $\mathrm{GR}(p^n,l)$
and the dichotomy between root and non-root vectors described in Section~3.

If each of these five types of codewords is minimal,
then $C_f$ is a minimal linear code over $\mathrm{GR}(p^n,l)$.

\medskip

The principal algebraic tool underlying our analysis
is the notion of McCoy rank,
which governs linear independence over finite commutative rings.

\begin{definition}
Let $R$ be a commutative ring and $G$ a matrix over $R$.
The \emph{McCoy rank} of $G$ is the largest integer $t$
such that at least one $t\times t$ minor of $G$ is a unit in $R$.
If none of the minors of $G$ is a unit,
the McCoy rank of $G$ is defined to be $0$.
\end{definition}

By Corollary~2.7 of~\cite{NortonSalagean},
the McCoy rank of a matrix $G$ over $\mathrm{GR}(p^n,l)$ equals $t$
if and only if the rank of its reduction modulo $p$,
\[
\bar{G},
\]
equals $t$ over the residue field
\[
\mathrm{GR}(p^n,l)/\langle p\rangle
\simeq \mathbb{F}_q.
\]

Consequently, $G$ has McCoy rank $t$
if and only if it has $t$ $R$-linearly independent rows,
and any $t+1$ rows are $R$-linearly dependent.
This reduction principle allows us to transfer linear-independence
questions over Galois rings to the simpler finite-field setting Ñ
a technique that will be repeatedly used in what follows.

\begin{remark}\label{Remark 14}
Let $G$ be an $(m+1)\times n$ matrix over $\mathrm{GR}(p^n,l)$.
Then $G$ generates an $(m+1)$-dimensional linear code over
$\mathrm{GR}(p^n,l)$ if and only if its reduction modulo $p$,
\[
\bar{G},
\]
has rank $(m+1)$ over the residue field
\[
\mathrm{GR}(p^n,l)/\langle p\rangle \simeq \mathbb{F}_q.
\]

Equivalently, $G$ generates a free submodule of rank $(m+1)$
if and only if the McCoy rank of $G$ equals $(m+1)$.
This criterion provides an effective method to verify full dimensionality
via reduction to $\mathbb{F}_q$.
\end{remark}

\subsection{A Necessary and Sufficient Condition for a Minimal Codeword Constructed from a Function}

We now analyze the minimality of the five types of root words listed above.
The structural results established in Section~3 allow us to express minimality
entirely in terms of orthogonality conditions and module-theoretic properties.

For convenience, we restate the five types:

\begin{enumerate}
\item $c(u,0)$, where $u$ is a unit in $\mathrm{GR}(p^n,l)$;

\item $c(u,v)$, where $u$ is a unit and $v$ is a root word;

\item $c(u,v)$, where $u$ is a unit and $v\neq 0$ is not a root word;

\item $c(0,v)$, where $v$ is a root word;

\item $c(d,v)$, where $d$ is a zero divisor and $v$ is a root word.
\end{enumerate}

In each case, minimality will be shown to be equivalent to the existence
of a suitably adapted basis
\[
\{\beta_1,\dots,\beta_m\}
\subseteq \mathrm{GR}(p^n,l)^m
\]
satisfying a synchronization condition between $f$ and an associated linear form.

\begin{prop}\label{Proposition 15}
$c(u,0)$ where $u$ is a unit in $\mathrm{GR}(p^n,l)$ and $0$ in $\mathrm{GR}(p^n,l)^m$. Then $c(u,0)$ is minimal in $C_f$ if and only if there exist  $\{\beta_1,\beta_2,\cdots,\beta_m\}$ which is a basis of $\mathrm{GR}(p^n,l)^m$, such that $f(\beta_1)=f(\beta_2)=\cdots=f(\beta_m)=0$. \end{prop}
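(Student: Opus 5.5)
We work with the generator-matrix description of $C_f$ from this section: $C_f=C(\Lambda_f)$ with $\Lambda_f=\{(f(x),x): x\in \mathrm{GR}(p^n,l)^m\setminus\{0\}\}\subseteq \mathrm{GR}(p^n,l)^{m+1}$. The codeword attached to $(a,v)$ is $c(a,v)=(af(x)+v\cdot x)_x$. The plan is to apply the criterion of Proposition~\ref{Proposition 13} in ambient rank $m+1$ to the vector $w=(u,0)$, and to compute both sides of $M(w,\Lambda_f)=O(w)$ explicitly.

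\textbf{Step 1: compute $O(w)$.} For $(y_0,y)\in \mathrm{GR}(p^n,l)^{m+1}$ we have $w\cdot(y_0,y)=uy_0$. Since $u$ is a unit, this vanishes iff $y_0=0$. Hence
\[
O(w)=\{0\}\times \mathrm{GR}(p^n,l)^m,
\]
which is free of rank $m$. This agrees with Proposition~\ref{Proposition 9}, since $w$ is a root word.

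\textbf{Step 2: compute $O(w,\Lambda_f)$.} An element $(f(x),x)\in\Lambda_f$ lies in $O(w)$ iff $uf(x)=0$, i.e.\ iff $f(x)=0$. Therefore
\[
O(w,\Lambda_f)=\{(0,x): x\neq 0,\ f(x)=0\},
\]
and $M(w,\Lambda_f)=\{0\}\times\langle Z_f\rangle$, where $Z_f=\{x\neq 0: f(x)=0\}$. By Proposition~\ref{Proposition 13} (equivalently Corollary~\ref{Corollary 8}), $c(u,0)$ is minimal iff $\langle Z_f\rangle=\mathrm{GR}(p^n,l)^m$.

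\textbf{Step 3: show that $Z_f$ generates $\mathrm{GR}(p^n,l)^m$ iff it contains a basis.} One direction is immediate: a basis $\{\beta_1,\dots,\beta_m\}$ with $f(\beta_i)=0$ consists of nonzero vectors, so it lies in $Z_f$ and generates everything. For the converse, suppose $Z_f$ generates. Reduce modulo $p$: the images $\overline{Z_f}$ span $\mathbb{F}_q^m$, so we can choose $\beta_1,\dots,\beta_m\in Z_f$ whose reductions form a basis of $\mathbb{F}_q^m$.

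\begin{itemize}
\item By Nakayama's lemma (the local-ring remark after Proposition~\ref{Proposition 13}), $\langle\beta_1,\dots,\beta_m\rangle+p\,\mathrm{GR}(p^n,l)^m=\mathrm{GR}(p^n,l)^m$ forces $\langle\beta_1,\dots,\beta_m\rangle=\mathrm{GR}(p^n,l)^m$.
\item The map $\mathrm{GR}(p^n,l)^m\to\mathrm{GR}(p^n,l)^m$, $(c_i)\mapsto\sum c_i\beta_i$, is then surjective between finite sets of equal size, hence bijective.
\item Thus $\{\beta_i\}$ is a basis. Alternatively, the matrix with rows $\beta_i$ has McCoy rank $m$ by the Norton--S\u{a}l\u{a}gean criterion, since its reduction has rank $m$.
\end{itemize}

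The only step needing care is Step 3, namely passing from ``generating set'' to ``contains a basis'' over a ring with zero divisors. The reduction-mod-$p$ plus Nakayama argument handles this cleanly. The remaining steps are direct unwinding of the definitions.
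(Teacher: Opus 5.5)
Your proof is correct and is essentially the paper's argument. The paper only cites Proposition~3.3 of Wu--Lu--Cao, which computes $O((u,0))=\{0\}\times\mathrm{GR}(p^n,l)^m$ and $O((u,0),\Lambda_f)=\{(0,x): x\neq 0,\ f(x)=0\}$ and then applies the criterion $M=O$; your Step~3 (reduction modulo $p$ plus Nakayama to extract a basis from a generating set) supplies the ring-specific step that the citation leaves implicit.

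You should state explicitly that applying Proposition~\ref{Proposition 13} uses the standing assumption that $G_{\Lambda_f}$ has McCoy rank $m+1$. Without it the equivalence fails: for $f(x)=w\cdot x$ with $w\neq 0$, $c(u,0)$ is minimal, but the zeros of $f$ lie in $w^{\perp}$ and so contain no basis.
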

\begin{proof}
The proof is same as Proposition 3.3 provided in \cite{WuLuCao}.
\end{proof}
\begin{prop}\label{Proposition 16}
$c(u,v)$ where $u$ is a unit in $\mathrm{GR}(p^n,l)$ and $v(\neq 0)$ belongs to $\mathrm{GR}(p^n,l)^m$ and $w=-u^{-1}v$. Then $c(u,v)$ is minimal in $C_f$ if and only if there exist  $\{\beta_1,\beta_2,\cdots,\beta_m\}$ which is a basis of $\mathrm{GR}(p^n,l)^m$, such that $f(\beta_i)=w\cdot \beta_i$.
\end{prop}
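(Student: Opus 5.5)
We work inside the $(m+1)$-dimensional framework of Section~\ref{Section 4}. Here $C_f=C(\Lambda_f)$ with $\Lambda_f=\{(f(x),x)\}$, $x$ running over the nonzero points of $\mathrm{GR}(p^n,l)^m$, and the codeword attached to $(a,v)$ has coordinates $af(x)+v\cdot x$. The plan is to reduce the claim to Proposition~\ref{Proposition 13} and Corollary~\ref{Corollary 8}, applied in dimension $m+1$ to the vector $V=(u,v)\in\mathrm{GR}(p^n,l)^{m+1}$. Since $u$ is a unit, $V$ is a root word. So $c(u,v)$ is minimal if and only if $M(V,\Lambda_f)$ equals $O(V)=V^{\perp}$, which is free of rank $m$ by Proposition~\ref{Proposition 9}. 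Note that $v$ need not be a root word, because the unit sits in the first coordinate of $V$.

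The first step is to identify $O(V)$ and $O(V,\Lambda_f)$ explicitly. A point $(y_0,y)$ lies in $V^{\perp}$ if and only if $uy_0+v\cdot y=0$, that is, $y_0=w\cdot y$ with $w=-u^{-1}v$. Remark~\ref{Remark 10} (with the unit in the first position) shows that the projection $(y_0,y)\mapsto y$ is an isomorphism $\pi:V^{\perp}\to\mathrm{GR}(p^n,l)^m$, with inverse $y\mapsto(w\cdot y,y)$. Consequently
\[
O(V,\Lambda_f)=\{(f(x),x): x\neq 0,\ f(x)=w\cdot x\},
\]
and $\pi$ maps $M(V,\Lambda_f)$ isomorphically onto the submodule $N\subseteq\mathrm{GR}(p^n,l)^m$ generated by the nonzero $x$ with $f(x)=w\cdot x$.

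Hence $c(u,v)$ is minimal if and only if $N=\mathrm{GR}(p^n,l)^m$. Both directions of the proposition then follow from this. If a basis $\beta_1,\dots,\beta_m$ of $\mathrm{GR}(p^n,l)^m$ satisfies $f(\beta_i)=w\cdot\beta_i$, then every $\beta_i$ lies in the generating set of $N$, so $N$ is everything. Conversely, suppose $N=\mathrm{GR}(p^n,l)^m$ is generated by the set $S=\{x\neq 0: f(x)=w\cdot x\}$. By Nakayama's lemma over the local ring $\mathrm{GR}(p^n,l)$, the reductions of $S$ modulo $p$ span $\mathbb{F}_q^m$. Choose $\beta_1,\dots,\beta_m\in S$ whose reductions form a basis of $\mathbb{F}_q^m$. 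By the McCoy rank criterion (Corollary~2.7 of~\cite{NortonSalagean}), the matrix with these rows is invertible, so the $\beta_i$ form a basis of $\mathrm{GR}(p^n,l)^m$, and each satisfies $f(\beta_i)=w\cdot\beta_i$.

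I expect the main obstacle to be this converse extraction. Over a field, a spanning set trivially contains a basis, but over $\mathrm{GR}(p^n,l)$ one must justify that a generating set of a free module contains a basis. The route above through reduction modulo $p$ and Nakayama is what I would use first. A secondary point is to check that $\Lambda_f$ omits only $x=0$. If the point $x=0$, i.e.\ $(f(0),0)$, were included, it would contribute nothing to $N$, so it causes no difficulty either way.
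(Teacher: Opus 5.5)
Your proposal is correct and is essentially the argument the paper intends: the paper's proof simply defers to Proposition~3.4 of Wu et al., whose proof is this same identification of $(u,v)^{\perp}$ with $\mathrm{GR}(p^n,l)^m$ via $y\mapsto(w\cdot y,y)$, combined with the criterion $M(V,\Lambda_f)=O(V)$ of Proposition~\ref{Proposition 13}. The one ring-specific step the paper leaves unstated is extracting a basis from a generating set via reduction modulo $p$ and the unit-determinant criterion, and you handle it correctly (only surjectivity of reduction modulo $p$ is needed there, not Nakayama).
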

\begin{proof}
The proof is same as Proposition 3.4 provided in \cite{WuLuCao}.
\end{proof}
\begin{prop}\label{Proposition 17}
$c(0,v)$ where $0$ is in $\mathrm{GR}(p^n,l)$ and $v\neq 0$ is a root word in $\mathrm{GR}(p^n,l)^m$. Then $c(0,v)$ is minimal in $C_f$ if and only if there exist $\beta_1,\beta_2,\cdots,\beta_m \in O(v)$ such that $\{\alpha_{\beta_1},\alpha_{\beta_2},\cdots,\alpha_{\beta_m}\}$ are linearly independent set over $\mathrm{GR}(p^n,l)$.
\end{prop}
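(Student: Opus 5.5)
We reduce the claim to the general criterion of Proposition~\ref{Proposition 13}, applied to the $(m+1)$-dimensional code $C_f=C(\Lambda_f)$. Here $\Lambda_f=\{\alpha_x=(f(x),x): x\in \mathrm{GR}(p^n,l)^m\setminus\{0\}\}\subseteq \mathrm{GR}(p^n,l)^{m+1}$, and the codeword $c(a,v)$ is $\big((a,v)\cdot\alpha_x\big)_x$. For $c(0,v)$ the coordinate at $x$ is $v\cdot x$. Hence
\[
O((0,v),\Lambda_f)=\{\alpha_\beta:\beta\in O(v)\}.
\]
By Proposition~\ref{Proposition 13}, $c(0,v)$ is minimal if and only if $M((0,v),\Lambda_f)=O((0,v))$, where the orthogonal module is now taken in $\mathrm{GR}(p^n,l)^{m+1}$. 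Since $v$ is a root word, $(0,v)$ is a root word of $\mathrm{GR}(p^n,l)^{m+1}$. Proposition~\ref{Proposition 9} (with $m+1$ in place of $m$) then shows that $O((0,v))$ is free of rank $m$, and Corollary~\ref{Corollary 8} turns the criterion into: $c(0,v)$ is minimal if and only if $M((0,v),\Lambda_f)=\langle \alpha_\beta:\beta\in O(v)\rangle$ is free of rank $m$ and equals $O((0,v))$.

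\emph{Sufficiency.} Suppose $\beta_1,\dots,\beta_m\in O(v)$ are such that $\alpha_{\beta_1},\dots,\alpha_{\beta_m}$ are linearly independent. Each $\alpha_{\beta_i}$ lies in $O((0,v))$. By Corollary~2.7 of~\cite{NortonSalagean}, linear independence means the $m\times(m+1)$ matrix with these rows has McCoy rank $m$. Equivalently, their reductions modulo $p$ are independent in $\overline{O((0,v))}\subseteq\mathbb{F}_q^{m+1}$, which has dimension $m$ because $O((0,v))$ is free of rank $m$. By Nakayama's lemma, vectors of a free module of rank $m$ whose reductions form a basis of the reduction themselves form a basis. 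Hence they generate $O((0,v))$, so $M((0,v),\Lambda_f)=O((0,v))$ and $c(0,v)$ is minimal.

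\emph{Necessity.} Suppose $M((0,v),\Lambda_f)=O((0,v))$, which is free of rank $m$. The generating set $\{\alpha_\beta:\beta\in O(v)\}$ of this module reduces modulo $p$ to a spanning set of the $m$-dimensional $\mathbb{F}_q$-space $\overline{O((0,v))}$. Choose $\beta_1,\dots,\beta_m$ whose reductions $\overline{\alpha_{\beta_i}}$ form a basis. Then the matrix with rows $\alpha_{\beta_i}$ has McCoy rank $m$, so these rows are linearly independent over $\mathrm{GR}(p^n,l)$, again by~\cite{NortonSalagean}.

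\emph{Main obstacle.} The delicate step is the Nakayama/reduction argument. One must check that reduction modulo $p$ of a free direct summand $O((0,v))$ gives a subspace of dimension exactly $m$, and that a generating set of the module reduces to a spanning set. I would verify this using the explicit basis of Remark~\ref{Remark 10}, made concrete by taking a unit coordinate of $v$ as pivot. In that basis $O((0,v))$ is a direct summand of $\mathrm{GR}(p^n,l)^{m+1}$, so $\overline{O((0,v))}=\overline{(0,v)}^{\perp}$ has dimension $m$. This also confirms that linear independence of the $\alpha_{\beta_i}$ is insensitive to the ambient module, which makes both directions rigorous.
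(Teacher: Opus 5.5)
Your proof is correct and follows the paper's route. The paper simply defers to the field-case proof of Wu \emph{et al.} (their Proposition~3.5), which identifies $O((0,v),\Lambda_f)$ with $\{\alpha_\beta:\beta\in O(v)\setminus\{0\}\}$ and applies the criterion $M=O$ in the $(m+1)$-dimensional ambient space, where here $O((0,v))=\mathrm{GR}(p^n,l)\times O(v)$ is free of rank $m$. Your reduction-mod-$p$/Nakayama step, with linear independence read in the McCoy-rank sense as the paper intends, supplies exactly the ring-specific detail the paper leaves implicit.
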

\begin{proof}
The proof is same as Proposition 3.5 provided in \cite{WuLuCao}.
\end{proof}

\begin{prop}\label{Proposition 18}
Let $d = p^{r}u$ be a zero divisor in $\mathrm{GR}(p^{n},\ell)$ with
$u \in U(\mathrm{GR}(p^{n},\ell))$, and let
$v \in \mathrm{GR}(p^{n},\ell)^m$ be a nonzero root word.
Set $w = -u^{-1}v$.

Then $c(d,v)$ is minimal in $C_f$ if and only if there exist
$\beta_1,\dots,\beta_m$ such that
\[
\{\alpha_{\beta_1},\dots,\alpha_{\beta_m}\}
\]
is linearly independent over $\mathrm{GR}(p^{n},\ell)$ and
\[
p^{r} f(\beta_i)= w \cdot \beta_i
\quad \text{for all } i.
\]
\end{prop}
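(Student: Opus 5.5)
The plan is to run the argument of Proposition~\ref{Proposition 16} (i.e.\ Proposition~3.4 of \cite{WuLuCao}) inside the orthogonal-module criterion of Proposition~\ref{Proposition 13}. We write $C_f=C(\Lambda_f)$ with $\Lambda_f=\{\alpha_x=(f(x),x): x\in \mathrm{GR}(p^n,\ell)^m\setminus\{0\}\}\subseteq \mathrm{GR}(p^n,\ell)^{m+1}$, so that $c(d,v)$ has coordinate $d f(x)+v\cdot x$ at $\alpha_x$. First observe that $(d,v)$ is a root word of $\mathrm{GR}(p^n,\ell)^{m+1}$, because $v$ is one. By Proposition~\ref{Proposition 9}, $O((d,v))$ is then free of rank $m$, and Corollary~\ref{Corollary 8} reduces minimality of $c(d,v)$ to the statement that $M((d,v),\Lambda_f)$ is free of rank $m$. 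By Nakayama, this is equivalent to $O((d,v),\Lambda_f)$ containing $m$ linearly independent vectors. The reason: a rank-$m$ free submodule of the rank-$m$ free module $O((d,v))$ generated by $m$ independent vectors is everything, since its reduction modulo $p$ has full rank.

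Next we translate membership in $O((d,v),\Lambda_f)$. The vector $\alpha_\beta$ lies there iff $p^r u f(\beta)+v\cdot\beta=0$. Multiplying by $-u^{-1}$, this becomes $p^r f(\beta)=-u^{-1}v\cdot\beta=w\cdot\beta$. Hence $O((d,v),\Lambda_f)=\{\alpha_\beta : p^r f(\beta)=w\cdot\beta\}$.

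\emph{Necessity.} If $c(d,v)$ is minimal, then $M((d,v),\Lambda_f)=O((d,v))$ is free of rank $m$. A generating set drawn from $O((d,v),\Lambda_f)$ must therefore contain $m$ elements whose reductions modulo $p$ are independent over $\mathbb{F}_q$. By the McCoy rank criterion (Corollary~2.7 of \cite{NortonSalagean}), these $\alpha_{\beta_1},\dots,\alpha_{\beta_m}$ are linearly independent over $\mathrm{GR}(p^n,\ell)$ and satisfy $p^r f(\beta_i)=w\cdot\beta_i$.

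\emph{Sufficiency.} Given such $\beta_i$, the $\alpha_{\beta_i}$ lie in $O((d,v))$ and span a free submodule $F$ of rank $m$. It remains to show $F=O((d,v))$. Since $O((d,v))$ is free of rank $m$ (Proposition~\ref{Proposition 9}), it suffices to prove that $\bar F$ has dimension $m$ inside the reduction of $O((d,v))$. That reduction is $\overline{(d,v)}^{\perp}=(0,\bar v)^\perp$, which is $m$-dimensional because $\bar v\ne 0$. An inclusion of free modules of equal rank that is surjective modulo $p$ is an equality by Nakayama.

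\emph{Main obstacle.} The delicate point is the dimension count in the last step. Being linearly independent in $\mathrm{GR}(p^n,\ell)^{m+1}$ must mean McCoy rank $m$, i.e.\ independence modulo $p$, and not merely that no nontrivial relation exists. Equivalently, we need to confirm that $F$ is a direct summand of $O((d,v))$. I would fix this convention explicitly, following Remark~\ref{Remark 14}. With that convention the reduction to $\mathbb{F}_q$ goes through. Otherwise, for example when some $\alpha_{\beta_i}$ has a torsion-type relation, the submodule could be a proper rank-$m$ sublattice such as $p\,O((d,v))$, and the criterion would fail.
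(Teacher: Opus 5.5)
Your argument is correct and matches the paper's proof: both pass to $y=(p^{r}u,v)$, apply Corollary~\ref{Corollary 8} to the free rank-$m$ module $O(y)$, and translate $\langle y,\alpha_\beta\rangle=0$ into $p^{r}f(\beta)=w\cdot\beta$; your reduction-mod-$p$/Nakayama step justifies the claim that a full-rank subset forces $M(y,\Lambda_f)=O(y)$, which the paper only asserts. One correction to your caveat: over $\mathrm{GR}(p^{n},\ell)$, having no nontrivial relation $\sum c_i\alpha_{\beta_i}=0$ is already equivalent to independence modulo $p$ (for a basis $b_1,\dots,b_m$ of $O(y)$, the family $pb_1,\dots,pb_m$ fails this because $p^{n-1}\cdot pb_1=0$), so the only notion that would break the criterion is the weaker ``no vector is a combination of the others'' from the paper's definition of a generator matrix.
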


\begin{proof}
Let
\[
y=(p^{r}u,v).
\]
Then
\[
c(d,v)=c(y)
=
\big(p^{r}u f(x)+v\cdot x\big)_{x\neq 0}.
\]

The minimality of $c(d,v)$ is therefore equivalent to the minimality of the vector $y$ in the sense of Proposition~\ref{Proposition 13}.

\medskip
\noindent
\textbf{(Necessity).}

Assume that $c(y)$ is minimal in $C(\Lambda_f)$.
By Corollary~\ref{Corollary 8}, there exist
$\beta_1,\dots,\beta_m$ such that:

\begin{itemize}
\item the set
      $\{\alpha_{\beta_1},\dots,\alpha_{\beta_m}\}$
      is linearly independent over $\mathrm{GR}(p^{n},\ell)$;
\item $\langle y,\alpha_{\beta_i}\rangle = 0$
      for all $i$.
\end{itemize}

Since
\[
\alpha_{\beta_i}=(f(\beta_i),\beta_i),
\]
the orthogonality condition gives
\[
p^{r}u\, f(\beta_i)+v \cdot \beta_i=0.
\]

Because $u$ is a unit, multiplication by $u^{-1}$ yields
\[
p^{r} f(\beta_i)
=
- u^{-1} v \cdot \beta_i
=
w \cdot \beta_i.
\]

Thus the synchronization condition
\[
p^{r} f(\beta_i)= w \cdot \beta_i
\]
is necessary.

\medskip
\noindent
\textbf{(Sufficiency).}

Conversely, suppose there exist
$\beta_1,\dots,\beta_m$ such that

\[
\{\alpha_{\beta_1},\dots,\alpha_{\beta_m}\}
\]
is linearly independent over $\mathrm{GR}(p^{n},\ell)$ and
\[
p^{r} f(\beta_i)= w \cdot \beta_i
\quad \text{for all } i.
\]

Substituting $w=-u^{-1}v$, we obtain
\[
p^{r}u f(\beta_i)+v \cdot \beta_i=0,
\]
so that
\[
\langle y,\alpha_{\beta_i}\rangle=0
\quad \text{for all } i.
\]

Hence each $\alpha_{\beta_i}$ lies in the annihilator module
\[
M(y,\Lambda_f)
=
\{\alpha_x\in\Lambda_f \mid \langle y,\alpha_x\rangle=0\}.
\]

Since the set
$\{\alpha_{\beta_1},\dots,\alpha_{\beta_m}\}$
is linearly independent and has cardinality $m$,
it forms a full-rank subset of $M(y,\Lambda_f)$.
By Corollary~\ref{Corollary 8}, this implies that
\[
M(y,\Lambda_f)=O(y),
\]
and therefore $y$ is minimal in $C(\Lambda_f)$.

Consequently, $c(d,v)$ is minimal in $C_f$.

\medskip

The proof mirrors the corresponding argument over finite fields
(cf.~\cite{WuLuCao}),
with linear independence interpreted via McCoy rank and
the factor $p^{r}$ reflecting the valuation of the zero divisor $d$.
\end{proof}
\noindent
\textbf{Structural conclusion of Section~\ref{Section 4}.}

The preceding propositions provide a complete characterization
of minimal root words in $C_f$.
In each of the five possible configurations,
minimality reduces to the existence of a family of evaluation points
whose associated vectors $\{\alpha_{\beta_i}\}$
are linearly independent and satisfy a precise
\emph{synchronization condition} between the function $f$
and an appropriate linear form $w\cdot x$.

More precisely:

\begin{itemize}
\item In the unit cases, minimality requires that
      $f(\beta_i)=w\cdot \beta_i$;

\item In the zero divisor case, the condition becomes
      $p^r f(\beta_i)=w\cdot \beta_i$;

\item In the purely orthogonal case, minimality is governed
      by the structure of the annihilator module $O(v)$.
\end{itemize}

In all situations, the decisive ingredient is the existence
of $m$ evaluation vectors forming a full-rank family
in the sense of McCoy rank.
Thus, minimality of $C_f$ is ultimately controlled
by the interplay between the algebraic behavior of $f$
and the linear geometry of $\mathrm{GR}(p^n,l)^m$.

These structural criteria will serve as the foundation
for the explicit constructions developed in the next section.

\section{Minimal linear code constructed from functions}\label{section 5}

The purpose of this section is to establish structural tools that allow the construction of
\emph{minimal linear codes} over Galois rings from suitably chosen functions.
The minimality property plays a central role in coding theory and cryptography,
notably in secret sharing schemes and authentication systems, since it ensures
that every nonzero codeword determines its support structure uniquely.

Recall that a linear code $\mathcal{C}$ over a finite commutative ring is called
\emph{minimal} if for any two nonzero codewords $c,c'\in\mathcal{C}$,
the inclusion
\[
\mathrm{Supp}(c)\subseteq \mathrm{Supp}(c')
\]
implies that $c$ and $c'$ are scalar multiples of each other.
In other words, no nonzero codeword strictly covers the support of another
unless they generate the same cyclic submodule.

When codes are constructed from functions, minimality is often governed by
algebraic properties of certain evaluation vectors.
More precisely, minimality reduces to controlling linear relations between
vectors of the form
\[
(f(x_1),f(x_2),\dots,f(x_m)),
\]
and ensuring that these vectors do not create unintended support inclusions.

The strategy adopted here relies on two complementary structural principles:

\begin{enumerate}
\item Constructing bases of $\mathrm{GR}(p^n,l)^m$ whose vectors have
      \emph{maximal support} and whose coordinates are all units.
      Such bases prevent degeneracies in support containment and will be
      essential for guaranteeing minimality via determinant arguments.

\item Constructing bases adapted to a given \emph{root word},
      in which each basis vector has very small Hamming weight
      while satisfying a prescribed dot-product condition.
      This dual construction provides fine control of linear dependencies.
\end{enumerate}

These two ingredients are developed in Lemmas~\ref{Lemma 19} and~\ref{Lemma 20}.
The first lemma ensures the existence of highly nondegenerate bases
(all coordinates units, full weight),
while the second lemma provides a normalization mechanism relative to a root word.
Together, they furnish the algebraic framework required to verify
minimality criteria for the codes constructed later in this section.

Throughout this section, we make essential use of the structural properties of
Galois rings $\mathrm{GR}(p^n,l)$, notably:

\begin{itemize}
\item An element is a unit if and only if it is nonzero modulo $p$;
\item The unit group decomposes into $q-1$ Teichmüller cosets;
\item A square matrix over $\mathrm{GR}(p^n,l)$ is invertible
      if and only if its determinant is a unit (equivalently,  if it has a full McCoy rank.
\end{itemize}

\begin{lemma}\label{Lemma 19}
Let $\mathrm{GR}(p^n,l)$ be a Galois ring such that $q>3$. Then there exists a basis
$\{\beta_1,\beta_2,\cdots,\beta_m\}$ of $\mathrm{GR}(p^n,l)^m$ such that
\[
w(\beta_i)=m
\]
and all the components of each $\beta_i$ are units in $\mathrm{GR}(p^n,l)$.
\end{lemma}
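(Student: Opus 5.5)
We need a basis of $\mathrm{GR}(p^n,\ell)^m$ all of whose vectors have only unit coordinates. The plan is to build an explicit $m\times m$ matrix whose entries are all units and whose determinant is a unit; its rows then give the required basis. By the invertibility criterion recalled at the start of this section (a square matrix is invertible if and only if its determinant is a unit, equivalently it has full McCoy rank, equivalently its reduction modulo $p$ is nonsingular over $\mathbb{F}_q$ by Corollary~2.7 of~\cite{NortonSalagean}), it suffices to work modulo $p$. Since units are exactly the elements that are nonzero modulo $p$, the problem becomes: find a nonsingular $m\times m$ matrix over $\mathbb{F}_q$ with no zero entries, and then lift its entries to the Teichm\"uller system $\mathcal{T}\setminus\{0\}$.

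For the concrete choice, fix $a\in\mathbb{F}_q^{*}$ with $a\neq 1$; this exists since $q>3$ (indeed $q\ge 3$ already suffices). Let $\bar{M}=J+(a-1)I$, where $J$ is the all-ones matrix. Its diagonal entries equal $a\neq 0$ and its off-diagonal entries equal $1$, so every entry is nonzero. Its eigenvalues are $a-1$, with multiplicity $m-1$, and $a-1+m$, so
\[
\det \bar{M}=(a-1)^{m-1}(a-1+m).
\]
The only possible failure is $a-1+m=0$ in $\mathbb{F}_q$, i.e. $a\equiv 1-m \pmod p$. Because $q>3$, the set $\mathbb{F}_q^{*}\setminus\{1\}$ has at least two elements, so we can choose $a\notin\{0,1,1-m\}$. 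This gives $\det\bar{M}\neq 0$.

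Finally, we lift. Take $\hat a\in\mathcal{T}$ to be the Teichm\"uller representative of $a$ and set $M=J+(\hat a-1)I$ over $\mathrm{GR}(p^n,\ell)$. Its entries are $1$ and $\hat a$, both units, and $\det M\equiv\det\bar{M}\not\equiv 0\pmod p$, so $\det M$ is a unit and $M$ is invertible. The rows $\beta_1,\dots,\beta_m$ then span $\mathrm{GR}(p^n,\ell)^m$ and are linearly independent, so they form a basis. Each $\beta_i$ has all $m$ coordinates units, and in particular $w(\beta_i)=m$.

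The only delicate step is ensuring $a-1+m\neq 0$ while keeping $a\neq 0,1$. This is exactly where the hypothesis on $q$ enters: in $\mathbb{F}_2$ no valid $a$ exists at all, and for small $q$ the three excluded values can exhaust the field. For $q>3$ the count above shows a valid choice always exists.
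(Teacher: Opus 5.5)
Your proof is correct and is essentially the paper's construction: the paper takes $G=aI_m-A$ (with $A$ the all-ones matrix) and requires $a\not\equiv 0,1,m \pmod p$. Up to an overall sign and the substitution $a\mapsto 1-a$, this is exactly your matrix $J+(a-1)I$ with your excluded residues $0,1,1-m$; the only cosmetic difference is that you compute the determinant over $\mathbb{F}_q$ and lift via Teichm\"uller representatives, whereas the paper computes $\det G=a^{m-1}(a-m)$ directly over the ring.
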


\begin{proof}
Let $G$ be an $m\times m$ matrix over $\mathrm{GR}(p^n,l)$. It is well known that the McCoy rank of $G$ is equal to $m$ if and only if $\det(G)$ is a unit of $\mathrm{GR}(p^n,l)$. Thus, to construct a basis of $\mathrm{GR}(p^n,l)^m$, it suffices to construct an invertible matrix whose columns satisfy the required weight and unit conditions.

Consider the matrix
\[
A=\begin{pmatrix}
1 & 1 & \cdots & 1\\
1 & 1 & \cdots & 1\\
\vdots & \vdots & \ddots & \vdots\\
1 & 1 & \cdots & 1
\end{pmatrix}_{m\times m},
\]
whose entries are all equal to $1$, and define
\[
G = aI_m - A,
\]
where $a\in \mathrm{GR}(p^n,l)$.

Observe that $A$ has rank one and eigenvalues $m$ (with multiplicity one) and $0$ (with multiplicity $m-1$). Consequently, the eigenvalues of $G$ are $a-m$ (with multiplicity one) and $a$ (with multiplicity $m-1$). Therefore,
\[
\det(G)=a^{m-1}(a-m).
\]

The matrix $G$ is invertible if and only if $\det(G)$ is a unit. This occurs precisely when both $a$ and $a-m$ are units in $\mathrm{GR}(p^n,l)$.

Since an element of $\mathrm{GR}(p^n,l)$ is a unit if and only if it is nonzero modulo $p$, we deduce that:
\[
a \text{ is a unit } \quad \text{and} \quad a-m \text{ is a unit}
\]
if and only if
\[
a \not\equiv 0 \pmod p
\quad \text{and} \quad
a \not\equiv m \pmod p.
\]

Now take the $i$-th column of $G$ as $\beta_i$. Each column has the form
\[
\beta_i = a e_i - \mathbf{1},
\]
where $\mathbf{1}$ denotes the all-one vector.

If $a\neq 1$, then none of the entries of $\beta_i$ vanish, and hence
\[
w(\beta_i)=m.
\]
Moreover, all components of $\beta_i$ are units if and only if $a-1$ is a unit, that is,
\[
a \not\equiv 1 \pmod p.
\]

Combining all constraints, we require:
\[
a \not\equiv 0, \; 1, \; m \pmod p.
\]

Under these conditions, the matrix $G$ is invertible and its columns form a basis of $\mathrm{GR}(p^n,l)^m$ satisfying the desired properties.
\end{proof}

\begin{remark}
The construction above imposes explicit arithmetic restrictions on $a$:

\begin{itemize}
\item $a$ must be a unit,
\item $a \not\equiv 1 \pmod p$,
\item $a \not\equiv m \pmod p$.
\end{itemize}

When $q>3$, the residue field $\mathbb{F}_q$ contains at least four distinct elements, and therefore there always exists an element satisfying the above conditions for any fixed $m$.

More precisely, the set of units of $\mathrm{GR}(p^n,l)$ decomposes into $q-1$ cosets of the form
\[
\xi^i + \langle p \rangle, \qquad 1\leq i\leq q-1,
\]
where $\xi$ is a Teichmüller representative. Hence, there are sufficiently many choices of $a$ to avoid the finitely many forbidden residue classes.
\end{remark}

We now construct bases adapted to a given root word.

\begin{lemma}\label{Lemma 20}
Let $v\in \mathrm{GR}(p^n,l)^m$ be a root word. Then there exists a set
$\{\beta_1,\beta_2,\cdots,\beta_m\}$ forming a basis of $\mathrm{GR}(p^n,l)^m$ such that
\[
1\leq w(\beta_i)\leq 2
\quad \text{and} \quad
v\cdot \beta_i=1
\]
for all $i$.
\end{lemma}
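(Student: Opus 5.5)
Since $v$ is a root word, at least one coordinate is a unit. After permuting coordinates (a permutation sends bases to bases and preserves both the weights $w(\beta_i)$ and the values $v\cdot\beta_i$), I may assume $v=(u,v_2,\dots,v_m)$ with $u\in U(\mathrm{GR}(p^n,\ell))$. The plan is to write down an explicit family of vectors of weight one or two, patterned on the basis of $v^{\perp}$ in Remark~\ref{Remark 10}. I then check the dot-product condition coordinatewise and check the basis property through the determinant criterion.

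Set $\beta_1=u^{-1}e_1$, which has weight one and satisfies $v\cdot\beta_1=u\,u^{-1}=1$. For $2\le j\le m$, I split into two cases according to whether $v_j$ is a unit.

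\emph{Case $v_j$ is a unit.} Put $\beta_j=v_j^{-1}e_j$, of weight one, with $v\cdot\beta_j=1$.

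\emph{Case $v_j$ is not a unit, i.e.\ $v_j\in\langle p\rangle$ (possibly $0$).} Then $1-v_j$ is a unit. Put $\beta_j=u^{-1}(1-v_j)e_1+e_j$. This has weight exactly two, and
\[
v\cdot\beta_j=(1-v_j)+v_j=1 .
\]
Weight one is also allowed, but only when $v_j$ is a unit. Alternatively, one could take $\beta_j=u^{-1}(1-v_j)e_1+e_j$ uniformly for all $j\ge2$. The weight-one choice in the unit case is just a simplification, and the uniform choice works as long as $1-v_j\ne0$.

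For the basis property, I form the $m\times m$ matrix $B$ whose $i$-th column is $\beta_i$. Only the first row can have nonzero entries off the diagonal, so $B$ is upper triangular. Its diagonal entries are $u^{-1}$ and, for $j\ge2$, either $v_j^{-1}$ or $1$. Hence $\det B$ is a product of units, and $B$ is invertible. By the criterion recalled before Lemma~\ref{Lemma 19} (invertibility is equivalent to the determinant being a unit, i.e.\ to full McCoy rank), the columns form a basis of $\mathrm{GR}(p^n,\ell)^m$. Undoing the initial permutation finishes the argument.

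The main obstacle is the unit case $v_j\equiv 1\pmod p$ in the uniform construction. There $1-v_j$ may vanish, which would drop the weight to one, or become a zero divisor. This is exactly why I treat $v_j$ a unit separately with $\beta_j=v_j^{-1}e_j$, which removes the difficulty. I would double-check that this choice also keeps the triangular shape, and it does.
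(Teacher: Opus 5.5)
Your proof is correct and follows essentially the same route as the paper: fix a unit coordinate $u$, take $u^{-1}e_1$ together with vectors supported on $\{1,j\}$ satisfying $v\cdot\beta_j=1$, and conclude from the unit diagonal of the resulting triangular matrix. The case split you introduce is unnecessary, though harmless. The paper uses $\beta_j=e_j+u^{-1}(1-v_j)e_1$ uniformly for all $j\ge 2$, and this works even when $1-v_j$ is $0$ or a zero divisor: weight one is permitted by the statement, and the $(1,j)$ entry does not affect the determinant of the triangular matrix.
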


\begin{proof}
Write
\[
v=(v_1,v_2,\cdots,v_m).
\]
Since $v$ is a root word, at least one coordinate of $v$ is a unit. Let $v_{i'}$ be such a unit coordinate.

We construct the vectors $\beta_j$ as follows:
\[
\beta_j=
\begin{cases}
v_{i'}^{-1} e_{i'}, & \text{if } j=i',\\
e_j + v_{i'}^{-1}(1-v_j)e_{i'}, & \text{if } j\neq i'.
\end{cases}
\]

We verify the required properties.

\medskip
\noindent
\textbf{(1) Dot-product condition.}

If $j=i'$, then
\[
v\cdot \beta_{i'}
= v_{i'} \cdot v_{i'}^{-1}
=1.
\]

If $j\neq i'$, then
\[
v\cdot \beta_j
= v_j + v_{i'} \cdot v_{i'}^{-1}(1-v_j)
= v_j + (1-v_j)
=1.
\]

Thus, $v\cdot \beta_j=1$ for all $j$.

\medskip
\noindent
\textbf{(2) Weight condition.}

For $j=i'$, the vector $\beta_{i'}$ has exactly one nonzero coordinate, hence
\[
w(\beta_{i'})=1.
\]

For $j\neq i'$, the vector $\beta_j$ has at most two nonzero coordinates (positions $j$ and $i'$), hence
\[
1\leq w(\beta_j)\leq 2.
\]

\medskip
\noindent
\textbf{(3) Basis property.}

The transition matrix from the standard basis $\{e_1,\dots,e_m\}$ to $\{\beta_1,\dots,\beta_m\}$ is triangular after reordering indices so that $i'$ comes first. Its diagonal entries are $v_{i'}^{-1}$ and $1$ elsewhere. Since $v_{i'}^{-1}$ is a unit, the determinant is a unit. Therefore, the vectors $\{\beta_1,\dots,\beta_m\}$ form a basis of $\mathrm{GR}(p^n,l)^m$.
\end{proof}

\medskip

We now turn to the complementary situation where the vector $v$ is
\emph{not} a root word. In this case no coordinate of $v$ is a unit,
and therefore each nonzero component is divisible by $p$.
The construction becomes more delicate, since one must preserve
full support while reproducing exactly the coordinates of $v$
through suitable dot-product conditions.

\begin{lemma}\label{Lemma 21}
Let
\[
v=(v_1,v_2,\cdots,v_m)\in \mathrm{GR}(p^n,l)^m\setminus\{0\}
\]
be not a root word and assume that $v_i\neq 0$ for all $i$,
with $v_i=p^{r_i}u_i$, where $u_i$ is a unit of $\mathrm{GR}(p^n,l)$.
Then there exists a set $\{\beta_1,\beta_2,\cdots,\beta_m\}$
forming a basis of $\mathrm{GR}(p^n,l)^m$ such that
\[
w(\beta_i)= m
\quad\text{and}\quad
v\cdot \beta_i=p^{r_i}u_i=v_i
\]
for all $i$.
\end{lemma}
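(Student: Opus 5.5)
The plan is to perturb the standard basis by a single full-support vector taken from $v^{\perp}$. If $\beta_i = e_i + z$ with $v\cdot z = 0$, then $v\cdot\beta_i = v\cdot e_i = v_i = p^{r_i}u_i$ automatically. So the problem reduces to finding one vector $z\in O(v)$ all of whose coordinates are nonzero, and then checking that the perturbed family is still a basis.

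\textbf{Step 1 (a full-support orthogonal vector).} Since $v$ is not a root word, every $r_i\ge 1$. Put $r=\min_i r_i$, so $1\le r\le n-1$, and set
\[
z = p^{\,n-r}(1,1,\dots,1).
\]
Then $v\cdot z=\sum_i p^{\,r_i+n-r}u_i=0$, because $r_i+n-r\ge n$ for every $i$. Each coordinate of $z$ equals $p^{\,n-r}$, which is nonzero because $n-r\le n-1$.

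\textbf{Step 2 (the vectors and their weights).} Define $\beta_i=e_i+z$ for $1\le i\le m$. Its $j$-th coordinate is $p^{\,n-r}\neq 0$ for $j\ne i$. Its $i$-th coordinate is $1+p^{\,n-r}$, which is $\equiv 1\pmod p$ and hence a unit, in particular nonzero. Thus $w(\beta_i)=m$. Moreover $v\cdot\beta_i=v_i+v\cdot z=v_i=p^{r_i}u_i$, as required.

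\textbf{Step 3 (basis property).} The matrix with columns $\beta_i$ is $I_m+z\mathbf{1}^T$, a rank-one perturbation of the identity. Its determinant is therefore
\[
\det(I_m+z\mathbf 1^T)=1+\mathbf 1^T z = 1+m\,p^{\,n-r}.
\]
This is $\equiv 1\pmod p$, so it is a unit. By the criterion recalled before Lemma~\ref{Lemma 19} (a square matrix over $\mathrm{GR}(p^n,l)$ is invertible if and only if its determinant is a unit, equivalently it has full McCoy rank), the $\beta_i$ form a basis of $\mathrm{GR}(p^n,l)^m$.

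The only real obstacle is Step 1: producing an element of $O(v)$ with no zero coordinate. A plain unit vector cannot work, since $v\cdot z=0$ with $z$ having unit entries is generally impossible. The fix is to exploit that $v$ lies in $p^{r}\mathrm{GR}(p^n,l)^m$, so it is annihilated by $p^{\,n-r}$. Because $r\ge 1$, the vector $p^{\,n-r}\mathbf 1$ is still nonzero in every coordinate. Once this is in place, Steps 2 and 3 are routine, and no hypothesis such as $q>3$ is needed.
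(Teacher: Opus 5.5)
Your proof is correct, and it takes a genuinely different (and somewhat cleaner) route than the paper.

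The paper perturbs $e_i$ by $p^{r_i}\sum_{j\neq i}e_j$, a vector that is \emph{not} orthogonal to $v$. It computes $v\cdot\beta_i'=p^{r_i}(u_i+p\,w_i)=p^{r_i}u$ for some unit $u$ depending on $i$. It then rescales, $\beta_i=u^{-1}u_i\beta_i'$, to land exactly on $v_i$. Linear independence comes from the reductions mod $p$, which are unit multiples of $e_i$.

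You instead add a single vector $z=p^{\,n-r}\mathbf{1}\in O(v)$ to every $e_i$. So $v\cdot\beta_i=v_i$ holds on the nose with no renormalization. Invertibility follows from $\det(I_m+z\mathbf{1}^T)=1+m\,p^{\,n-r}$, or simply from the fact that this matrix reduces to $I_m$ modulo $p$.

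What the paper's construction buys in return is extra structure that the statement of Lemma~\ref{Lemma 21} does not record but that the paper uses later. Each of its $\beta_i$ has exactly one unit coordinate, and all other coordinates are $p^{r_i}$ times a unit, i.e.\ their valuation matches that of $v_i$. Case~3 of Theorem~\ref{Theorem 43} relies on exactly this when it invokes condition (4) to obtain $f(\beta_i)=p^{r_i}$ (directly in subcase (i)). Your off-diagonal entries all have valuation $n-r$, so condition (4) would give $f(\beta_i)=p^{\,n-r}$. That synchronizes with $v_i$ only when $r_i=n-r$. So your basis proves the lemma as stated but could not simply be substituted into that later argument.

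A minor aside: your remark that an all-unit vector in $O(v)$ ``cannot work'' is true for some $v$, for instance when the minimal valuation is attained at only one coordinate. It is not true in general: for $v=(p,p)$ the vector $z=(1,-1)$ has unit entries and is orthogonal to $v$. Nothing in your proof depends on this remark.
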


\begin{proof}
Since $v$ is not a root word, every coordinate of $v$ is divisible by $p$.
Write $v_i=p^{r_i}u_i$ with $u_i$ a unit.

For each $i$, define
\[
\beta_{i}'=e_i+\sum_{\substack{j=1 \\ j\neq i}}^m p^{r_i}e_j.
\]
A direct computation gives
\[
v\cdot \beta_i'
=
v_i + \sum_{j\neq i} v_j p^{r_i}.
\]

Since each $v_j$ is divisible by $p$, the additional summation term
lies in $p^{r_i+1}\mathrm{GR}(p^n,l)$.
Thus
\[
v\cdot \beta_i'
=
p^{r_i}u_i + p^{r_i+1}w_i
=
p^{r_i}u
\]
for some unit $u$ of $\mathrm{GR}(p^n,l)$
(because $u_i$ is a unit and the perturbation term has strictly higher $p$-adic valuation).

Now define
\[
\beta_i = u^{-1}u_i \beta_i'.
\]
Then
\[
v\cdot \beta_i
=
u^{-1}u_i \, (v\cdot \beta_i')
=
u^{-1}u_i \, p^{r_i}u
=
p^{r_i}u_i
=
v_i.
\]

Each vector $\beta_i$ has all coordinates nonzero,
hence $w(\beta_i)=m$.

To prove that $\{\beta_1,\dots,\beta_m\}$ forms a basis,
we invoke Theorem 2.7 of \cite{NortonSalagean},
which asserts that a family of vectors over $\mathrm{GR}(p^n,l)$
is linearly independent if and only if their reductions modulo $p$
are linearly independent over the residue field.
Since the reductions $\{\bar{\beta}_1,\dots,\bar{\beta}_m\}$
form a linearly independent set in
$\mathrm{GR}(p^n,l)/\langle p\rangle$,
the original set is linearly independent and therefore a basis.

This completes the proof.
\end{proof}

\medskip

We now address the most general situation, where $v$ is not a root word
but may contain zero coordinates.
The construction combines Lemma~\ref{Lemma 21}
with the full-support basis provided by Lemma~\ref{Lemma 19}.
The main difficulty lies in simultaneously controlling:

\begin{itemize}
\item the dot-product constraints,
\item the full support condition,
\item and the presence of sufficiently many unit entries
      in those vectors orthogonal to zero coordinates of $v$.
\end{itemize}

\begin{lemma}\label{Lemma 23}
Let $\mathrm{GR}(p^n,l)$ be a Galois ring with $q>3$, and let
\[
v=(v_1,v_2,\cdots,v_m)\in \mathrm{GR}(p^n,l)^m\setminus\{0\}
\]
be not a root word.
Assume that $v$ has $r>1$ zero components at positions
$i_1,i_2,\dots,i_r$,
and that the remaining components at positions
$j_1,\dots,j_{m-r}$ satisfy
\[
v_{j_t}=p^{r_{j_t}}u_{j_t}.
\]

Then there exists a set
$A=\{\beta_1,\beta_2,\cdots,\beta_m\}$
forming a basis of $\mathrm{GR}(p^n,l)^m$ such that:

\begin{itemize}
\item $w(\beta_i)=m$ for all $i$;
\item $v\cdot \beta_{i_k}=0=v_{i_k}$ for $1\le k\le r$;
\item $v\cdot \beta_{j_t}=p^{r_{j_t}}u_{j_t}=v_{j_t}$ for $1\le t\le m-r$;
\item each $\beta_{i_k}$ contains at least two unit components.
\end{itemize}
\end{lemma}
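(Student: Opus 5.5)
The plan is to split the coordinates into the zero positions $Z=\{i_1,\dots,i_r\}$ and the nonzero positions $N=\{j_1,\dots,j_{m-r}\}$, and to build the two kinds of basis vectors separately. Two facts drive the whole construction. First, entries placed at positions in $Z$ never affect $v\cdot\beta$. Second, every $v_{j_t}$ lies in $p\,\mathrm{GR}(p^n,l)$, so an entry equal to $p^{n-1}$ at a position in $N$ is killed by $v_{j_t}$. Note that $N\neq\emptyset$ because $v\neq 0$, and $n\ge 2$ because $v$ is a nonzero non-root word.

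\emph{Vectors for the zero positions.} I apply the construction in the proof of Lemma~\ref{Lemma 19} in dimension $r$ (here $r\ge 2$ is used). I choose $a$ with $a\not\equiv 0,1,r \pmod p$; this is possible since $q>3$. Let $\gamma_1,\dots,\gamma_r\in\mathrm{GR}(p^n,l)^r$ be the columns of $G_r=aI_r-A_r$. These columns have all entries units, and $\det G_r=a^{r-1}(a-r)$ is a unit. I then define $\beta_{i_k}$ to have the entries of $\gamma_k$ at the positions of $Z$ and the entry $p^{n-1}$ at every position of $N$. Then
\[
v\cdot\beta_{i_k}=\sum_{t}p^{n-1}v_{j_t}=0=v_{i_k}.
\]
Moreover $w(\beta_{i_k})=m$, and $\beta_{i_k}$ has $r\ge 2$ unit components.

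\emph{Vectors for the nonzero positions.} I follow Lemma~\ref{Lemma 21}. For $j\in N$, set
\[
\beta_j'=e_j+\sum_{j'\in N\setminus\{j\}}p^{r_j}e_{j'}+\sum_{i\in Z}e_i .
\]
Then $v\cdot\beta_j'=p^{r_j}u_j+p^{r_j}\sum_{j'\neq j}v_{j'}$. Since each $v_{j'}\in p\,\mathrm{GR}(p^n,l)$, this equals $p^{r_j}u$ for some unit $u$. I then put $\beta_j=u^{-1}u_j\beta_j'$, which gives $v\cdot\beta_j=v_j$. All entries of $\beta_j$ are nonzero, so $w(\beta_j)=m$. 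One small point to check is that the entries $p^{r_j}$ and their unit multiples are nonzero; this holds because $v_j=p^{r_j}u_j\neq0$ forces $r_j<n$.

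\emph{Basis property.} I would reduce modulo $p$ and use the criterion of \cite{NortonSalagean} recalled in Section~\ref{Section 4}, namely that a square matrix is invertible iff its reduction mod $p$ has full rank. Order the coordinates as $(Z,N)$ and write the $\beta$'s as rows.
\begin{itemize}
\item The rows $\beta_{i_k}$ reduce to $(\overline{\gamma_k}\mid 0)$, since $p^{n-1}\equiv 0$ for $n\ge2$.
\item The rows $\beta_j$ reduce to (unit multiple of) $(\mathbf 1\mid \overline{e_j})$, since $p^{r_j}\equiv0$ because $r_j\ge1$.
\end{itemize}
The reduced matrix is therefore block triangular with diagonal blocks $\overline{G_r}^{T}$ and a diagonal matrix of nonzero scalars. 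Its determinant is a nonzero multiple of $\det\overline{G_r}\neq 0$, so $A$ is a basis.

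The main obstacle I expect is the zero-position vectors. They must simultaneously be orthogonal to $v$, have full support, and contain two unit entries, without destroying independence. The choice of $p^{n-1}$ on $N$ together with the Lemma~\ref{Lemma 19} block on $Z$ is designed to resolve all of these at once. The point to verify carefully is that $p^{n-1}v_{j_t}=0$ for every $t$, which holds because each $v_{j_t}\in p\,\mathrm{GR}(p^n,l)$.
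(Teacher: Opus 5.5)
Your proof is correct and follows essentially the same route as the paper. It uses Lemma~\ref{Lemma 21} on the nonzero coordinates and the Lemma~\ref{Lemma 19} matrix (in dimension $r$) on the zero coordinates, pads the complementary coordinates without changing the dot products, and gets linear independence by reduction modulo $p$.

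Your version is more explicit than the paper's, which only says the padding entries $\delta_i$ have ``sufficiently large $p$-power'' entries. Your choice $p^{n-1}$ (valid since $n\ge 2$ and every $v_{j_t}\in p\,\mathrm{GR}(p^n,l)$) makes orthogonality and full support checkable. Your unit padding on the zero positions, in place of the paper's entries $p^{r_i}u_i$, gives a block-triangular rather than block-diagonal reduced matrix, and the determinant check still goes through.
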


\begin{proof}
For clarity, reorder coordinates so that
\[
v=
(p^{r_1}u_1,\dots,p^{r_{m-r}}u_{m-r},
\underbrace{0,\dots,0}_{r\text{ times}}).
\]

Write $v=(v',v'')$, where
\[
v'\in \mathrm{GR}(p^n,l)^{m-r},
\qquad
v''\in \mathrm{GR}(p^n,l)^r.
\]

\medskip
\noindent
\textbf{Step 1: Construction on the nonzero part.}

By Lemma~\ref{Lemma 21},
there exists a basis
$\{\beta_1',\dots,\beta_{m-r}'\}$
of $\mathrm{GR}(p^n,l)^{m-r}$ such that
\[
v'\cdot \beta_i'=v_i
\quad\text{and}\quad
w(\beta_i')=m-r.
\]

\medskip
\noindent
\textbf{Step 2: Construction on the zero part.}

By Lemma~\ref{Lemma 19},
since $q>3$,
there exists a basis
$\{\beta_{m-r+1}',\dots,\beta_m'\}$
of $\mathrm{GR}(p^n,l)^r$
whose vectors have full support and whose coordinates are units.

\medskip
\noindent
\textbf{Step 3: Assembly.}

We now define vectors in $\mathrm{GR}(p^n,l)^m$.

For $1\le i\le m-r$, set
\[
\beta_i=(\beta_i',\gamma_i),
\]
where $\gamma_i$ is chosen so that:

\begin{itemize}
\item all its coordinates are equal to $p^{r_i}u_i$,
\item hence $w(\beta_i)=m$.
\end{itemize}

For $m-r+1\le i\le m$, define
\[
\beta_i=(\delta_i,\beta_i'),
\]
where $\delta_i$ is chosen with sufficiently large $p$-power entries
to ensure full support while preserving the condition
\[
v\cdot \beta_i=0.
\]

\medskip
\noindent
\textbf{Verification.}

By construction,
\[
v\cdot \beta_i=v_i
\quad\text{for all } i.
\]

Moreover, each $\beta_i$ has full support,
hence $w(\beta_i)=m$.

For those indices corresponding to zero coordinates of $v$,
the vectors $\beta_{i_k}$ inherit at least two unit components
from the basis constructed via Lemma~\ref{Lemma 19}.

Finally, linear independence follows from Theorem 2.7 of
\cite{NortonSalagean}:
the reductions modulo $p$ form a linearly independent family
over the residue field, hence the original family
is linearly independent in $\mathrm{GR}(p^n,l)^m$.

This completes the proof.
\end{proof}

\medskip

We now treat the remaining extremal configuration, namely the case
where $v$ is not a root word and has exactly one zero coordinate.
This situation requires a refined construction, since we must
produce exactly one basis vector orthogonal to $v$, while preserving
full support and ensuring the presence of sufficiently many unit
components.

\begin{lemma}\label{Lemma 24}
Let
\[
v=(v_1,v_2,\cdots,v_m)\in \mathrm{GR}(p^n,l)^m\setminus\{0\}
\]
be not a root word, and assume that $v_i=0$ for exactly one index,
say $i_1$. For all $k\neq i_1$, suppose
\[
v_k=p^{r_k}u_k,
\]
where each $u_k$ is a unit of $\mathrm{GR}(p^n,l)$.

Then there exists a set
\[
A=\{\beta_1,\beta_2,\cdots,\beta_m\}
\]
forming a basis of $\mathrm{GR}(p^n,l)^m$ such that:

\begin{itemize}
\item $w(\beta_i)=m$ for all $i$;
\item $v\cdot \beta_k=p^{r_k}u_k=v_k$ for all $k\neq i_1$;
\item $v\cdot \beta_{i_1}=0$;
\item $\beta_{i_1}$ contains at least two unit components.
\end{itemize}
\end{lemma}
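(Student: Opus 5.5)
Reorder coordinates so that $i_1=m$, and write $v=(v',0)$ with $v'=(v_1,\dots,v_{m-1})\in\mathrm{GR}(p^n,\ell)^{m-1}$. Every coordinate of $v'$ is nonzero and divisible by $p$, so $v'$ satisfies the hypotheses of Lemma~\ref{Lemma 21}. The plan is to build the $m-1$ ``non-orthogonal'' vectors from Lemma~\ref{Lemma 21}, build one orthogonal vector by hand, and then tune a single free coordinate to make the whole family a basis.

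\textbf{Step 1: the vectors $\beta_k$ for $k\neq m$.} Lemma~\ref{Lemma 21} gives a basis $\{\beta'_1,\dots,\beta'_{m-1}\}$ of $\mathrm{GR}(p^n,\ell)^{m-1}$ with $w(\beta'_k)=m-1$ and $v'\cdot\beta'_k=v_k$. Set $\beta_k=(\beta'_k,1)$. Since $v_m=0$, the last coordinate does not affect the inner product, so $v\cdot\beta_k=v_k$. Moreover $w(\beta_k)=m$.

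\textbf{Step 2: the orthogonal vector $\beta_m$.} Look for $\beta_m=(\delta,\lambda)$ with $\lambda$ a unit still to be fixed. We need $v'\cdot\delta=0$, all entries of $\delta$ nonzero, and at least one entry of $\delta$ a unit; together with $\lambda$ this gives the required two unit components. The construction of $\delta$ is as follows.
\begin{itemize}
\item Choose an index $s$ for which $r_s$ is minimal among $r_1,\dots,r_{m-1}$.
\item Put $\delta_j=1$ for every $j\neq s$.
\item Solve $p^{r_s}u_s\delta_s=-\sum_{j\neq s}v_j$. The right-hand side lies in $p^{r_s}\mathrm{GR}(p^n,\ell)$ by the minimality of $r_s$, so a solution exists.
\item The solutions form a coset of $\mathrm{Ann}(p^{r_s})=p^{\,n-r_s}\mathrm{GR}(p^n,\ell)$, which has $q^{r_s}\ge q>3$ elements. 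Hence a nonzero $\delta_s$ can be selected.
\end{itemize}
This yields $v\cdot\beta_m=0$ and $w(\beta_m)=m$. The unit count requires $m-1\ge 2$, i.e.\ $m\ge 3$, so that some $\delta_j$ with $j\neq s$ equals $1$.

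For $m=2$ the last requirement cannot hold. Orthogonality to $(p^{r}u,0)$ forces the first entry of $\beta_{i_1}$ into $p^{\,n-r}\mathrm{GR}(p^n,\ell)$, so $\beta_{i_1}$ has only one unit coordinate. I would therefore state this restriction explicitly, or read ``two unit components'' in that case as ``one unit component plus full support''.

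\textbf{Step 3: basis property (the main obstacle).} By Corollary~2.7 of~\cite{NortonSalagean}, it suffices that $\det[\beta_1,\dots,\beta_m]$ is a unit. Expanding along the last coordinate row, this determinant is affine in $\lambda$:
\[
\det=\lambda\det[\beta'_1,\dots,\beta'_{m-1}]+c,
\]
where $c$ does not depend on $\lambda$. The coefficient $\det[\beta'_1,\dots,\beta'_{m-1}]$ is a unit, because the $\beta'_k$ form a basis.

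Reducing modulo $p$, exactly one residue class of $\lambda$ makes the determinant non-unit. We must also exclude $\lambda\equiv 0$. Since $q>3$, a unit $\lambda$ avoiding both bad classes exists. Changing $\lambda$ affects neither $v\cdot\beta_m$ nor the weights, so all the conditions of the lemma are met. Undoing the coordinate permutation finishes the proof.
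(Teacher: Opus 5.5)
Your Steps 1 and 2 are sound, and you are right about $m=2$. Orthogonality to $(p^{r}u,0)$ forces the first entry of $\beta_{i_1}$ into $p^{\,n-r}\mathrm{GR}(p^n,\ell)$, so the lemma as stated fails there. The paper's proof silently needs $m\ge 3$ too: its Case (i) chooses two distinct indices among the $m-1$ nonzero coordinates. Your single construction of $\beta_{i_1}$ (solve at a coordinate of minimal valuation, put $1$ elsewhere) is also a tidy replacement for the paper's split into equal and distinct valuations.

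The genuine gap is in Step 3. You invoke $q>3$, but Lemma~\ref{Lemma 24}, unlike Lemma~\ref{Lemma 23}, has no hypothesis on $q$, and for $q=2$ the tuning of $\lambda$ can fail. Take $\mathbb{Z}_8=\mathrm{GR}(2^3,1)$, $m=3$, $v=(2,4,0)$.
Every admissible $\beta'_2$ from Lemma~\ref{Lemma 21} reduces to $(0,1)$ modulo $2$, because $2x+4y=4$ forces $x$ even and $\bar\beta'_2\neq 0$. For example $\beta'_1=(1,2)$, $\beta'_2=(4,1)$.
Your recipe gives $s=1$ and $\delta=(\delta_1,1)$ with $\delta_1\in\{2,6\}$. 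Since $\delta_1$ is not a unit, $\lambda$ must be a unit to get two unit components. But modulo $2$ the determinant of $\beta_1,\beta_2,\beta_3$ equals $\bar\lambda-1$ whatever $\beta'_1$ is, so it vanishes at $\bar\lambda=1$, the only nonzero residue of $\mathbb{F}_2$.
Concretely, for $\beta_1=(1,2,1)$, $\beta_2=(4,1,1)$, $\beta_3=(2,1,\lambda)$ the determinant is $\lambda+5\in\mathbb{Z}_8$, which is even for every odd $\lambda$. So no admissible $\lambda$ exists. Your argument therefore proves the lemma only for $q\ge 3$. That suffices inside Theorem~\ref{Theorem 43}, but not for the lemma as stated.

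The obstruction comes from giving $\beta_k$ ($k\ne i_1$) last coordinate $1$, which couples the last row to the others. The paper instead takes $\beta_k=(\beta'_k,p^{r_k}u_k)$.
The appended entry is nonzero, so the weight is still $m$ and $v\cdot\beta_k$ is unchanged. But it lies in $p\,\mathrm{GR}(p^n,\ell)$, so $\bar\beta_k=(\bar\beta'_k,0)$, and these span the hyperplane $x_m=0$ of $\mathbb{F}_q^m$. Any $\beta_{i_1}$ with a unit last coordinate then completes a basis modulo $p$, with no tuning and no condition on $q$.
Make this one change (any nonzero element of $p\,\mathrm{GR}(p^n,\ell)$ works) and take $\lambda=1$. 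Your Step 2 then finishes the proof for all $q$ and all $m\ge 3$.
Also, in Step 2 you only need the coset $\delta_s+p^{\,n-r_s}\mathrm{GR}(p^n,\ell)$ to contain two elements, which holds because $r_s\ge 1$. The hypothesis $q>3$ plays no role there either.
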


\begin{proof}
Without loss of generality, we may reorder coordinates so that
the zero component occurs at the last position:
\[
v=(v_1,v_2,\dots,v_{m-1},0),
\]
where $v_i=p^{r_i}u_i$ for $1\le i\le m-1$.

Write $v=(v',0)$ with
\[
v'=(p^{r_1}u_1,\dots,p^{r_{m-1}}u_{m-1})
\in \mathrm{GR}(p^n,l)^{m-1}.
\]

\medskip
\noindent
\textbf{Step 1: Construction for the nonzero coordinates.}

By Lemma~\ref{Lemma 21}, there exists a basis
$\{\beta_1',\dots,\beta_{m-1}'\}$
of $\mathrm{GR}(p^n,l)^{m-1}$ such that
\[
v'\cdot \beta_i'=v_i
\quad\text{and}\quad
w(\beta_i')=m-1
\]
for $1\le i\le m-1$.

Define
\[
\beta_i=(\beta_i',p^{r_i}u_i),
\qquad
1\le i\le m-1.
\]
Then clearly $w(\beta_i)=m$ and
\[
v\cdot \beta_i
=
v'\cdot \beta_i'
=
v_i.
\]

\medskip
\noindent
\textbf{Step 2: Construction of $\beta_m$ with $v\cdot \beta_m=0$.}

We distinguish two cases depending on the $p$-adic valuations
of the nonzero components of $v$.

\medskip
\noindent
\textbf{Case (i): Equal valuations.}

Suppose
\[
r_1=r_2=\cdots=r_{m-1}=r.
\]

Choose two distinct indices $i$ and $j$ in $\{1,\dots,m-1\}$.
Define $\beta_m$ by
\[
\beta_m=(\gamma_1,\dots,\gamma_{m-1},1),
\]
where:

\begin{itemize}
\item $\gamma_i=1$,
\item $\gamma_j=-u_j^{-1}u_i$,
\item all other $\gamma_k=p^{\,n-r}$.
\end{itemize}

Since $u_i,u_j$ are units, both $1$ and $-u_j^{-1}u_i$ are units.
Hence $\beta_m$ contains at least three unit components
(at positions $i$, $j$, and $m$), and therefore at least two.

Now compute:
\[
v\cdot \beta_m
=
\sum_{k=1}^{m-1} p^r u_k \gamma_k.
\]
The contribution of the $i$- and $j$-coordinates is
\[
p^r u_i - p^r u_i =0,
\]
while all other contributions are multiples of $p^n$
and therefore vanish in $\mathrm{GR}(p^n,l)$.
Thus
\[
v\cdot \beta_m=0.
\]

\medskip
\noindent
\textbf{Case (ii): Distinct valuations.}

Assume there exist indices $i\neq j$ such that
$r_i\neq r_j$.
Without loss of generality, suppose $r_i>r_j$.

Define $\beta_m=(\gamma_1,\dots,\gamma_{m-1},1)$ with:

\begin{itemize}
\item $\gamma_i=1$,
\item $\gamma_j=-u_j^{-1}u_i\,p^{\,r_i-r_j}$,
\item all other $\gamma_k=p^{\,n-r_k}$.
\end{itemize}

Again, $\gamma_i$ and the last coordinate are units,
so $\beta_m$ contains at least two unit components.

A direct computation shows
\[
p^{r_i}u_i\cdot 1
+
p^{r_j}u_j\cdot(-u_j^{-1}u_i p^{\,r_i-r_j})
=
p^{r_i}u_i - p^{r_i}u_i
=0,
\]
and all remaining terms are multiples of $p^n$.
Hence $v\cdot \beta_m=0$.

\medskip
\noindent
\textbf{Step 3: Linear independence.}

The reductions modulo $p$ of
$\{\beta_1,\dots,\beta_m\}$
form a linearly independent set over
$\mathrm{GR}(p^n,l)/\langle p\rangle$,
since:

\begin{itemize}
\item the first $m-1$ vectors reduce to a basis
      coming from Lemma~\ref{Lemma 21},
\item the last vector has a nonzero reduction in the last coordinate.
\end{itemize}

Therefore, by Theorem 2.7 of \cite{NortonSalagean},
the family $\{\beta_1,\dots,\beta_m\}$
is linearly independent in $\mathrm{GR}(p^n,l)^m$,
and hence forms a basis.

\medskip

All required properties are thus satisfied:
\[
w(\beta_i)=m,\qquad
v\cdot \beta_i=v_i,
\]
with exactly one orthogonal vector $\beta_{i_1}$
containing at least two unit components.

This completes the proof.
\end{proof}

\medskip

We now return to the situation where $v$ is a root word and
investigate the structure of its orthogonal module.
In the minimality analysis of linear codes,
the submodule
\[
O(v)=\{\,x\in \mathrm{GR}(p^n,l)^m \mid v\cdot x=0\,\}
\]
plays a decisive role, since it governs the support containment
relations associated with $v$.
Understanding its structure, especially the existence of
small-weight generators, is crucial for controlling minimality.

\begin{lemma}\label{Lemma 25}
Let
\[
v\in \mathrm{GR}(p^n,l)^m\setminus\{0\}
\]
be a root word. Then there exists a set
\[
\{\beta_1,\beta_2,\cdots,\beta_{m-1}\}
\]
forming a basis of $O(v)$ such that
\[
1\leq w(\beta_i)\leq 2
\quad \text{for all } i.
\]
\end{lemma}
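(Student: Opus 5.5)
The plan is to write down the explicit basis of Remark~\ref{Remark 10}, adapted to the position of a unit coordinate of $v$, and to check three things directly: the vectors lie in $O(v)$, they have weight $1$ or $2$, and they generate $O(v)$ freely. Since $v$ is a root word, some coordinate $v_{i'}$ is a unit. For each $j\neq i'$ I would set
\[
\beta_j=e_j-v_{i'}^{-1}v_j\,e_{i'} .
\]
This is Remark~\ref{Remark 10} with the unit moved from position $1$ to position $i'$. The $m-1$ vectors $\{\beta_j\}_{j\neq i'}$, relabelled as $\beta_1,\dots,\beta_{m-1}$, are the proposed basis.

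The first step is membership and weight. We have $v\cdot\beta_j=v_j-v_{i'}v_{i'}^{-1}v_j=0$, so each $\beta_j\in O(v)$. The coordinate of $\beta_j$ at position $j$ equals $1$, so $\beta_j\neq 0$ and $w(\beta_j)\ge 1$. Its only other possibly nonzero coordinate is at position $i'$, so $w(\beta_j)\le 2$; the weight is $1$ exactly when $v_j=0$.

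The second step is spanning. Take any $x\in O(v)$ and form $x'=x-\sum_{j\neq i'}x_j\beta_j$. Then $x'$ is zero outside coordinate $i'$ and still lies in $O(v)$. Hence $v_{i'}x'_{i'}=0$, and since $v_{i'}$ is a unit, $x'_{i'}=0$. Therefore $x=\sum_{j\neq i'}x_j\beta_j$.

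The third step is linear independence over $\mathrm{GR}(p^n,l)$. Suppose $\sum_{j\neq i'}c_j\beta_j=0$. The coordinate at position $j$ gives $c_j=0$ directly, with no appeal to McCoy rank. Alternatively, one can invoke Proposition~\ref{Proposition 9}: $O(v)$ is free of rank $m-1$, and $m-1$ generators of a free module of rank $m-1$ over a local ring form a basis by Nakayama's Lemma. Either argument shows $\{\beta_1,\dots,\beta_{m-1}\}$ is a basis of $O(v)$.

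I expect no genuine obstacle. The only point needing care is the spanning step, where dividing by $v_{i'}$ uses that this coordinate is a unit rather than merely nonzero. This is exactly where the root-word hypothesis enters, and it is why the construction fails for non-root words, as Remark~\ref{Remark 12} shows.
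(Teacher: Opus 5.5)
Your proposal is correct and follows essentially the same route as the paper, which derives the lemma from the explicit weight-$\le 2$ generating set of Remark~\ref{Remark 10} together with the freeness of $O(v)$ given by Proposition~\ref{Proposition 9}. Your direct verification of spanning and independence, with the unit coordinate placed at an arbitrary position $i'$, makes self-contained what the paper simply takes from those statements, which it leaves unproved.
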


\begin{proof}
The result follows directly from Proposition~\ref{Proposition 9}
and Remark~\ref{Remark 10}, where the structure of $O(v)$
for a root word is explicitly determined.

In particular, it is shown there that $O(v)$ admits a generating set
consisting of vectors supported on at most two coordinates.
Since $O(v)$ is a free $\mathrm{GR}(p^n,l)$-module of rank $m-1$,
one can extract from this generating set a basis
whose elements all have Hamming weight at most two.

This completes the proof.
\end{proof}

\medskip

The next lemma refines this structure by constructing
a family of vectors contained in a scaled orthogonal module
that simultaneously satisfy a prescribed dot-product condition.
This refinement will later allow us to control the interaction
between $v$ and carefully chosen full-support vectors.

\begin{lemma}\label{Lemma 26}
Let
\[
v\in \mathrm{GR}(p^n,l)^m\setminus\{0\}
\]
be a root word. Then there exists a set
\[
\{\beta_1,\beta_2,\cdots,\beta_m\}
\subseteq O(p^{\,n-r}v),
\]
with $r\ge 1$, such that:
\[
1\leq w(\beta_i)\leq 2
\quad \text{and} \quad
v\cdot \beta_i=p^r
\]
for all $i$.
\end{lemma}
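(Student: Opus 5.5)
Fix $r$ with $1\le r\le n-1$ (the statement is vacuous or degenerate otherwise). The plan is to rescale the construction of Lemma~\ref{Lemma 20}, which gives a basis $\{\gamma_1,\dots,\gamma_m\}$ of $\mathrm{GR}(p^n,l)^m$ with $1\le w(\gamma_i)\le 2$ and $v\cdot\gamma_i=1$. Since $v$ is a root word, choose a unit coordinate $v_{i'}$. Following that lemma, I would put
\[
\beta_{i'}=p^{r}v_{i'}^{-1}e_{i'},\qquad
\beta_j=e_j+v_{i'}^{-1}(p^{r}-v_j)e_{i'}\quad (j\neq i').
\]

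The first step is the dot-product computation, which is the same as in Lemma~\ref{Lemma 20} with $1$ replaced by $p^r$. We have $v\cdot\beta_{i'}=p^r$, and for $j\neq i'$ we get $v\cdot\beta_j=v_j+(p^r-v_j)=p^r$. The weight bound is read off the supports. $\beta_{i'}$ has weight exactly $1$, because $p^r v_{i'}^{-1}\neq 0$ when $r<n$. Each $\beta_j$ is supported in $\{j,i'\}$ and has a $1$ in position $j$, so $1\le w(\beta_j)\le 2$.

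The second step is membership in $O(p^{n-r}v)$. This follows from $(p^{n-r}v)\cdot\beta_i=p^{n-r}(v\cdot\beta_i)=p^{n-r}p^r=p^n=0$ in $\mathrm{GR}(p^n,l)$. So every $\beta_i$ lies in $O(p^{n-r}v)$, which is the module described in Remark~\ref{Remark 18} with $r$ replaced by $n-r$.

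The main point needing care is what "set" should satisfy beyond the displayed conditions. If linear independence is expected, note that $\beta_{i'}$ is divisible by $p$, so the family cannot be a basis of $\mathrm{GR}(p^n,l)^m$. If one wants $\beta_{i'}$ to be a root word, use $\beta_{i'}=e_{i'}+v_{i'}^{-1}(p^r-v_j)e_{i'}\dots$ is not available. Instead I would take $\beta_{i'}=e_k+v_{i'}^{-1}(p^r-v_k)e_{i'}+\ldots$ only when $m\ge 2$, noting that the $m-1$ vectors $\beta_j$ ($j\neq i'$) already reduce mod $p$ to independent vectors, since they contain the unit entries $e_j$. The statement as written only requires the weight and dot-product conditions, so the construction above suffices, and the remaining $m$-th vector $\beta_{i'}$ is the only delicate one. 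I expect this bookkeeping, together with excluding $r=n$ where $\beta_{i'}$ would vanish, to be the only obstacle. Everything else is a one-line computation.
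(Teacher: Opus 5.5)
Your construction is correct and is exactly the paper's: with $i'=1$ and $v_1=u$, your vectors are the paper's $\beta_1=(p^ru^{-1},0,\dots,0)$ and $\beta_j=(p^ru^{-1}-v_ju^{-1})e_1+e_j$, and the dot-product and weight checks are the same. The one difference is that you get membership in $O(p^{n-r}v)$ directly from $(p^{n-r}v)\cdot\beta_i=p^{n-r}p^r=0$, rather than through the generating set of Remark~\ref{Remark 12}; this is cleaner, and your garbled last-paragraph aside about making $\beta_{i'}$ a root word is unnecessary and can be dropped.
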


\begin{proof}
Since $v$ is a root word, at least one coordinate of $v$ is a unit.
Without loss of generality, write
\[
v=(u,v_2,v_3,\dots,v_m),
\]
where $u$ is a unit of $\mathrm{GR}(p^n,l)$.

By Proposition~\ref{Proposition 11}
and Remark~\ref{Remark 12},
a smallest generating set of $O(p^{\,n-r}v)$ is given by
\[
\Big\{
(-v_2u^{-1},1,0,\dots,0),\;
(-v_3u^{-1},0,1,\dots,0),\;
\dots,\;
(-v_mu^{-1},0,\dots,0,1),\;
(p^ru^{-1},0,\dots,0)
\Big\}.
\]

Each of these vectors has Hamming weight at most two.

From this generating set, consider instead the family
\[
\beta_i=
(p^ru^{-1}-v_iu^{-1},e_i)
\quad (2\le i\le m),
\]
together with
\[
\beta_1=(p^ru^{-1},0,\dots,0).
\]

Since linear combinations within a generating set preserve
the generated module, this new family also forms
a smallest generating set of $O(p^{\,n-r}v)$.

Now compute, for $2\le i\le m$,
\[
v\cdot \beta_i
=
u(p^ru^{-1}-v_iu^{-1})
+
v_i
=
p^r - v_i u^{-1}u + v_i
=
p^r.
\]

Similarly,
\[
v\cdot \beta_1
=
u(p^ru^{-1})
=
p^r.
\]

Thus,
\[
v\cdot \beta_i=p^r
\quad \text{for all } i,
\]
and clearly, each $\beta_i$ has weight 1 or 2.

This completes the proof.
\end{proof}

\begin{theorem}\label{Theorem 43}
Let $q>3$, $m\geq 3$, and
$f:\mathrm{GR}(p^n,l)^m\longrightarrow \mathrm{GR}(p^n,l)$
be a function satisfying the following conditions:

\begin{enumerate}
\item If $1\leq w(x)\leq 2$, then
      $f(ux)=f(x)\in U(\mathrm{GR}(p^n,l))$
      for any $u\in U(\mathrm{GR}(p^n,l))$.

\item For root words $x,x'$ with
      $1\leq w(x),w(x')\leq 2$,
      we have $f(x)=f(x')$ whenever
      $\bar{x}=\bar{x'}$ in
      $\big(\mathrm{GR}(p^n,l)/\langle p\rangle\big)^m$.

\item If $w(x)=m$ and $x$ contains at least two unit components,
      then $f(x)=0$.

\item If $w(x)=m$ and $x$ contains exactly one unit component
      and all other components are of the form $p^ru_j$,
      then $f(x)=p^r$.
\end{enumerate}

Then the code $C_f$ is minimal.
\end{theorem}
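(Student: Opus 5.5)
The proof will go through the reduction to root words. By Theorem~\ref{Theorem 8}, since $C_f$ has dimension $m+1\ge 4\ge 2$, it suffices to show that every root word of $C_f$ is minimal. The discussion at the start of Section~\ref{Section 4} lists five types of root words $c(a,v)$. We handle each type with the matching criterion among Propositions~\ref{Proposition 15}--\ref{Proposition 18}, and supply the required family $\{\beta_1,\dots,\beta_m\}$ from Lemmas~\ref{Lemma 19}--\ref{Lemma 26}. Throughout, $\alpha_\beta=(f(\beta),\beta)$, so linear independence of $\{\alpha_{\beta_i}\}$ follows from independence of $\{\beta_i\}$ via McCoy rank and Remark~\ref{Remark 14}.

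\textbf{Type 1: $c(u,0)$.} By Proposition~\ref{Proposition 15} we need a basis on which $f$ vanishes. Lemma~\ref{Lemma 19} (this is where $q>3$ enters) gives a basis of full-weight vectors with all coordinates units. Condition~3 then gives $f(\beta_i)=0$.

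\textbf{Type 2: $c(u,v)$ with $v$ a root word.} Put $w=-u^{-1}v$, which is again a root word. We need a basis with $f(\beta_i)=w\cdot\beta_i$. Take the basis of Lemma~\ref{Lemma 20} for $w$, which has weights at most $2$ and $w\cdot\beta_i=1$, and rescale each vector by a unit $c_i$. By condition~1, $f(c_i\beta_i)=f(\beta_i)$ is a unit, so choosing $c_i=f(\beta_i)$ makes $w\cdot(c_i\beta_i)=c_i=f(c_i\beta_i)$.

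\textbf{Type 3: $c(u,v)$ with $v$ nonzero but not a root word.} Here $w=-u^{-1}v$ is also a non-root word. If $w$ has full support, Lemma~\ref{Lemma 21} gives a basis of full weight with $w\cdot\beta_i=w_i=p^{r_i}u_i$. Its vectors have exactly one unit coordinate and the others of the form $p^{r_i}(\cdot)$. Condition~4 then gives $f(\beta_i)=p^{r_i}$, and a further unit rescaling, in the spirit of the construction in Lemma~\ref{Lemma 21}, should match $p^{r_i}u_i$. If $w$ has zero coordinates, use Lemmas~\ref{Lemma 23} and~\ref{Lemma 24}. The vectors orthogonal to $w$ have at least two unit components, so condition~3 gives $f=0=w\cdot\beta$. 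The remaining vectors are treated as in the full-support case.

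\textbf{Types 4 and 5.} For $c(0,v)$ with $v$ a root word, Proposition~\ref{Proposition 17} asks for $\beta_i\in O(v)$ with $\{\alpha_{\beta_i}\}$ independent. Lemma~\ref{Lemma 25} gives $m-1$ independent vectors of weight at most $2$ in $O(v)$. An $m$-th vector is needed for which the $f$-coordinate of $\alpha$ supplies the missing rank. For this I would use a vector of weight at most $2$ whose reduction mod $p$ coincides with that of some $\beta_i$ but which is not in $\langle\beta_i\rangle$ (for example one taken from the $O(p^{n-r}v)$ family of Lemma~\ref{Lemma 26}), together with condition~2. For $c(p^ru,v)$, Proposition~\ref{Proposition 18} requires $p^rf(\beta_i)=w\cdot\beta_i$. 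Lemma~\ref{Lemma 26}, applied to $w$, gives weight-$\le 2$ vectors with $w\cdot\beta_i=p^r$. Unit rescaling via condition~1 turns this into $p^rf(\beta_i)$, and independence of the $\alpha_{\beta_i}$ is checked mod $p$.

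\textbf{Main obstacle.} I expect the hardest part to be the independence verifications in Types 4 and 5, where the vectors $\beta_i$ themselves are not a basis (they lie in $O(v)$ or $O(p^{n-r}v)$). There one must show that the extra coordinate $f(\beta_i)$ restores McCoy rank $m$. I would try to force a unit $m\times m$ minor of the matrix with columns $\alpha_{\beta_i}$, using that $f$ takes unit values on weight-$\le 2$ vectors while the $\beta$-part is degenerate only in one direction. Condition~2 is what prevents two vectors with equal reductions from carrying different $f$-values that would break this. A secondary difficulty in Type 3 is exactly matching the unit factor $u_i$ against the normalized value $f=p^{r_i}$.
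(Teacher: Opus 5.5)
Your overall architecture is the paper's. You reduce to root words by Theorem~\ref{Theorem 8}, split into the five types, apply Propositions~\ref{Proposition 15}--\ref{Proposition 18}, and take bases from Lemmas~\ref{Lemma 19}--\ref{Lemma 26} with unit rescaling. Types 1, 2, 3 and 5 go through as you sketch. In Type 3, rescale the Lemma~\ref{Lemma 21} vector by $u_i^{-1}$. The rescaled vector still has one unit coordinate and the others of valuation $r_i$, so condition~4 still gives $f=p^{r_i}$, and now $w\cdot\gamma_i=p^{r_i}$. In Type 5, the weight-one vector of Lemma~\ref{Lemma 26} lies in $p\,\mathrm{GR}(p^n,\ell)^m$ but has a unit $f$-value by condition~1, and this is what supplies the missing rank. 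Condition~2 is never needed there.

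The genuine gap is Type 4.

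First, Proposition~\ref{Proposition 17} requires every $\beta_i$ to lie in $O(v)$. The vectors of Lemma~\ref{Lemma 26} satisfy $v\cdot\beta=p^r\neq 0$, so your suggested example is not admissible.

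More fundamentally, the mechanism you propose cannot work. Suppose $\beta_m\in O(v)$ has weight at most $2$ and $\overline{\beta_m}=\overline{\beta_i}$ for a basis vector $\beta_i$ of Lemma~\ref{Lemma 25}, which is a root word. Then $\beta_m$ is also a root word, condition~2 forces $f(\beta_m)=f(\beta_i)$, and hence $\overline{\alpha_{\beta_m}}=\overline{\alpha_{\beta_i}}$. McCoy rank is the rank of the reduction mod $p$, so your family has McCoy rank at most $m-1$. It therefore cannot generate the free rank-$m$ module $O((0,v))$. Condition~2 is precisely the obstruction here, not the remedy; the paper notes it can be dropped altogether.

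The missing idea is to play the scale invariance in condition~1 against the linearity of the $x$-part. Take $\beta_m=a\beta_1$ with $a$ a unit and $a\not\equiv 1\pmod p$, which exists since $q>2$. Then $f(a\beta_1)=f(\beta_1)$ is a unit, so $\alpha_{a\beta_1}-a\,\alpha_{\beta_1}=\big((1-a)f(\beta_1),0\big)$ has a unit first coordinate. Since $\overline{\beta_1},\dots,\overline{\beta_{m-1}}$ are independent, the reductions of $\alpha_{\beta_1},\dots,\alpha_{\beta_{m-1}},\alpha_{a\beta_1}$ are independent. This is the paper's choice.

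Alternatively, for $n\ge 2$ you could take $\beta_m=p^{n-1}\beta_1\in O(v)$. It has weight at most $2$, zero reduction, and unit $f$-value by condition~1, so this is the Type 5 mechanism transplanted into $O(v)$.
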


\begin{proof}

By Remark~\ref{Remark 14},
$C_f$ is a $[q^{nm}-1,m+1]$ linear code.
By Theorem~\ref{Theorem 8},
$C_f$ is minimal if and only if all its root words are minimal.

The root words of $C_f$ fall into five types:

\begin{enumerate}
\item $c(u,0)$, where $u$ is a unit;
\item $c(u,v)$, where $u$ is a unit and $v$ is a root word;
\item $c(u,v)$, where $u$ is a unit and $v\neq 0$ is not a root word;
\item $c(0,v)$, where $v$ is a root word;
\item $c(d,v)$, where $d=p^ru'$ is a zero divisor and $v$ is a root word.
\end{enumerate}

We treat each case separately.

\medskip
\noindent
\textbf{Case 1: $c(u,0)$, $u$ unit.}

By Lemma~\ref{Lemma 19},
there exists a basis
$\{\beta_1,\dots,\beta_m\}$
with full support and all components units.
By condition (3), $f(\beta_i)=0$ for all $i$.

Proposition~\ref{Proposition 15}
then implies that $c(u,0)$ is minimal.

\medskip
\noindent
\textbf{Case 2: $c(u,v)$, $u$ unit and $v$ root word.}

Let $w=-u^{-1}v$.
By Lemma~\ref{Lemma 20},
there exists a basis
$\{\beta_1,\dots,\beta_m\}$
with $1\leq w(\beta_i)\leq 2$ and $w\cdot\beta_i=1$.

By condition (1), each $f(\beta_i)$ is a unit.
Set $\gamma_i=f(\beta_i)\beta_i$.
Then $f(\gamma_i)=f(\beta_i)$ and
\[
w\cdot \gamma_i=f(\beta_i) w\cdot\beta_i=f(\beta_i).
\]

Thus $f(\gamma_i)=w\cdot \gamma_i$.
Since $\{\gamma_i\}$ forms a basis,
Proposition~\ref{Proposition 16}
implies minimality.

\medskip
\noindent
\textbf{Case 3: $c(u,v)$, $u$ unit and $v\neq 0$ not root word.}

Let $w=-u^{-1}v$.

\medskip
\noindent
\emph{Subcase (i): All $v_i\neq 0$.}

By Lemma~\ref{Lemma 21},
there exists a basis $\{\beta_i\}$
with full support and $v\cdot\beta_i=v_i=p^{r_i}u_i$.

By condition (4), $f(\beta_i)=p^{r_i}$.
Set $\gamma_i=-uu_i^{-1}\beta_i$.
Then $f(\gamma_i)=p^{r_i}$ and
\[
w\cdot\gamma_i=p^{r_i}.
\]
Hence minimality follows from Proposition~\ref{Proposition 16}.

\medskip
\noindent
\emph{Subcase (ii): At least two zero components.}

By Lemma~\ref{Lemma 23},
there exists a basis $\{\beta_i\}$
with full support such that:

\[
v\cdot\beta_i=
\begin{cases}
v_i, & v_i\neq 0,\\
0, & v_i=0.
\end{cases}
\]

By conditions (3) and (4),
\[
f(\beta_i)=
\begin{cases}
p^{r_i}, & v_i\neq 0,\\
0, & v_i=0.
\end{cases}
\]

Scaling as above gives
$w\cdot\gamma_i=f(\gamma_i)$,
so minimality follows from Proposition~\ref{Proposition 16}.

\medskip
\noindent
\emph{Subcase (iii): Exactly one zero component.}

By Lemma~\ref{Lemma 24},
there exists a basis $\{\beta_i\}$
with full support satisfying
$v\cdot\beta_i=v_i$.

Again conditions (3)-(4) yield
\[
f(\beta_i)=
\begin{cases}
p^{r_i}, & i\neq m,\\
0, & i=m.
\end{cases}
\]

After scaling,
$w\cdot\gamma_i=f(\gamma_i)$,
and Proposition~\ref{Proposition 16}
implies minimality.

\medskip
\noindent
\textbf{Case 4: $c(0,v)$, $v$ root word.}

By Lemma~\ref{Lemma 25},
$O(v)$ admits a basis
$\{\beta_1,\dots,\beta_{m-1}\}$
with $w(\beta_i)\le 2$.

Choosing $\beta_m=a\beta_1$
with $a\neq 0$ and $a\not\equiv 1\pmod p$,
the argument proceeds exactly as in
Case 3 of Theorem 4.4 in \cite{WuLuCao},
yielding minimality.

\medskip
\noindent
\textbf{Case 5: $c(d,v)$, $d=p^ru'$ zero divisor, $v$ root word.}

Let $w=-u'^{-1}v$.

By Lemma~\ref{Lemma 26},
there exist vectors $\beta_i\in O(p^{n-r}v)$
with $1\le w(\beta_i)\le 2$ and
$v\cdot\beta_i=p^r$.

Using conditions (1)-(2),
we adjust this family to obtain
vectors $\gamma_i$ satisfying
\[
p^r f(\gamma_i)=w\cdot\gamma_i.
\]

Proposition~\ref{Proposition 18}
then implies that $c(p^ru',v)$ is minimal.

\medskip

Since all five types of root words are minimal,
Theorem~\ref{Theorem 8} implies that
$C_f$ is a minimal linear code over $\mathrm{GR}(p^n,l)$.
\end{proof}

\begin{coro}
Condition {\rm (2)} of Theorem~\ref{Theorem 43} can be omitted.

Indeed, condition {\rm (2)} was used only in {\bf Case 5}.
However, {\bf Case 5} can be established without invoking condition {\rm (2)}.

More precisely, in Case 5 one may choose
\[
\beta_m=(p^ru^{-1},0,\dots,0).
\]
By condition {\rm (1)}, $f(\beta_m)=u_m$ is a unit.
Define
\[
\gamma_m=-u' u_m \beta_m.
\]
Then one verifies that
\[
p^r f(\gamma_m)=w\cdot \gamma_m.
\]

Moreover, the family
$\{\alpha_{\gamma_1},\dots,\alpha_{\gamma_m}\}$
is linearly independent since its reduction modulo $p$
is linearly independent over
$\mathrm{GR}(p^n,l)/\langle p\rangle$.

Therefore, even without condition {\rm (2)},
the codeword $c(d,v)$, where $d=p^ru'$ is a zero divisor
and $v$ is a root word, is minimal in $C_f$.
\end{coro}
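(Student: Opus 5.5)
The plan is to re-run the argument for \textbf{Case 5} of Theorem~\ref{Theorem 43} with conditions (1), (3) and (4) only. Cases 1--4 never use condition (2), so once Case 5 is settled this way Theorem~\ref{Theorem 8} again gives minimality of $C_f$.

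Fix $d=p^ru'$ with $u'$ a unit and $v$ a root word. Normalize so that $v=(u,v_2,\dots,v_m)$ with $u$ a unit, and set $w=-u'^{-1}v$. By Proposition~\ref{Proposition 18} it suffices to find $\gamma_1,\dots,\gamma_m$ such that:
\begin{itemize}
\item $p^rf(\gamma_i)=w\cdot\gamma_i$ for every $i$;
\item $\{\alpha_{\gamma_1},\dots,\alpha_{\gamma_m}\}$ has McCoy rank $m$.
\end{itemize}
For the second point, Remark~\ref{Remark 14} and the reduction principle of \cite{NortonSalagean} let us check linear independence after reduction mod $p$.

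\textbf{Step 1: the vectors $\beta_2,\dots,\beta_m$ (the main obstacle).} These come from Lemma~\ref{Lemma 26} and have weight at most $2$. Condition (1) gives that each $f(\beta_i)=u_i$ is a unit, and that $f$ is invariant under unit scaling of such vectors. Put $\gamma_i=c_i\beta_i$ with $c_i$ a unit. Then
\[
w\cdot\gamma_i=-c_iu'^{-1}p^r \quad\text{and}\quad p^rf(\gamma_i)=p^ru_i.
\]
So we need $c_i\equiv -u'u_i \pmod{p^{n-r}}$, and $c_i=-u'u_i$ works. This is exactly where condition (2) used to enter: it served to compare $f$ on vectors with equal reduction. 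Scaling invariance already makes the choice of $c_i$ explicit, so condition (2) is not needed here.

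\textbf{Step 2: the vector $\beta_m$.} Take $\beta_m=(p^ru^{-1},0,\dots,0)$ and $\gamma_m=-u'u_m\beta_m$, where $u_m=f(\beta_m)$ is a unit by condition (1). Then
\[
w\cdot\gamma_m=u'^{-1}u'u_m\,u\,p^ru^{-1}=p^ru_m=p^rf(\gamma_m),
\]
using condition (1) once more for the scaled vector.

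\textbf{Step 3: independence.} Form the $(m+1)\times m$ matrix with columns $\alpha_{\gamma_i}=(f(\gamma_i),\gamma_i)$ and reduce it mod $p$.
\begin{itemize}
\item The $\gamma_i$ for $i<m$ reduce to scalar multiples of $\bar\zeta_i$, where $\zeta_i$ are the basis vectors of $v^\perp$ from Remark~\ref{Remark 10}. These are $m-1$ independent vectors of $\mathbb F_q^m$.
\item The vector $\gamma_m$ reduces to $0$ in its last $m$ coordinates, but its first coordinate $f(\gamma_m)$ is a unit.
\end{itemize}
Hence the reduced matrix has rank $m$, which is the required McCoy rank. The subtle point is making sure the relabelling of indices from Lemma~\ref{Lemma 26} indeed yields $m-1$ independent reductions. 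I would handle this by choosing the $\beta_i$ directly as $p^ru^{-1}e_1+\zeta_i$.
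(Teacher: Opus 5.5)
Your proposal is correct and follows the same route as the paper. You take the weight-one vector $\beta_m=(p^ru^{-1},0,\dots,0)$, rescale it by $-u'u_m$, use the unit-scaling invariance from condition (1) to get $p^rf(\gamma_m)=w\cdot\gamma_m$, and check independence of the $\alpha_{\gamma_i}$ modulo $p$. You also make explicit what the paper leaves implicit: the other $m-1$ vectors $p^ru^{-1}e_1+\zeta_i$ from Lemma~\ref{Lemma 26} are handled by the same rescaling $c_i=-u'u_i$, so condition (1) alone covers all of them.
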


\begin{remark}
The introduction of condition {\rm (2)} allows a more symmetric
choice of vectors in {\bf Case 5}, namely
\[
\beta_m=(p^ru^{-1}-u''v_2u^{-1},u'',0,\dots,0),
\]
with $u''$ a unit satisfying $u''\not\equiv 1 \pmod p$.

Similarly, in {\bf Case 4} one may choose a unit
$a\not\equiv 1\pmod p$.
Under these choices, the proof of Theorem~\ref{Theorem 43}
requires only vectors associated with root words,
rather than arbitrary vectors in
$\mathrm{GR}(p^n,l)^m\setminus\{0\}$.

Consequently, if one restricts the definition of $f$
to root words only,
one still obtains a minimal linear code
of parameters
\[
[q^{nm}-q^{m(n-1)},\,m+1].
\]
\end{remark}

\medskip

We now present a second minimality criterion,
based on a different structural behavior of the function $f$.

\begin{theorem}\label{Theorem 46}
Let $f:\mathrm{GR}(p^n,l)^m\to \mathrm{GR}(p^n,l)$
with $m>3$. Suppose that $f$ satisfies:

\begin{enumerate}
\item If $1\le w(x)\le 2$, then $f(x)=0$;

\item If $w(x)=m$ and $x$ contains exactly one unit component
      and all other components are of the form $p^ru_j$,
      then $f(x)=p^r$;

\item If $w(x)\ge m-1$ and $w(\bar{x})\ge m-1$, then
      for any $a\in U(\mathrm{GR}(p^n,l))$,
      \[
      f(ax)=f(x)\in U(\mathrm{GR}(p^n,l)).
      \]
\end{enumerate}

Then $C_f$ is minimal.
\end{theorem}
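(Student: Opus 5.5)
We follow the same scheme as the proof of Theorem~\ref{Theorem 43}. By Remark~\ref{Remark 14}, $C_f$ has dimension $m+1$. By Theorem~\ref{Theorem 8}, it then suffices to prove minimality of the five types of root words $c(u,0)$, $c(u,v)$ with $v$ a root word, $c(u,v)$ with $v\neq 0$ not a root word, $c(0,v)$, and $c(d,v)$. For each type we exhibit a family $\{\gamma_1,\dots,\gamma_m\}$ and invoke the matching criterion among Propositions~\ref{Proposition 15}--\ref{Proposition 18}. The roles of ``zero'' and ``unit'' values are now swapped relative to Theorem~\ref{Theorem 43}: low-weight vectors are sent to $0$, and (near-)full-weight vectors with large reduction are sent to units invariant under unit scaling.

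\textbf{Case $c(u,0)$.} Take the standard basis $e_1,\dots,e_m$, or the weight-$\le 2$ basis of Lemma~\ref{Lemma 20}. Condition (1) gives $f=0$ on it, so Proposition~\ref{Proposition 15} applies directly.

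\textbf{Case $c(0,v)$ with $v$ a root word.} Lemma~\ref{Lemma 25} gives a basis $\beta_1,\dots,\beta_{m-1}$ of $O(v)$ of weight $\le 2$, so $\alpha_{\beta_i}=(0,\beta_i)$. We complete it with one vector $\beta_m\in O(v)$ of weight $\ge m-1$ whose reduction also has weight $\ge m-1$. By (3), $f(\beta_m)$ is a unit, so $\alpha_{\beta_m}$ has unit first coordinate, and the reduction mod $p$ of $\{\alpha_{\beta_i}\}$ has full rank. Proposition~\ref{Proposition 17} then applies. Such a $\beta_m$ exists because $\bar v^{\perp}\subseteq\mathbb F_q^m$ is a hyperplane and $m>3$ leaves room to choose a lift with at most one non-unit coordinate.

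\textbf{Case $c(u,v)$ with $v$ a root word, $w=-u^{-1}v$.} If $w\cdot\beta=0$ with $w(\beta)\le 2$, then $f(\beta)=0=w\cdot\beta$ holds for free, so Lemma~\ref{Lemma 25} supplies $m-1$ such vectors. One more vector $\beta_m$ is needed, with $w\cdot\beta_m\neq0$ of the right value. Take $x$ with $w(x),w(\bar x)\ge m-1$ and $w\cdot x$ a unit. By (3), $f(x)=c$ is a unit and $f(ax)=c$ for every unit $a$. Choosing $a=c\,(w\cdot x)^{-1}$ gives $w\cdot(ax)=c=f(ax)$. We then check that $\{\beta_1,\dots,\beta_{m-1},ax\}$ is a basis, which holds because $ax\notin O(w)$ modulo $p$, and apply Proposition~\ref{Proposition 16}.

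\textbf{Case $c(u,v)$ with $v\neq 0$ not a root word.} The vectors produced by Lemmas~\ref{Lemma 21}, \ref{Lemma 23} and~\ref{Lemma 24} have full weight, and those with $v\cdot\beta_i=v_i$ contain exactly one unit and other components $p^{r}u_j$. We will need to modify these lemmas so that this shape holds, and then condition (2) gives $f(\beta_i)=p^{r_i}$. After rescaling by $-uu_i^{-1}$ as in Theorem~\ref{Theorem 43}, we obtain $w\cdot\gamma_i=f(\gamma_i)$. For coordinates where $v_i=0$, we instead use weight-$\le2$ vectors in $O(v)$ such as $e_i$, where (1) gives $f=0=w\cdot e_i$. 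A basis can be completed this way, and Proposition~\ref{Proposition 16} applies.

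\textbf{Case $c(d,v)$ with $d=p^ru'$ zero divisor, $v$ a root word, $w=-u'^{-1}v$.} Use the weight-$\le 2$ basis of $O(v)$, on which $p^rf=0=w\cdot\beta$ holds. Add one full-weight vector $x$ in the sense of (3) with $w\cdot x=p^r c'$, where $c=f(x)$ is a unit. Scaling by a unit $a$ so that $w\cdot(ax)=p^rc$ gives $p^rf(ax)=w\cdot(ax)$. The reduction of $(f(ax),ax)$ is independent from the others because its first coordinate is a unit. Proposition~\ref{Proposition 18} then concludes.

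The main obstacle is producing, in each case, a vector $x$ satisfying all of the following at once: $w(x)\ge m-1$, $w(\bar x)\ge m-1$, a prescribed dot product with $w$ (a unit, or exactly valuation $r$), and linear independence modulo $p$ from the low-weight vectors. We would attempt this with a Lemma~\ref{Lemma 19}-style construction $aI_m-A$ perturbed in one coordinate, using $m>3$ and $q>3$ to avoid the forbidden residues. The non-root case, which must reconcile condition (2) with the lemma outputs, is where we expect the need to rework Lemmas~\ref{Lemma 23} and~\ref{Lemma 24}.
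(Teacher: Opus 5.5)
Your proposal is correct, and outside the non-root case it is the paper's proof. You reduce to the five root-word types via Theorem~\ref{Theorem 8}, take a weight-$\le 2$ basis of the relevant orthogonal module (where (1) forces $f=0$, so the synchronization identity holds trivially), and add one extra vector on which (3) or (2) controls $f$. Your zero-divisor case is exactly the paper's Case~5.

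The step you single out as the main obstacle is immediate, and you should not route it through Lemma~\ref{Lemma 19}. Theorem~\ref{Theorem 46} does not assume $q>3$, so that route would only prove a weaker statement. If $v$ is a root word with, say, $v_1$ a unit, then for any target $t\in\mathrm{GR}(p^n,l)$ the vector
\[
x_t=v_1^{-1}\Big(t-\sum_{j=2}^m v_j\Big)e_1+\sum_{j=2}^m e_j
\]
satisfies $v\cdot x_t=t$ and has entries $1$ in positions $2,\dots,m$. Hence $w(x_t)\ge m-1$ and $w(\overline{x_t})\ge m-1$ for every $q$. Taking $t$ a unit, $t=0$ and $t=p^r$ gives your extra vector in the cases $c(u,v)$, $c(0,v)$ and $c(d,v)$ respectively; since $w$ is a unit multiple of $v$, you then rescale as you describe. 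For $t=p^r$ this is exactly the paper's $\beta_1'=p^rv_1^{-1}e_1+\sum_{j\ge 2}\beta_j$. Independence modulo $p$ is automatic: either $\bar v\cdot\overline{x_t}\neq 0$, or the extra $\alpha$-vector has the unit $f(x_t)$ as first coordinate while the $\alpha_{\beta_j}=(0,\beta_j)$ do not. Incidentally, $m>3$ is not what ``leaves room'' here. It is what makes (1) and (3) compatible: for $m=3$, a weight-$2$ vector with two unit entries would fall under both.

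In the non-root case $c(u,v)$ your route differs from the paper's, and it works without reworking Lemmas~\ref{Lemma 23} and~\ref{Lemma 24}. For each $i$ with $v_i=p^{r_i}u_i\neq 0$, the vector $e_i+p^{r_i}\sum_{j\neq i}e_j$ has full weight, a single unit entry, and all other entries equal to $p^{r_i}$. Its inner product with $v$ is $p^{r_i}\big(u_i+\sum_{j\neq i}v_j\big)$, which is $p^{r_i}$ times a unit whether or not some $v_j$ vanish. After rescaling by a unit so that its inner product with $w$ equals $p^{r_i}$, condition (2) gives $f=p^{r_i}$ on it. Together with $e_i$ at the zero coordinates of $v$, the reductions modulo $p$ are unit multiples of the standard basis.

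The paper instead writes $v=p^ru'y$ with $y=(1,y_2,\dots,y_m)$ a root word. It uses the weight-$\le 2$ basis $\beta_j=(-y_j,0,\dots,1,\dots,0)$ of $O(y)\subseteq O(v)$, on which $f=0=w\cdot\beta_j$. It then adds the single vector $e_1+p^r\sum_{j\ge 2}\beta_j=(1-p^r\sum_{j\ge 2}y_j,p^r,\dots,p^r)$, rescaled by $-uu'^{-1}$. The paper's version invokes (2) only once and keeps the uniform ``orthogonal basis plus one vector'' pattern of the other cases, independent of the zero pattern of $v$. Yours avoids the normalization $v=p^ru'y$ at the price of one condition-(2) vector per nonzero coordinate.
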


\begin{proof}

By Remark~\ref{Remark 14},
$C_f$ is a $[q^{nm}-1,m+1]$ linear code.
By Theorem~\ref{Theorem 8},
it suffices to show that all root words are minimal.

As before, root words fall into five types.

\medskip
\noindent
\textbf{Case 1: $c(u,0)$, $u$ unit.}

The argument is identical to
{\bf Case 1} of Theorem 4.9 in \cite{WuLuCao},
and minimality follows directly.

\medskip
\noindent
\textbf{Case 2: $c(u,v)$, $u$ unit and $v$ root word.}

Since $v$ is a root word,
some coordinate $v_{j_0}$ is a unit.
The argument proceeds exactly as in
{\bf Case 2} of Theorem 4.9 in \cite{WuLuCao},
using condition (3) to ensure unit invariance.
Hence minimality holds.

\medskip
\noindent
\textbf{Case 3: $c(u,v)$, $u$ unit and $v\neq 0$ not root word.}

Let $w=-u^{-1}v$.
Write
\[
v=p^r u' y,
\]
where $y$ is a root word.
Without loss of generality,
take $y=(1,y_2,\dots,y_m)$.

The vectors
\[
\beta_2=(-y_2,1,0,\dots,0),\dots,
\beta_m=(-y_m,0,\dots,0,1)
\]
generate $O(y)$.

Define
\[
\beta_1' = e_1 + \sum_{j=2}^m p^r \beta_j.
\]
Then:

\begin{itemize}
\item $w(\beta_1')=m$;
\item exactly one coordinate (the first) is a unit;
\item all other coordinates equal $p^r$.
\end{itemize}

By condition (2), $f(\beta_1')=p^r$.

Since $v\cdot \beta_1'=p^ru'$,
define
\[
\beta_1=-u u'^{-1}\beta_1'.
\]

Then $f(\beta_1)=p^r$ and
\[
f(\beta_i)=w\cdot \beta_i
\quad \text{for all } i.
\]

Linear independence is immediate from reduction modulo $p$.
Proposition~\ref{Proposition 16} yields minimality.

\medskip
\noindent
\textbf{Case 4: $c(0,v)$, $v$ root word.}

The proof is identical to
{\bf Case 3} of Theorem 4.9 in \cite{WuLuCao}.
Condition (3) guarantees the necessary unit invariance.

\medskip
\noindent
\textbf{Case 5: $c(d,v)$, $d=p^ru'$ zero divisor and $v$ root word.}

Write $v=(u,v_2,\dots,v_m)$
and $w=-u'^{-1}v$.

The vectors
\[
\beta_2=(-v_2u^{-1},1,0,\dots,0),\dots,
\beta_m=(-v_mu^{-1},0,\dots,0,1)
\]
generate $O(v)$.

Define
\[
\beta_1' = p^ru^{-1}e_1 + \sum_{j=2}^m \beta_j.
\]
Then $v\cdot \beta_1'=p^r$.

Since $w(\beta_1')\ge m-1$
and $w(\bar{\beta_1'})\ge m-1$,
condition (3) implies
$f(a\beta_1')=f(\beta_1')$ for all units $a$.

Thus $f(\beta_1')$ is a unit.
Define
\[
\beta_1=-u'u_1\beta_1'.
\]

Then
\[
p^r f(\beta_1)=w\cdot \beta_1.
\]

Reduction modulo $p$ shows that
$\{\alpha_{\beta_1},\dots,\alpha_{\beta_m}\}$
is linearly independent.
Proposition~\ref{Proposition 18}
implies minimality.

\medskip

All five types of root words are minimal.
Therefore, by Theorem~\ref{Theorem 8},
$C_f$ is a minimal linear code over $\mathrm{GR}(p^n,l)$.
\end{proof}

\begin{remark}
From the proof of Theorem~\ref{Theorem 46},
we observe that only root words are involved
in establishing minimality for all five types of root words.

Consequently, if the function $f$ is defined only on the set of
root words of $\mathrm{GR}(p^n,l)^m$,
the resulting code $C_f$ remains minimal.
In this restricted setting,
the parameters become
\[
[p^{nlm}-p^{lm(n-1)},\,m+1],
\]
since evaluation outside the root words is no longer required.
\end{remark}

\subsection{Minimal linear codes constructed from polynomials}

We now specialize the previous minimality criteria
to the important case where the function $f$
is a multivariate polynomial over the Galois ring.

Let $m$ be a positive integer.
Consider a polynomial function
\[
f:\mathrm{GR}(p^n,l)^m \longrightarrow \mathrm{GR}(p^n,l).
\]

Let
\[
g(x)=\prod_{i=1}^{m} x_i^{b_i}
\]
be a monomial.
Define its support index set by
\[
s(g)=\{\, i \mid b_i\neq 0,\ 1\le i\le m \,\}.
\]

Let
\[
f(x)=\sum_{i=1}^{t} a_i g_i(x),
\]
where $a_i\in U(\mathrm{GR}(p^n,l))$
and each $g_i(x)$ is a monomial.

The following result provides a concrete and easily verifiable
criterion ensuring minimality.

\begin{theorem}
Let $t\ge 2$ and
\[
f(x)=\sum_{i=1}^{t} a_i g_i(x),
\]
where $a_i\in U(\mathrm{GR}(p^n,l))$
and $g_i(x)$ are monomials.
Assume that for all $1\le i\le t$:

\begin{enumerate}
\item There exists at least one index $r$ such that
      the exponent of $x_r$ in $g_i$ equals $1$;

\item The sets $\{s(g_i)\mid 1\le i\le t\}$ are pairwise disjoint;

\item $|s(g_i)|\ge 3$.
\end{enumerate}

Then $C_f$ is a minimal linear code.
\end{theorem}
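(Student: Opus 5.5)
The proof will not verify the hypotheses of Theorem~\ref{Theorem 43} or Theorem~\ref{Theorem 46} directly, since a polynomial of this shape satisfies neither. For instance, $f(x)=0$ whenever $w(x)\le 2$ because $|s(g_i)|\ge 3$, which contradicts condition (1) of Theorem~\ref{Theorem 43}. Instead, I would follow the pattern of the polynomial theorem of Wu \emph{et al.}~\cite{WuLuCao} and go through the five root-word types of $C_f$ directly. By Theorem~\ref{Theorem 8} it suffices to show each type is minimal. For each type I would use the criteria of Propositions~\ref{Proposition 15}--\ref{Proposition 18}, which amounts to exhibiting $m$ points $\beta_j$ with $\{\alpha_{\beta_j}\}$ linearly independent mod $p$ and the relevant synchronization identity. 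The basic facts used throughout are:
\begin{itemize}
\item $f$ vanishes on every $x$ whose support misses at least one index of each $s(g_i)$;
\item $f$ vanishes on every vector of weight at most $2$;
\item by disjointness, $f$ is additively separable across the blocks $s(g_1),\dots,s(g_t)$.
\end{itemize}

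\textbf{Cases 1 and 4.} For Case~1 I need a basis on which $f$ vanishes. I would take vectors of weight at most $2$, for instance $e_1,\dots,e_m$, so that $f(\beta_j)=0$ and Proposition~\ref{Proposition 15} applies. For Case~4 ($c(0,v)$ with $v$ a root word), Lemma~\ref{Lemma 25} gives a basis of $O(v)$ of weight at most $2$, on which $f=0$. I then still need one more vector $\beta_m\in O(v)$ with $f(\beta_m)$ a unit, so that $\alpha_{\beta_m}=(f(\beta_m),\beta_m)$ is independent mod $p$ of the others. Here I would use condition (1): choose an index $r_1$ with exponent $1$ in $g_1$, and build $x\in O(v)$ supported on $s(g_1)$ together with a unit coordinate of $v$. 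The other coordinates of $s(g_1)$ are set to $1$, the $x_{r_1}$ coordinate is left free, and orthogonality is solved for with the unit coordinate of $v$. Then $g_1(x)$ is affine in $x_{r_1}$ with unit slope, so a suitable residue choice makes $f(x)$ a unit while the other blocks contribute $0$. The case where the unit coordinate of $v$ lies inside $s(g_1)$ needs adjustment, using $|s(g_1)|\ge 3$.

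\textbf{Cases 2, 3 and 5.} These require $f(\beta_j)=w\cdot\beta_j$ (respectively $p^r f(\beta_j)=w\cdot\beta_j$) on an independent family. My plan is to take $\beta_j$ of the form $c\,e_j$, or $e_j+c\,e_k$, of weight at most $2$. Then $f(\beta_j)=0$, and it suffices to choose these vectors inside $O(w)$ (respectively $O(p^{n-r}\cdot)$-type modules as in Lemma~\ref{Lemma 26}). That yields only $m-1$ independent vectors, so again one extra vector $\beta_m$ with $f(\beta_m)=w\cdot\beta_m$ is needed. I would construct it inside the block $s(g_1)$, using the linear variable $x_{r_1}$: set the other block variables to chosen units so that $g_1$ becomes $a\,x_{r_1}+b$. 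The equation $a_1 g_1(x)=w\cdot x$ is then linear in $x_{r_1}$, namely $(a_1 a - w_{r_1})x_{r_1}=\dots$, and it is solvable whenever the coefficient is a unit. Freedom in choosing the other block coordinates (three or more of them) lets me force that. For Case~5 the same computation is done with the extra factor $p^r$, where solvability is easier because $p^r a_1 a$ is not a unit.

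\textbf{Main obstacle.} I expect the hardest step to be guaranteeing that this extra vector keeps the family independent mod $p$ when $w$ or $v$ has zero coordinates or coordinates in $\langle p\rangle$ (the non-root $v$ of Case~3). There, residues must be chosen so that $\alpha_{\beta_m}$ reduces to a vector outside the span of the others. I would first try to handle this by residue-field counting in $\mathbb{F}_q$, using that $|s(g_i)|\ge 3$ leaves at least two free coordinates. I would use $t\ge 2$ (a second block) as a fallback: if the first block degenerates for a given $w$, the construction is moved to $s(g_2)$.
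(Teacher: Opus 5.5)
There is a genuine gap in how you build the extra vector in Cases 2, 3 and 5. Your overall plan matches the paper's: Theorem~\ref{Theorem 8}, the five root-word types, $m-1$ vectors of weight $\le 2$ on which $f$ vanishes, plus one extra vector. Your Cases 1 and 4 also work.

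\textbf{The gap.} The failing step is exactly the one you flag as the main obstacle, and your remedies do not close it. Write $w=p^{s}u''y$ with $y$ a root word and $u''$ a unit.
\begin{itemize}
\item In Cases 2 and 3, the family is a basis only if $y\cdot\beta_m$ is a unit. Equivalently, $f(\beta_m)=w\cdot\beta_m$ must have valuation exactly $s$.
\item In Case 5, $f(\beta_m)$ must be a unit while $w\cdot\beta_m=p^rf(\beta_m)$.
\end{itemize}
If $\beta_m$ is supported inside a single block, this can fail for every block at once, so neither residue counting nor switching to $s(g_2)$ helps. Take $m=7$, $f=a_1x_1x_2x_3+a_2x_4x_5x_6$ and $w=e_7$. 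Every $x$ supported in a block has $w\cdot x=0$. Synchronization then forces $f(x)=0$, so $\alpha_x$ lies in the span of the other $\alpha_{\beta_j}$.

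Solving the linear equation for $x_{r_1}$ is also unreliable. When $q=2$ every unit is $\equiv 1\pmod 2$, so $a_1a-w_{r_1}$ is never a unit once $w_{r_1}$ is, whatever you choose for the other block coordinates.

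Two smaller slips:
\begin{itemize}
\item In Case 3 the $m-1$ auxiliary vectors should be a basis of $O(y)$, since $O(w)$ is not free.
\item In Case 5 they must lie in $O(v)$ itself, not in the module of Lemma~\ref{Lemma 26}, because $f(\beta_j)=0$ forces $w\cdot\beta_j=0$.
\end{itemize}

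\textbf{The fix.} The missing idea is the one you already used in Case 4:
\begin{enumerate}
\item Put a unit coordinate $j_0$ of $y$ into the support of the extra vector.
\item Use $t\ge 2$ and disjointness to pick a block $s(g_{i_0})$ with $j_0\notin s(g_{i_0})$.
\item Fix the block coordinates first, so that $f$ takes the prescribed value.
\item Only then solve the synchronization equation for $x_{j_0}$. This is always possible because $y_{j_0}$ is a unit, and the other blocks still vanish because $|s(g_i)|\ge 3$.
\end{enumerate}

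This is the paper's construction. Normalize $y=(1,y_2,\dots,y_m)$ and set $\beta_l=e_l-y_le_1\in O(y)$.

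In Case 3 ($v=p^ru'y$) the paper takes $\beta_1=-a_{i_0}u\,e_1+\sum_{l\in s(g_{i_0}),\,l\ne r_1}\beta_l+p^ru'\beta_{r_1}$. Then $f(\beta_1)=a_{i_0}p^ru'=w\cdot\beta_1$, and $y\cdot\beta_1=-a_{i_0}u$ is a unit. Condition (1) is what makes $g_{i_0}(\beta_1)=p^ru'$ have valuation exactly $r$. So the exponent-$1$ variable serves to prescribe the valuation of $f$, not to solve a linear equation.

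In Case 5, with $v=(u,v_2,\dots,v_m)$ and $\beta_l=e_l-v_lu^{-1}e_1$, the paper takes $\beta_1=-a_{i_0}u'u^{-1}p^re_1+\sum_{l\in s(g_{i_0})}\beta_l$. This gives $f(\beta_1)=a_{i_0}$ and $w\cdot\beta_1=p^ra_{i_0}$.

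Case 2, which the paper delegates to Wu \emph{et al.}, goes the same way: set all block coordinates to $1$ and solve $w\cdot x=a_{i_0}$ for $x_{j_0}$.
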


\begin{proof}

By Remark~\ref{Remark 14},
$C_f$ is a $[p^{nlm}-1,m+1]$ linear code.
By Theorem~\ref{Theorem 8},
it suffices to prove that all root words are minimal.

As before, root words fall into five types.

\medskip
\noindent
\textbf{Case 1: $c(u,0)$, $u$ unit.}

The argument is identical to
{\bf Case 1} of Theorem 6.1 in \cite{WuLuCao},
since $f(0)=0$ and coefficients are units.
Thus minimality holds.

\medskip
\noindent
\textbf{Case 2: $c(u,v)$, $u$ unit and $v$ root word.}

Since $v$ is a root word,
some coordinate $v_{j_0}$ is a unit.
The proof proceeds exactly as in
{\bf Case 3} of Theorem 6.1 in \cite{WuLuCao},
using the structure of $O(v)$ and the unit coefficients $a_i$.
Hence minimality follows.

\medskip
\noindent
\textbf{Case 3: $c(u,v)$, $u$ unit and $v\neq 0$ not root word.}

Let $w=-u^{-1}v$.
Write $v=p^r u' y$, where $y$ is a root word.
Without loss of generality, take
\[
y=(1,y_2,\dots,y_m).
\]

The vectors
\[
\beta_2=(-y_2,1,0,\dots,0),\dots,
\beta_m=(-y_m,0,\dots,0,1)
\]
form a basis of $O(y)$.

Since $t\ge 2$ and the supports $s(g_i)$ are disjoint,
there exists an index $i_0$ such that $1\notin s(g_{i_0})$.

By condition (1),
there exists $r_1\in s(g_{i_0})$
such that the exponent of $x_{r_1}$ in $g_{i_0}$ equals $1$.

Define
\[
\beta_1=
-a_{i_0}u\, e_1
+
\sum_{\substack{l\in s(g_{i_0})\\ l\neq r_1}} \beta_l
+
p^r u' \beta_{r_1}.
\]

Because the supports of the monomials are disjoint
and each support has cardinality at least three,
one verifies that:

\begin{itemize}
\item $f(\beta_1)=w\cdot \beta_1$;
\item for $2\le i\le m$,
      $f(\beta_i)=0=w\cdot \beta_i$.
\end{itemize}

Moreover, the family
$\{\beta_1,\dots,\beta_m\}$
is linearly independent,
since its reduction modulo $p$
remains independent.

Hence, by Proposition~\ref{Proposition 16},
$c(u,v)$ is minimal.

\medskip
\noindent
\textbf{Case 4: $c(0,v)$, $v$ root word.}

The argument coincides with
{\bf Case 2} of Theorem 6.1 in \cite{WuLuCao}.
Minimality follows directly from the structure
of $O(v)$.

\medskip
\noindent
\textbf{Case 5: $c(d,v)$, $d=p^ru'$ zero divisor and $v$ root word.}

Let $v=(u,v_2,\dots,v_m)$
and $w=-u'^{-1}v$.

The vectors
\[
\beta_2=(-v_2u^{-1},1,0,\dots,0),\dots,
\beta_m=(-v_mu^{-1},0,\dots,0,1)
\]
form a basis of $O(v)$.

Since $t\ge 2$ and supports are disjoint,
choose $i_0$ with $1\notin s(g_{i_0})$
and let $r_1\in s(g_{i_0})$ be such that
the exponent of $x_{r_1}$ equals $1$.

Define
\[
\beta_1=
-a_{i_0} u'u^{-1} p^r e_1
+
\sum_{l\in s(g_{i_0})} \beta_l.
\]

A direct computation shows
\[
p^r f(\beta_1)=w\cdot \beta_1,
\qquad
p^r f(\beta_i)=w\cdot \beta_i
\ \text{for all } i.
\]

Reduction modulo $p$ shows that
$\{\alpha_{\beta_1},\dots,\alpha_{\beta_m}\}$
is linearly independent.

By Proposition~\ref{Proposition 18},
$c(p^ru',v)$ is minimal.

\medskip

All five types of root words are minimal.
Therefore, by Theorem~\ref{Theorem 8},
$C_f$ is a minimal linear code over $\mathrm{GR}(p^n,l)$.
\end{proof}

\begin{remark}
The proof shows that minimality depends only on the behavior
of $f$ on root words.
Hence, restricting $f$ to root words alone
still produces a minimal linear code of parameters
\[
[q^{nm}-q^{m(n-1)},\,m+1].
\]

The disjointness and size conditions on the supports $s(g_i)$
ensure that each monomial contributes independently
to the dot-product synchronization condition
$f(\beta_i)=w\cdot\beta_i$,
which is the structural core of minimality.
\end{remark}

\section{Conclusion and Open Problems}\label{sec:conclusion}

In this paper, we constructed an infinite family of minimal linear codes over the Galois ring $\mathrm{GR}(p^n,l)$, thereby extending the field-based constructions of Wu, Lu and Cao~\cite{WuLuCao} to the substantially richer setting of finite chain rings.

\medskip

Our contributions can be summarized as follows.

\begin{itemize}
\item[(1)] We established a complete necessary and sufficient condition for an $m$-dimensional linear code over $\mathrm{GR}(p^n,l)$ (with $m\ge2$) to be minimal.
This extends the characterization of minimal codes over finite fields obtained in~\cite{ChatterjeeMesnager} to the non-field setting of Galois rings.
The extension requires a careful analysis of orthogonal modules, valuation filtration, and torsion phenomena specific to chain rings.

\item[(2)] We generalized the minimality criteria for codewords constructed from functions, originally developed in~\cite{WuLuCao} over finite fields, to the ring $\mathrm{GR}(p^n,l)$.
The key innovation lies in the synchronization conditions
\[
f(\beta)=w\cdot\beta
\quad\text{and}\quad
p^r f(\beta)=w\cdot\beta,
\]
which reflect the dichotomy between unit and zero-divisor coefficients and capture the valuation-sensitive geometry of the ring.

\item[(3)] We provided a structural classification of root words into five distinct types and showed that minimality of the entire code reduces to the minimality of these root words.
This reduction reveals that, for $m\ge2$, torsion in the chain ring does not create new obstructions beyond those already encoded at the level of root words.

\item[(4)] We determined the length parameters of the constructed $m$-dimensional minimal linear codes over $\mathrm{GR}(p^n,l)$ and showed that, in the special case $n=1$, our parameters coincide exactly with those obtained in~\cite{WeiXiaXiwang}.
Thus, our results genuinely generalize the known field case.

\item[(5)] By combining our structural theory with Theorem~7.4 of~\cite{MajiMesnagerSarkar}, we obtain an explicit infinite family of minimal linear codes over $\mathbb{Z}_{p^n}$, thereby producing new classes of minimal codes over integer residue rings.
\end{itemize}

\medskip

Conceptually, the present work demonstrates that minimality over Galois rings is governed by three interacting algebraic principles:

\begin{itemize}
\item the chain-ring structure and $p$-adic valuation,
\item the Frobenius duality property,
\item the McCoy rank criterion and reduction modulo $p$.
\end{itemize}

Together, these tools allow one to lift minimality results from finite fields to Galois rings while controlling the additional torsion layers introduced by zero divisors.

\medskip

\noindent
\textbf{Open Problems.}

The present work opens several natural directions for further research.

\begin{itemize}
\item[(1)] \emph{Weight distribution.}
Determine the complete weight distribution of the constructed minimal codes over $\mathrm{GR}(p^n,l)$.
Even in the field case, this problem is highly nontrivial and closely connected to exponential sums and character theory.

\item[(2)] \emph{Few-weight and optimal codes.}
Characterize conditions on the function $f$ that yield few-weight minimal codes over Galois rings.
Such codes are of particular interest for secret sharing and authentication schemes.

\item[(3)] \emph{General finite Frobenius rings.}
Extend the present minimality criteria beyond Galois rings to arbitrary finite Frobenius rings.
To what extent does the root-word reduction principle remain valid?

\item[(4)] \emph{Cryptographic applications.}
Investigate the interaction between the constructed minimal codes and cryptographic primitives over rings, including ring-based secret sharing schemes and ring-linear authentication codes.

\item[(5)] \emph{Asymptotic behavior.}
Study the asymptotic performance of these minimal codes as $m$ or $n$ grows, and compare their parameters with classical bounds over finite fields.
\end{itemize}

\medskip

We hope that the structural framework developed here will stimulate further investigation into minimal codes over finite rings and deepen the interplay between coding theory, module theory, and ring-theoretic algebra.

\bibliographystyle{plain}

\end{document}